\definecolor{Blue}{rgb}{0.1,0.1,0.8}
\newtheorem{theorem}{Theorem}[section]
\newtheorem{corollary}{Corollary}[section]
\newtheorem{lemma}[theorem]{Lemma}
\newtheorem{example}{Example}
\newtheorem{definition}{Definition}
\newcommand{\dom}{\mathbf{dom}}
\newcommand{\argmax}{\mathop{\mathrm{arg\,max}}}
\newcommand{\I}{\mathbf{I}}
\newcommand{\x}{\mathbf{x}}
\newcommand{\s}{\mathbf{s}}
\newcommand{\y}{\mathbf{y}}
\newcommand{\out}{\mathbf{o}}
\newcommand{\adom}{\mathbb{Z}^*}
\newcommand{\audom}{\mathbb{Z}^+}
\newcommand{\fdom}{\mathbb{Z}}
\title{A Nearly Instance-optimal Differentially Private Mechanism for Conjunctive Queries}
\author{
Wei Dong
\and
Ke Yi
}
\date{Hong Kong University of Science and Technology}
\begin{document}
\maketitle 

\begin{abstract}
Releasing the result size of conjunctive queries and graph pattern queries under differential privacy (DP) has received considerable attention in the literature, but existing solutions do not offer any optimality guarantees.  We provide the first DP mechanism for this problem with a fairly strong notion of optimality, which can be considered as a natural relaxation of instance-optimality to a constant. 
\end{abstract}

\section{Introduction}
\label{sec:intro}

The bulk of work on \textit{differential privacy (DP)} has been devoted to counting queries \cite{dwork2014algorithmic,vadhan2017complexity}. For a query $q$, let $q(\mathbf{I})$ be the results of evaluating $q$ on database instance $\mathbf{I}$, and let $|q(\mathbf{I})|$ be the cardinality of $q(\mathbf{I})$.  A DP mechanism $\mathcal{M}_q(\I)$ for a counting query $q$ aims to release $|q(\I)|$ masked with noise so as to satisfy the DP requirement.  Formally, $\mathcal{M}_q(\cdot)$ is \textit{$\varepsilon$-differentially private} if 
\begin{equation}
\label{eq:DPdef}
\Pr[\mathcal{M}_q(\mathbf{I}) = y ]\le e^\varepsilon\Pr[\mathcal{M}_q(\mathbf{I}') = y] 
\end{equation}
for any $y$ and any pair of neighboring instances $\mathbf{I},\mathbf{I}'$, i.e., $d(\mathbf{I}, \mathbf{I}')=1$. Here, $d(\mathbf{I}, \mathbf{I'})$ denotes the minimum number of changes to turn $\mathbf{I}$ into $\mathbf{I}'$. 
DP policies may vary depending on what constitutes a ``change''.  In strict DP~\cite{johnson2018towards,kotsogiannis2019privatesql}, a change can be inserting a tuple, deleting a tuple, or substituting a tuple by another, while in a relaxed version, only substitutions are allowed~\cite{kifer2011no,balle2018privacy}. The latter is more relaxed since there are less neighboring instances that must satisfy \eqref{eq:DPdef}.  In particular, neighboring instances must have the same size, so a DP mechanism may use the instance size $N=|\I|$ directly; on the other hand, $N$ must be protected in strict DP. All the positive results in this paper hold for strict DP, while negative results hold even for the relaxed version.  
The privacy parameter $\varepsilon$ is usually taken as a constant, which in practice ranges from $0.1$ to $10$.  

Any (nontrivial) DP mechanism must be probabilistic by definition.  Thus, the most common measure of the utility of $\mathcal{M}_q(\cdot)$ is its expected $\ell_2$-error:
\begin{align*}
\mathrm{Err}(\mathcal{M}_q,\I) & = \sqrt{\mathsf{E}\left[(\mathcal{M}_q(\I)-|q(\I)|)^2\right]} \\
 &= \sqrt{(\mathsf{E}[\mathcal{M}_q(\I)] - |q(\I)|)^2 + \mathsf{Var}[\mathcal{M}_q(\I)]},
\end{align*}
which consists of a bias term and a variance term.  Between two mechanisms with the same $\mathrm{Err}(\mathcal{M}_q, \I)$, the unbiased one is often preferred.  All our DP mechanisms will be unbiased, in which case $\mathrm{Err}(\mathcal{M}_q, \I) = \sqrt{\mathsf{Var}[\mathcal{M}_q(\I)]}$, while our lower bounds hold even for biased mechanisms.

The problem is now well understood for selection queries with various types of selection conditions \cite{hardt2012simple,barak2007privacy,qardaji2013understanding,qardaji2014practical,xiao2010differential,zhang2014towards,day2016publishing}.
However, the case when $q$ is a conjunctive query (CQ) is still largely open.  Existing solutions \cite{narayan2012djoin,proserpio2014calibrating,johnson2018towards} are heuristics without any notion of optimality.  Notably, there have been extensive studies on graph pattern counting queries \cite{rastogi2009relationship,ding2018privacy,blocki2013differentially,kasiviswanathan2013analyzing,karwa2011private,zhang2015private,chen2013recursive}, which are a special case of CQs equipped with inequalities. But none of them has any theoretical guarantee on the utility, either.  

\subsection{Sensitivity-based DP Mechanisms}
\label{sec:introSen}
Let $\I$ be a database instance and $q$ a counting query. A widely used framework for designing an $\mathcal{M}_q$ is to first compute some measure of \textit{sensitivity} $S_q(\I)$, and then add noise drawn from a certain zero-mean distribution calibrated to $S_q(\I)$.   From now on, we may omit the subscript $q$ from $\mathcal{M}_q$ and $S_q$ when the context is clear.   Note that all sensitivity-based mechanisms are unbiased. Various noise distributions (Section~\ref{sec:preliminary} gives more details) have been studied and they can all achieve $\mathrm{Err(\mathcal{M}, \I)} = \Theta(S(\I))$, where the hidden constant depends on $\varepsilon$ and the particular distribution. Thus, the problem essentially boils down to finding the smallest $S(\I)$ that satisfies the DP requirement. When we say that a particular sensitivity measure $S(\cdot)$ is optimal, this actually means that the DP mechanism $\mathcal{M}(\I)$ that adds noise drawn from a certain distribution calibrated to $S(\I)$ achieves the optimal  (in whatever sense) $\mathrm{Err}(\mathcal{M}, \I)$. 

The main reason for the lack of a good theoretical analysis for CQs under DP is that standard notions of optimality are either useless or unattainable.  It is well known that the \textit{global sensitivity} $GS$, which is defined as the maximum difference between the query answers on \textit{any} two neighboring instances, is worst-case optimal.  Note that $GS$ is an instance-independent sensitivity measure.  However, for any nontrivial CQ involving two or more relations, $GS=\infty$ in strict DP, as changing one tuple may affect an unbounded number of query results.  In relaxed DP, where it is safe to use $N$ in designing the mechanism, $GS$ can be bounded, but can still be as high as $O(N^{n-1})$, where $n$ is the number of relations in $q$. 

For queries where $GS$ is unbounded or too high, a natural idea is to use an instance-specific sensitivity measure \cite{nissim2007smooth,zhang2015private,karwa2011private}.  The \textit{local sensitivity} (definition given in Section \ref{sec:dp_mechanisms}) $LS(\mathbf{I})$ is the most natural choice, and it can be much smaller than $GS$ on most real-world instances.  However, Nissim et al.~\cite{nissim2007smooth} point out that using $LS(\mathbf{I})$ to calibrate noise violates DP; instead, they propose \textit{smooth sensitivity} $SS(\mathbf{I})$, which is always higher than $LS(\mathbf{I})$ but lower than $GS$, and show that it satisfies DP.  

How do we quantify the utility of a DP mechanism that adds instance-specific noise?  Ideally, we would like it to be \textit{instance-optimal} \cite{fagin03:_optim}.  More formally, a DP mechanism $\mathcal{M}(\cdot)$ is said to be \textit{$c$-instance-optimal} if 
\[\mathrm{Err}(\mathcal{M},\I)\leq c \cdot \mathrm{Err}(\mathcal{M}',\I),\]
for every instance $\mathbf{I}$ and every DP mechanism $\mathcal{M}'(\cdot)$.  

Unfortunately, as pointed out by Asi and Duchi \cite{asi2020instance}, instance-optimal DP mechanisms do not exist, unless the query is trivial (i.e., it returns the same count on all instances), because for an $\mathbf{I}$, one can always design a trivial DP mechanism $\mathcal{M}'(\cdot) \equiv |q(\I)|$.  It works perfectly on $\I$ but fails miserably on other instances.  Yet, such a trivial $\mathcal{M}'$ rules out instance-optimal mechanisms. 

\subsection{Neighborhood-optimal DP Mechanisms}

To eliminate such a trivial $\mathcal{M}'$, Asi and Duchi \cite{asi2020instance} propose the following natural relaxation of instance-optimality, which requires $\mathcal{M}'$ to not just work well for $\I$, but also in its neighborhood.

\begin{definition}[$(r,c)$-neighborhood optimal DP mechanisms]
\label{def:optimal_dp_mech}
An $\varepsilon$-DP mechanism $\mathcal{M}(\cdot)$ is said to be \textit{$(r,c)$-neighborhood optimal} if for any instance $\I$ and any $\varepsilon$-DP mechanism $\mathcal{M}'(\cdot)$, there exists an instance $\I'$ with $d(\I,\I')\leq r$ such that
\begin{equation}
\label{eq:neighborOPTdef}
\mathrm{Err}(\mathcal{M},\I)\leq c\cdot \mathrm{Err}(\mathcal{M}',\I').
\end{equation}
\end{definition}

Note that neighborhood optimality smoothly interpolates between instance optimality ($r=0$) and worst-case optimality ($r=N$). It is also called \textit{local minimax optimality} in \cite{asi2020instance}, as it minimizes (up to a factor of $c$) the maximum error in the local neighborhood of $\I$.  

We adopt the convention of \textit{data complexity}, i.e., all asymptotic notations suppress dependencies on the query size.  We also take $\varepsilon$ to be a constant, as with most work on differential privacy.  Thus, when we say that an $\varepsilon$-DP mechanism is $(O(1), O(1))$-neighborhood optimal, this means that $r$ and $c$ may depend on $q$ and $\varepsilon$, but are independent of $N$.  As we are most interested in constant-factor approximations, we often use ``$r$-neighborhood optimal'' as a shorthand of ``$(r, O(1))$-neighborhood optimal''.

While more relaxed than instance optimality, neighborhood optimality is still not easy to achieve (for small $r$).  For example, we can show that the {\sc Median} query (i.e., returning the median of $N$ elements in $[0, 1]$, for odd $N$) does not have any $(r,c)$-neighborhood optimal mechanisms for $r\le\lfloor N/2 \rfloor$ and any $c$. To see this, consider $\I=(0,\dots, 0)$ and $\mathcal{M}'(\cdot)\equiv 0$.  Note that all instances in the $r$-neighborhood of $\I$ have output $0$, so $\mathrm{Err}(\mathcal{M}',\I')=0$ for all $\I'$ with $d(\I, \I')\le r$.  Thus, we must set $\mathcal{M}(\I)=0$ to satisfy \eqref{eq:neighborOPTdef}.  We can apply the same argument on $\I''=(1, \dots, 1)$ and conclude that $\mathcal{M}(\I'')=1$. We have thus found two instances on which $\mathcal{M}(\cdot)$ returns different deterministic values, thus $\mathcal{M}$ cannot be DP by a standard argument \cite{dwork2014algorithmic}.

Acute readers would realize that the negative result for {\sc Median} relies on the fact that  there are certain instances (like $(0,\dots,0)$) with a ``flat'' neighborhood, i.e., the query output is the same within the neighborhood. Fortunately, for full CQs, which are the focus of this paper, these flat neighborhoods do not exist.  This is because in any instance $\I$, one can always add/remove a constant number (depending on $q$) of tuples to change $|q(\I)|$. However, this is merely a necessary condition for a query to admit neighborhood-optimal DP mechanisms.  To actually design such a mechanism $\mathcal{M}(\cdot)$, we need an upper bound on $\mathrm{Err}(\mathcal{M},\I)$, as well as a neighborhood lower bound on  $\min_{\mathcal{M}'}\max_{\I': d(\I,\I')\le r} \mathrm{Err}(\mathcal{M}',\I')$.  Note that both the upper bound and the lower bound are instance specific (i.e., functions of $\I$), and the worst-case (over all $\I$) gap between the upper and lower bounds would become the optimality ratio $c$. 

\subsection{DP Mechanisms for Full CQs}

We relate both the upper and the lower bounds to the smooth sensitivity $SS(\cdot)$. On the lower bound side, in Section \ref{sec:optimality_analysis}, we show that $SS(\cdot)$ is an $O(1)$-neighborhood lower bound.  Thus, the smooth sensitivity based mechanism \cite{nissim2007smooth} is $O(1)$-neighborhood optimal. However, $SS(\cdot)$ in general requires exponential time to compute.  In Section \ref{sec:rs}, we extend the \textit{residual sensitivity} $RS(\cdot)$ \cite{dong21:residual} to arbitrary full CQs, and show that it (1) satisfies DP, (2) is a constant-factor upper bound of $SS(\cdot)$, and (3) can be computed in polynomial time. Together with the lower bound, this yields our first main result:

\begin{theorem}
\label{th:fullCQ}
For any full CQ $q$, there is an $\varepsilon$-DP mechanism $\mathcal{M}(\cdot)$ that is $O(1)$-neighborhood optimal. For any instance $\I$, $\mathcal{M}(\I)$ can be computed in $\mathrm{poly}(N)$ time. 
\end{theorem}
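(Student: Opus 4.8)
The plan is to establish Theorem~\ref{th:fullCQ} by combining three ingredients, two of which are the lower bound and upper bound announced in the excerpt. First I would recall the smooth sensitivity framework of Nissim et al.~\cite{nissim2007smooth}: for a parameter $\beta$ (a function of $\varepsilon$), $SS_\beta(\I) = \max_{\I'} LS(\I') \cdot e^{-\beta d(\I,\I')}$, and adding noise calibrated to $SS_\beta(\I)$ yields an $\varepsilon$-DP mechanism with $\mathrm{Err} = \Theta(SS_\beta(\I))$. So the mechanism $\mathcal{M}$ in the theorem is fixed: it is the noise-addition mechanism calibrated to an efficiently computable proxy for $SS_\beta$. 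The work is then split into proving (a) $SS_\beta(\I)$ is an $O(1)$-neighborhood lower bound on the error any $\varepsilon$-DP mechanism incurs near $\I$, and (b) the residual sensitivity $RS(\I)$ extended to full CQs is polynomial-time computable, DP-admissible, and sandwiched as $SS_\beta(\I) \le RS(\I) \le O(1) \cdot SS_\beta(\I)$.

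For the lower bound (a), I would argue that for any instance $\I$ there is a nearby instance $\I'$ (within $O(1)$ steps) and a companion instance $\I''$ neighboring $\I'$ such that $|q(\I') - q(\I'')|$ is comparable to $LS$ evaluated at the point achieving the max in the definition of $SS_\beta(\I)$; roughly, one "transports" the witness of high local sensitivity to within a constant distance of $\I$, paying only a constant factor because $\beta$ is a constant. Then a standard two-point packing argument (as in the {\sc Median} discussion, or the classical coupling/LeCam-style bound for DP) shows that no $\varepsilon$-DP mechanism can have small error simultaneously at $\I'$ and $\I''$, so $\max_{\I': d(\I,\I')\le r}\mathrm{Err}(\mathcal{M}',\I') = \Omega(SS_\beta(\I))$ for $r = O(1)$. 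Crucially this uses the fact, noted in the excerpt, that full CQs have no flat neighborhoods, which is what lets the transport step succeed; I would make this quantitative by bounding how far one must move to realize a prescribed change in $|q(\cdot)|$.

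For the upper bound (b), the main technical content is the extension of residual sensitivity $RS(\cdot)$ from the restricted class in \cite{dong21:residual} to arbitrary full CQs. I would define $RS(\I)$ via a maximum over "residual queries" obtained by deleting subsets of relations/attributes and bounding the join size of the residual on the active domain, combined with the same $e^{-\beta d}$-style smoothing. The three claims to verify are: it upper-bounds $SS_\beta$ (because each term dominating a shifted $LS(\I')$ is captured by some residual-query bound), it is itself $\beta$-smooth hence usable in the Nissim et al.\ mechanism to get $\varepsilon$-DP, and each residual-query bound reduces to counting/degree statistics computable by standard database operators in $\mathrm{poly}(N)$ time (the query size, and hence the number of residual queries, being an $O(1)$ factor under data complexity). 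Chaining, $\mathrm{Err}(\mathcal{M},\I) = \Theta(RS(\I)) = \Theta(SS_\beta(\I)) = \Omega\big(\min_{\mathcal{M}'}\max_{d(\I,\I')\le r}\mathrm{Err}(\mathcal{M}',\I')\big)$, which is exactly $(O(1),O(1))$-neighborhood optimality.

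I expect the genuine obstacle to be the two-sided comparison $SS_\beta \le RS \le O(1)\cdot SS_\beta$ for \emph{arbitrary} full CQs, together with the transport step in the lower bound: both require a clean structural handle on how $LS(\I')$ behaves as $\I'$ ranges over the $d(\I,\cdot)$-metric ball, i.e., a combinatorial bound on the query-answer change induced by $k$ tuple modifications in terms of partial join sizes. The DP-admissibility of the extended $RS$ (smoothness/monotonicity under the generalized definition) and the polynomial-time evaluation are more routine, essentially bookkeeping over the lattice of residual queries; the sandwich inequality is where the real combinatorics of CQs enters, and getting the constant to be truly independent of $N$ (only depending on $q$ and $\varepsilon$) is the crux.
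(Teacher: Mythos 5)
Your proposal follows essentially the same architecture as the paper: the lower bound is proved by showing $SS_\beta(\I)$ is dominated by $LS^{(r-1)}(\I)$ for a constant $r$ (via the polynomial growth bound $\hat{LS}^{(k)}(\I)\leq (4k)^{n_P-1}LS^{(n_P-1)}(\I)$ of Lemma~\ref{lm:hat_lsk_bound}, which in turn relies on the ``transport'' construction of Lemma~\ref{lm:compare_TE_LSn} realizing each $T_{\bar E}(\I)$ as a local sensitivity within $n_P-1$ tuple changes), and the upper bound via the extended residual sensitivity with $RS(\I)=O(SS_\beta(\I))$ (Lemma~\ref{lm:compare_RS_SS}), smoothness (Theorem~\ref{th:deseired_properity}), and polynomial-time evaluation through AJAR/FAQ queries. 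Your assessment that the two-sided sandwich and the quantitative transport step are the genuine obstacles, and that they amount to a combinatorial bound on how $|q(\cdot)|$ changes under $k$ tuple modifications expressed through partial join sizes, matches exactly where the paper concentrates its effort.
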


\subsection{CQs with Predicates}
\label{sec:introcqf}

Next, we consider CQs with predicates.  A \textit{predicate} $P(\mathbf{y})$ is a computable function $P:\mathbf{dom}(\mathbf{y}) \rightarrow\{\mathsf{ True, False}\}$ for a set of variables $\mathbf{y}$.  A valuation of $\mathbf{y}$ is a valid query result only if $P(\mathbf{y})$ evaluates to $\mathsf{True}$. 

Predicates are important for expressing graph pattern counting queries. Suppose we would like to count the number of length-3 paths (no repeated vertices) in a directed graph whose edges are stored in a relation $\mathsf{Edge}$.
However, the CQ $\mathsf{Edge}(x_1,x_2)\Join \mathsf{Edge}(x_2,x_3) \Join \mathsf{Edge}(x_3, x_4)$ would not just count the number of length-3 paths, but also length-1 paths, length-2 paths, and triangles.  We have to equip the CQ with inequalities, i.e., $x_i\ne x_j$ for all $i\ne j$, to exclude these false positives. Another type of predicates are comparisons, i.e., $x_i \le x_j$ or $x_i< x_j$, which are common in queries over spatiotemporal databases. 

In Section~\ref{sec:rs_filter} we show how to modify $RS(\cdot)$ to take these predicates into consideration, while preserving its neighborhood optimality. The idea is conceptually easy: we just materialize each predicate $P(\mathbf{y})$ into a relation $\{ t \in \dom(\mathbf{y})\mid P(t)\}$, and then apply our DP mechanism in Theorem~\ref{th:fullCQ}. However, this poses a computational issue, since this relation can be infinite (assuming infinite domains).  To address this issue, we show that it is actually possible to compute $RS(\cdot)$ without materializing the $P(\mathbf{y})$'s.

\begin{theorem}
\label{th:filters}
For any full CQ $q$  with predicates $P_1(\mathbf{y}_1), \dots,  P_\kappa(\mathbf{y}_\kappa)$, there is an $O(1)$-neighborhood optimal $\varepsilon$-DP mechanism $\mathcal{M}(\cdot)$, if for any $\mathbf{z} \subseteq var(q)$, the satisfiability of $\varphi_1 \wedge \cdots \wedge\varphi_S$ is decidable, where each $\varphi_i$ is $P_j(\mathbf{u}_i)$ for any $j$ and $\mathbf{u}_i$ is $\mathbf{y}_j$ after replacing all variables not in $\mathbf{z}$ by any constants.  Furthermore, $\mathcal{M}(\I)$ can be computed in $\mathrm{poly}(N)$ time if all predicates are inequalities or comparisons. 
\end{theorem}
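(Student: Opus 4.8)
The plan is to obtain the mechanism of Theorem~\ref{th:filters} by running the mechanism of Theorem~\ref{th:fullCQ} on an \emph{augmented} query, and then to show that the sensitivity it requires can be evaluated without ever writing down the infinite predicate relations. For each predicate $P_j(\mathbf{y}_j)$ introduce a \emph{public} relation $R_j = \{\, t \in \dom(\mathbf{y}_j) \mid P_j(t) \,\}$ and set $q^+ = q \Join R_1 \Join \cdots \Join R_\kappa$, a full CQ over the data relations of $q$ together with $R_1, \dots, R_\kappa$. A tuple of a public relation is never inserted, deleted, or substituted, so the neighbor relation, the distance $d(\cdot,\cdot)$, the DP constraint, and hence $(r,c)$-neighborhood optimality for $q^+$ are exactly those for $q$. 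The first step is therefore to check that the analysis behind Theorem~\ref{th:fullCQ} goes through when some relations are declared public: declaring a relation public only removes neighbors, which can only help a sensitivity measure remain DP, and one verifies that the $O(1)$-neighborhood lower bound of Section~\ref{sec:optimality_analysis}, applied to $q^+$, perturbs only data relations and hence still holds in the restricted neighborhood. Granting this, ``add noise calibrated to $RS(q^+, \I)$'' is $O(1)$-neighborhood optimal, and everything reduces to computing $RS(q^+, \I)$.

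For the computation I would unfold the definition of residual sensitivity from Section~\ref{sec:rs}: $RS(q^+, \I)$ is a maximum, over subsets $E$ of the data atoms of $q^+$, of a product of data-relation degrees and the maximum boundary multiplicity $T_{\mathcal{E}^+ \setminus E}(\I)$ of the residual query on $\mathcal{E}^+ \setminus E$. The crucial structural point is that every variable of $q^+$ already occurs in a data atom (each $\mathbf{y}_j \subseteq var(q)$), so a variable that appears in the residual only through public atoms is automatically a boundary variable, while the remaining, summed-over variables of the residual are all pinned to the active domain by residual \emph{data} atoms. Consequently $T_{\mathcal{E}^+\setminus E}(\I)$ is finite, it is bounded (as in the predicate-free case) by a product of data-relation degrees, and it can be written as
\[
T_{\mathcal{E}^+ \setminus E}(\I) \;=\; \max_{\mathbf{t}_D}\ \bigl|\{\, t \in J : \pi(t) = \mathbf{t}_D \text{ and some } \mathbf{t}_P \text{ makes all predicates true} \,\}\bigr|,
\]
where $J$ is the join of the data atoms in $\mathcal{E}^+ \setminus E$, $\mathbf{t}_D$ ranges over the active domain, and $\mathbf{t}_P$ is an assignment to the (query-dependent) set of boundary variables covered only by public atoms. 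The existence of such a $\mathbf{t}_P$ is exactly the decidability question in the statement: fix the variables of the residual data atoms to the constants appearing in $t$, let $\mathbf{z}$ be the $\mathbf{t}_P$-variables, and ask whether the conjunction of the partially instantiated predicates $P_j(\mathbf{u}_i)$ is satisfiable (predicates with no variable in $\mathbf{z}$ are fully instantiated and simply evaluated, since each $P_j$ is computable). Enumerating the tuples of $J$, grouping them by $\mathbf{t}_D$, and calling the satisfiability oracle once per tuple therefore computes every $T_{\mathcal{E}^+ \setminus E}(\I)$, and the outer maximum over $E$ yields $RS(q^+, \I)$.

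For the polynomial-time claim I would specialise to inequalities and comparisons. In data complexity the join $J$ has size $\mathrm{poly}(N)$, so the enumeration is efficient; and for these predicate classes each satisfiability query is the consistency of a bounded system of constraints of the forms $u = v$, $u \ne v$, $u \le v$, $u < v$ among the variables of $\mathbf{z}$ and constants over an infinite totally ordered domain, which can be decided in polynomial time by the standard test for a directly violated constant constraint and for a cycle through a strict edge. Hence every term of $RS(q^+, \I)$, and so $RS(q^+, \I)$ itself, is computed in $\mathrm{poly}(N)$ time, which together with the optimality from the first step gives Theorem~\ref{th:filters}; the construction in Section~\ref{sec:rs_filter} is the rigorous version of this argument.

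The step I expect to be the main obstacle is the second one: making precise that the residual-query computation ``factors'' through the public relations exactly into satisfiability instances of the prescribed syntactic shape --- in particular, proving that a residual query built on an infinite public relation still has finite, degree-bounded maximum boundary multiplicity because the offending variables are forced into the boundary, and that testing each enumerated tuple's extension in isolation loses nothing. The reduction to $q^+$, the ``public relations only remove neighbors'' remark, and the polynomial-time consistency test for order constraints are comparatively routine.
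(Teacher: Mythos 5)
You have the right framing for the reduction and the optimality claim: treating each predicate $P_j(\y_j)$ as an infinite \emph{public} relation turns the CQ with predicates into a full CQ, and since public relations never contribute to the neighbor relation, the $O(1)$-neighborhood optimality of Theorem~\ref{th:fullCQ} carries over. You also correctly observe that variables appearing in the residual only through predicate atoms are forced into the boundary, so each $T_{\bar{E}}(\I)$ remains finite. All of this matches the paper's Section~\ref{sec:rs_filter}.

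The gap is in your formula for $T_{\bar{E}}(\I)$ and the per-tuple oracle call that follows from it. You write
\[
T_{\bar{E}}(\I) \;=\; \max_{\mathbf{t}_D}\ \bigl|\{\, t \in J : \pi(t) = \mathbf{t}_D \text{ and some } \mathbf{t}_P \text{ makes all predicates true} \,\}\bigr|,
\]
with the existential over $\mathbf{t}_P$ sitting \emph{inside} the set comprehension. But $T_{\bar{E}}(\I)=\max_{t\in\dom(\partial q_{\bar{E}})}|q_{\bar{E}}(\I)\Join t|$ maximizes over a \emph{single} boundary tuple $t=(t_1,t_2)$, so the same $\mathbf{t}_P=t_2$ must satisfy the predicates simultaneously for every tuple counted; the maximum over $\mathbf{t}_P$ must be outside the cardinality. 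A toy example: $q=R(x)\Join S(y)$ with the single comparison predicate $x=y$, $\bar{E}=\{R\}$, $I(R)=\{1,2\}$, gives $T_{\bar{E}}(\I)=1$, yet your formula returns $2$ because each $x$ individually admits a matching $y$. ``Calling the satisfiability oracle once per tuple'' therefore overcounts and the algorithm is incorrect. A tell-tale sign is that your algorithm would run in $\mathrm{poly}(N)$ time for arbitrary computable predicates, whereas the theorem only asserts computability in general and reserves $\mathrm{poly}(N)$ for inequalities and comparisons.

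The paper handles this differently. For general predicates it flips the question around: for each subset $B\subseteq q_{\bar{E}}^{\circ}(\I)$ it asks whether a \emph{common} $(t_1,t_2)$ makes every tuple of $B$ pass, i.e., whether the conjunction $\bigwedge_{t_B\in B,\,j\in[\kappa]}P_j(\y_j(t_B))$ with free variables $\partial q_{\bar{E}}^{2}$ is satisfiable, and returns the largest $|B|$ for which it is; this is exactly the decidability hypothesis of the theorem but takes $2^{\mathrm{poly}(N)}$ time. For inequalities and comparisons, the $\mathrm{poly}(N)$ bound comes from a genuinely different idea: the active domain is enlarged with $2\kappa$ fresh representatives between every pair of consecutive active values (Lemma~\ref{lm:filter_inequal_compare} shows the optimal $t_2$ can be chosen from this enlarged domain of size $O(N)$), the predicates are materialized over this finite domain, and the resulting $T_{\bar{E}}(\I)$ is evaluated as an AJAR/FAQ query. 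Your per-constraint consistency test for order constraints is fine as a subroutine, but it does not by itself address the need to fix a common $t_2$ across the counted tuples.
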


\subsection{Non-full CQs}

To complete the picture, we finally study non-full CQs in Section~\ref{sec:non_full_cq}.  Prior work on non-full CQs simply ignores the projection, and uses the sensitivity of the   full CQ to calibrate noise.  We show how to extend $RS(\cdot)$ to handle the projection more effectively so as to reduce the noise.  Unfortunately, our lower bound no longer holds for non-full CQs, hence losing neighborhood optimality. However, we show that this is unavoidable.  In particular, even for the simple non-full query $\pi_{x_2}(R_1(x_1)\Join R_2(x_1, x_2))$, we show that it does not admit any $o(\sqrt{N})$-neighborhood optimal DP mechanisms. 

\subsection{Related Work}
\label{sec:related_work}

The notion of neighborhood optimality has been recently proposed in the machine learning community \cite{asi2020instance}.  However, their focus is on high-dimensional queries whose output domain must be continuous.  In particular, their lower bound does not hold for any query that returns discrete outputs, in particular, counting queries.  Our lower bound (Section~\ref{sec:general_lower_bound}) holds for arbitrary 1-dimensional queries (with either continuous or discrete outputs), and is also better than their lower bound.  On the upper bound side, their algorithm in general requires exponential time to compute, although they show how it can be made to run in polynomial time for some machine learning problems like mean-estimation and PCA.  

Our work builds upon smooth sensitivity $SS(\cdot)$ \cite{nissim2007smooth} and residual sensitivity $RS(\cdot)$ \cite{dong21:residual}.  Although $SS(\cdot)$ has been shown to preserve DP in \cite{nissim2007smooth}, two issues remain: whether it is optimal (in any sense) and whether it can be computed in polynomial time.  We provide positive answers to both questions in the context of CQs.  While our optimality of $SS(\cdot)$ is the first of its kind, the computational issue has been addressed for some specific queries, such as {\sc Median} (but $SS(\cdot)$ is not neighborhood-optimal for {\sc Median}!), triangle counting \cite{nissim2007smooth}, and $t$-star counting \cite{karwa2011private}, the latter two being special cases of CQs (with inequalities).  However, it is still an open problem whether $SS(\cdot)$ can be computed in polynomial time for an arbitrary CQ in terms of data complexity.  To dodge the computational difficulty, elastic sensitivity $ES(\cdot)$ \cite{johnson2018towards} was introduced as a replacement. However, in Section~\ref{sec:es} we show that $ES(\cdot)$ is not even worst-case optimal. Residual sensitivity $RS(\cdot)$ was proposed in \cite{dong21:residual}, and it has been shown to be a constant-factor approximation of $SS(\cdot)$ while being polynomially computable. However, $RS(\cdot)$ as defined in \cite{dong21:residual} does not allow self-joins, which are challenging under DP, since changing one tuple corresponds to changing multiple logical relations when there are self-joins.  This can be tricky for both the privacy guarantee and the mechanism's optimality.  
Second, no lower bound was provided in \cite{dong21:residual}, so the optimality of $RS(\cdot)$ is not known until this paper.  Thirdly, we extend $RS(\cdot)$ to CQs with predicates and non-full CQs.

The DP policy considered in this paper treats a change as the insertion, deletion, or substitution of a tuple, hence also called \textit{tuple-DP}.  In \textit{user-DP}~\cite{kotsogiannis2019privatesql,tao2020computing}, there is a designated \textit{primary private relation} $R_P \in \mathbf{R}$ that contains all the users, while tuples in other relations with a foreign key (FK) referencing the primary key (PK) of a tuple (user) $t_P \in \I(R_P)$, directly or indirectly, are considered as data belonging to $t_P$.  Two instances $\I$ and $\I'$ are considered neighbors if $\I'$ can be obtained from $\I$ by adding/deleting/substituting a set of tuples, all of which reference the same user $t_P \in \I(R_P)$. Applying user-DP and tuple-DP on the graph schema $\mathbf{R} = \{\mathtt{Node(ID)}$, $\mathtt{Edge(src,dst)}\}$, where $\mathtt{src}$ and $\mathtt{dst}$ are both FKs referencing $\mathtt{ID}$, yields the two well-known DP policies for graph data: node-DP \cite{kasiviswanathan2013analyzing} and edge-DP  \cite{karwa2011private}.  By designating $\mathtt{Node}$ as the primary private relation, user-DP on $\mathbf{R}$ becomes node-DP; by setting $\mathbf{R}_P = \{\mathtt{Edge}\}$, tuple-DP becomes edge-DP (note that in tuple-DP, we ignore the FK constraints, which effectively means that $\mathtt{Node}$ is irrelevant). User-DP is more general than tuple-DP, hence potentially incurring a higher privacy cost.
Very recently, a logarithmic-neighborhood optimal mechanism has been proposed for CQs under user-DP \cite{dong22:R2T}.  Meanwhile, it has also been shown that $O(1)$-neighborhood optimality is not achievable under user-DP \cite{dong2021universal} even for the simple query $R_1(x_1) \Join R_2(x_1, x_2)$ where $R_1$ is the primary private relation and $R_2.x_1$ is an FK referencing $R_1.x_1$.  Therefore, combined with the result obtained in this paper, we now have a separation for CQs under tuple-DP and user-DP.

\section{Preliminaries}
\label{sec:preliminary}

\subsection{Conjunctive Queries}
\label{sec:conjunctive_query}

Let $\mathbf{R}$ be a database schema. A \textit{full conjunctive query (CQ)} has the form
\[ q :=  R_1(\x_1) \Join \cdots \Join R_n(\x_n),
\]
where $R_1,\dots,R_n$ are relation names in $\mathbf{R}$, and each $\mathbf{x}_i$ is a set of $arity(R_i)$ variables/attributes\footnote{If $\mathbf{x}_i$ has constants, we can preprocess $R_i(\mathbf{x}_i)$ in linear time so that only tuples that match these constants remain. }. We call each $R_i(\mathbf{x}_i)$ an \textit{atom}.  We use $[n]$ to denote $\{1,\dots,n\}$, and $[i,j] =\{i,\dots,j\}$. For any $E\subseteq[n]$, $\Bar{E}=[n]-E$.  For a variable $x$, we use $\dom(x)$ to denote the domain of $x$.  For $\mathbf{x} = (x_1, \dots, x_k)$, let $\dom(\mathbf{x}) = \dom(x_1) \times \cdots \times \dom(x_k)$.  Let $var(q)$ denote the set of variables in $q$.

When considering self-joins, there can be repeats, i.e., $R_i=R_j$. In this case, we assume $\mathbf{x}_i \ne \mathbf{x}_j$; otherwise one of the two atoms is redundant. 
Let $\mathbf{I}$ be a database instance over $\mathbf{R}$.  For a relation name $R\in \mathbf{R}$, let $\mathbf{I}(R)$ denote the relation instance of $R$.  We use $I_i$ as a shorthand for $\mathbf{I}(R_i)$.  $\I$ and the $I_i$'s are called \textit{physical instances}.  On the other hand, we use $I_i(\mathbf{x}_i)$ to denote $I_i$ after renaming its attributes to $\mathbf{x}_i$.  The $I_i(\mathbf{x}_i)$'s are called the \textit{logical instances}.  Note that if $R_i$ and $R_j$ are the same relation name, then $I_i=I_j$, but $I_i(\mathbf{x}_i) \ne I_j(\mathbf{x}_j)$, as $I_i(\mathbf{x}_i)$ and $I_j(\mathbf{x}_j)$ have different attributes. For a self-join-free query, we may without loss of generality assume that $\mathbf{x}_i=sort(R_i)$ for all $i\in [n]$ so the logical instances are the same as the physical instances, but for queries with self-joins, one physical relation instance may correspond to multiple logical relation instances.  This distinction makes the problem more difficult under DP, as the distance between two logical instances may be larger than between the physical instances.

By rearranging the atoms, we may assume that all appearances of the same relation name are consecutive. 
Suppose $m$ distinct relation names are mentioned in $q$, and for $i=1,\dots,m$, $R_{l_i}, \dots, R_{l_{i+1}-1}$ are the same relation name (set $l_{m+1} = n+1$). Let $D_i=[l_i,l_{i+1}-1]$ and $n_i=l_{i+1}-l_{i}$, which is the number of copies of $R_{l_i}$ mentioned in $q$.

\subsection{Differential Privacy in Relational Databases}
\label{sec:dp}

Differential privacy is already defined in \eqref{eq:DPdef}.  This notion can be applied to any problem by properly defining the distance function $d(\cdot,\cdot)$. As a database may contain both public and private relations, we use a more refined definition of $d(\cdot, \cdot)$.  For two relation instances over the same relation name $I, I'$, we use $d(I,I')$ to denote the distance between $I$ and $I'$, which is the minimum number of steps to change $I$ into $I'$.
Note that $d(I_j,I_j')$ is the same for all $j\in D_i$, for any $i \in [m]$, as they are the same physical relation instance. 

We use $P^m\subseteq[m]$ to denote the set of private physical relations, while $P^n=\cup_{i\in P^m}D_i$ is the set of private logical relations.  Let $m_P=|P^m|$, $n_P=|P^n|$. 
Two database instances can only differ in the private relations, i.e., $d(I_j, I'_j)=0$ for every $j\in \bar{P}^n$. 
In the DP definition \eqref{eq:DPdef}, we must use the distance between the physical database instances, i.e., $d(\mathbf{I},\mathbf{I}') = \sum_{i\in [m]} d(I_{l_i},I_{l_i}')$.  Note that the distance between the logical instances, namely $\sum_{i\in [n]} d(I_{i}(\mathbf{x}_i),I_{i}'(\mathbf{x}_i))$, can be larger than $d(\mathbf{I},\mathbf{I}')$ when self-joins are present. 

A simple but important observation is that a query with self-joins on instance $\{I_i\}_i$ can be considered as a query without self-joins on instance $\{I_i(\mathbf{x}_i)\}_i$.  This allows us to reuse some of the technical results from \cite{dong21:residual} on self-join-free queries.  However, the critical difference is that this conversion enlarges the distance, while the DP guarantee must hold with respect to the distance on the original, physical instance.

\subsection{Sensitivity-based DP Mechanisms}
\label{sec:dp_mechanisms}

As mentioned in Section~\ref{sec:intro}, the most common technique for designing DP mechanisms is to first compute some measure of sensitivity $S(\cdot)$ of the query, and then add noise drawn from a certain distribution calibrated to $S(\cdot)$. 

The \textit{local sensitivity} of $q$ at instance $\I$ is how much $|q(\I)|$ can change if we change one tuple in $\I$, i.e.,
\begin{equation}
\label{eq:ls}
LS(\mathbf{I})=\max_{\mathbf{I}', d(\mathbf{I}, \mathbf{I}')=1}\left| |q(\mathbf{I})|-|q(\mathbf{I}')| \right|.
\end{equation}

However, using $LS(\cdot)$ directly breaches privacy \cite{nissim2007smooth}, because two neighboring instances may have very different $LS(\cdot)$.  On the other hand, it is safe to use the \textit{global sensitivity}, which is the maximum $LS(\cdot)$ over all instances:
\begin{equation*}
\label{eq:gs}
GS=\max_{\mathbf{I}}LS(\mathbf{I}).
\end{equation*}
It is well known that one can achieve $\varepsilon$-DP with $\mathrm{Err}(\mathcal{M}, \I) =O(GS)$ by drawing noise from the Laplace distribution calibrated to $GS/\varepsilon$~\cite{dwork2006calibrating}. However, this loses the utility: the $GS$ for CQs can be as large as $O(N^{n_P-1})$ under relaxed DP, and $\infty$ under strict DP.

To address this issue, Nissim et al. \cite{nissim2007smooth} propose the \textit{smooth sensitivity} $SS_\beta(\cdot)$. To define $SS_\beta(\cdot)$, we first define the \textit{local sensitivity at distance $k$}:
\begin{equation}
\label{eq:lsk}
LS^{(k)}(\mathbf{I})=\max_{\mathbf{I}', d(\mathbf{I}, \mathbf{I}')\leq k}LS(\mathbf{I}').
\end{equation}
Note that for CQs, (\ref{eq:lsk}) can be rewritten as
\begin{equation}
\label{eq:lsk_cq}
LS^{(k)}(\mathbf{I})=\max_{\mathbf{I}', d(\mathbf{I}, \mathbf{I}')= k}LS(\mathbf{I}'),
\end{equation}
because if $d(\mathbf{I}, \mathbf{I}')<k$, we can always insert dummy tuples so that $d(\mathbf{I}, \mathbf{I}')=k$.
Then for a parameter $\beta$, the smooth sensitivity is \begin{equation}
\label{eq:ss}
SS_\beta(\mathbf{I})=\max_{k\geq0} e^{-\beta k} LS^{(k)}(\mathbf{I}).
\end{equation}

It is clear that $SS_\beta(\I)\le GS$.  More importantly, $SS_\beta(\cdot)$ is ``smooth'', in the sense that $SS_\beta(\I)\leq e^{\beta} SS_\beta(\I')$ for any two neighboring instances $\I$ and $\I'$. Due to its smoothness, it has been shown that setting $\beta= \varepsilon/10$ and drawing noise calibrated to $SS_\beta(\I)/\beta$ from a general Cauchy distribution, which has pdf $h(z)\propto \frac{1}{1+|z|^4}$, achieves $\varepsilon$-DP with $\mathrm{Err}(\mathcal{M}, \I) = \frac{SS_\beta(\I)}{\beta} = O(SS_\beta(\I))$.  The choice of the constant $10$ is arbitrary, which affects the tail properties of the noise distribution, but not the variance (asymptotically).  In the rest of the paper, we omit the subscript $\beta$ from $SS_\beta(\cdot)$ for brevity.

However, computing $SS(\I)$ is very costly. The definition \eqref{eq:ss} does not yield an efficient (or even computable) algorithm. To address this issue, Nissim et al. \cite{nissim2007smooth} show that any smooth upper bound of $SS(\cdot)$ can be used.  Specifically, let
\begin{equation}
\label{eq:hat_ss}
\hat{SS}(\I) = \max_{k\geq 0}e^{-\beta k}\hat{LS}^{(k)}(\I).
\end{equation}
It has been shown that as long as $\hat{LS}^{(k)}(\cdot)$ is an upper bound of ${LS}^{(k)}(\cdot)$ and satisfies the smoothness property, i.e., for any neighbors $\I$ and $\I'$,
\begin{equation}
\label{eq:des_property}
\hat{LS}^{(k)}(\I)\leq \hat{LS}^{(k+1)}(\I'),
\end{equation}
then one can use $\hat{SS}(\cdot)$ in place of $SS(\cdot)$ to calibrate noise while preserving $\varepsilon$-DP.  The error becomes $\mathrm{Err}(\mathcal{M}, \I) = \frac{10 \hat{SS}(\I)}{\varepsilon}$ accordingly.

\section{Residual Sensitivity for Full CQ\lowercase{s}}
\label{sec:rs}

\subsection{Residual Queries}
\label{sec:rs_self_join_free}

The residual sensitivity $RS(\cdot)$ can be considered as an instantiation of $\hat{SS}(\cdot)$ for CQs, i.e., it is defined as in \eqref{eq:hat_ss} with a particular choice of $\hat{LS}^{(k)}(\cdot)$ that has the smoothness property (\ref{eq:des_property}). Our $\hat{LS}^{(k)}(\cdot)$ is based on the \textit{residual queries} of a given CQ $q$.

A \textit{residual query} of $q$ on a subset $E\subseteq[n]$ of relations is $q_E := \,\Join_{i\in  E} R_i(\mathbf{x}_i)$.  Its \textit{boundary}, denoted $\partial q_E$, is the set of attributes that belong to atoms both in and out of $E$, i.e., $\partial q_E = \{ x \mid x\in \mathbf{x}_i \cap \mathbf{x}_j,  i \in E, j \in \bar{E} \}$.  For a \textit{residual} query $q_E$ on database instance $\I$, its \textit{maximum multiplicity over the boundary} is defined as
\begin{equation*}
\label{eq:TE}
T_{E}(\mathbf{I}) = \max_{t \in \dom(\partial q_E)}  |q_E(\mathbf{I}) \Join t |.
\end{equation*}
A \textit{witness tuple} of the maximum multiplicity over the boundary of $q_E(\I)$ is 
\begin{equation}
\label{eq:te}
t_E(\mathbf{I})=\argmax_{t \in \dom(\partial q_E)}|q_E(\mathbf{I}) \Join t |.
\end{equation}
Per convention, when $E=\emptyset$, we set $q_E(\mathbf{I})=\{\langle \rangle\}$, where $\langle \rangle$ denotes the empty tuple, so $T_{\emptyset}(\I)=1$. 

We note that $T_E(\I)$ is exactly an AJAR/FAQ query \cite{joglekar2016ajar,abo2016faq}, where the annotations of all tuples are $1$ with two semiring aggregations $+$ and $\max$.  The $+$ aggregation is done group-by $\partial q$, followed by a $\max$ over all the $+$ aggregates. Such a query can be computed in $O(N^w)$ time, where $w$ is its AJAR/FAQ width, a constant depending on the query only. 
%In particular, if $q_E$ is \textit{free-connex} with respect to $\partial E$, then $w=1$; otherwise $w$ is a constant depending on $q$.  Please see \cite{joglekar2016ajar,abo2016faq} for the exact definitions of these terms.

To develop $RS(\cdot)$, we need some properties of $T_E(\cdot)$. We start with a simple one:

\begin{lemma}
\label{lm:simple_lemma}
Given any CQ $q$, $E\subseteq[n]$, and two database instances $\I,\I'$, if $I_i(\x_i) \subseteq I'_i(\x_i)$ for all $i\in E$, then $T_E(\mathbf{I})\le T_E(\mathbf{I}')$. In particular, if $I_i(\x_i) = I'_i(\x_i)$ for all $i\in E$, then $T_E(\mathbf{I})= T_E(\mathbf{I}')$.
\end{lemma}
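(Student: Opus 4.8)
The plan is to reduce everything to two elementary monotonicity facts about the natural join and the semijoin, and then take maxima. First I would observe that the residual query answer $q_E(\I) = \Join_{i \in E} R_i(\x_i)$ evaluated on $\I$ depends only on the logical instances $I_i(\x_i)$ for $i \in E$; it does not see any relation with index outside $E$, nor the physical instances directly. Hence the hypothesis $I_i(\x_i) \subseteq I'_i(\x_i)$ for all $i \in E$ is exactly a statement about the inputs to the join computing $q_E$.

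Next I would invoke monotonicity of the natural join: if the relations being joined only grow, the join only grows. Concretely, since $I_i(\x_i)\subseteq I_i'(\x_i)$ for every $i\in E$, an easy induction on $|E|$ (or a direct argument: every tuple of $\Join_{i\in E} I_i(\x_i)$ is the gluing of compatible tuples $t_i\in I_i(\x_i)\subseteq I_i'(\x_i)$, hence also lies in $\Join_{i\in E} I_i'(\x_i)$) gives $q_E(\I)\subseteq q_E(\I')$. Then for any fixed $t\in\dom(\partial q_E)$, the semijoin $q_E(\I)\Join t$ — i.e., the subset of $q_E(\I)$ whose projection onto $\partial q_E$ agrees with $t$ — is contained in $q_E(\I')\Join t$ by the same monotonicity, so $|q_E(\I)\Join t|\le |q_E(\I')\Join t|$. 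Taking the maximum over $t\in\dom(\partial q_E)$ on both sides yields $T_E(\I)\le T_E(\I')$. (Here I would note in passing that the maximum is well-defined even though $\dom(\partial q_E)$ may be infinite, because $|q_E(\I)\Join t|$ is nonzero only for the finitely many $t$ that arise as $\pi_{\partial q_E}$ of some tuple in $q_E(\I)$.)

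Finally, for the ``in particular'' clause, $I_i(\x_i)=I_i'(\x_i)$ for all $i\in E$ gives both $I_i(\x_i)\subseteq I_i'(\x_i)$ and $I_i'(\x_i)\subseteq I_i(\x_i)$, so the inequality just proved applies in both directions and forces $T_E(\I)=T_E(\I')$.

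I do not expect any real obstacle here; the only points requiring a little care are stating precisely that $q_E$ is a function of just the logical instances indexed by $E$ (so that the self-join subtleties emphasized earlier in the paper do not interfere — we work entirely at the level of logical instances), and confirming that the argmax/max in the definition of $T_E$ is attained. Both are routine. If the authors prefer, the join-monotonicity induction can be replaced by a one-line tuple-chasing argument as indicated above.
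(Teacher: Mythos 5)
Your proof is correct, and it is precisely the monotonicity-of-join argument the paper implicitly has in mind: the lemma is stated without proof (``We start with a simple one''), so there is nothing to diverge from. Your tuple-chasing justification that $\Join_{i\in E} I_i(\x_i)\subseteq \Join_{i\in E} I'_i(\x_i)$, followed by monotonicity of the per-$t$ restriction and then taking the max over $t$, is exactly the routine reasoning being elided, and your side remark that the max is effectively over the finitely many $t\in\pi_{\partial q_E}q_E(\I)$ is a nice clarification of why $T_E$ is well-defined.
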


As will become clear, the sensitivity of $q$ depends on the sensitivity of $T_E(\cdot)$.  The following technical lemma provides such an upper bound:

\begin{lemma}
\label{lm:sensitivity_TE}
For any CQ $q$, $E\subseteq[n]$, and two instances $\mathbf{I}, \mathbf{I}'$,
\begin{equation*}
|T_E(\mathbf{I})-T_E(\mathbf{I}')|\leq \sum_{E'\subseteq E,E'\neq \emptyset} \left( T_{E-E'}(\mathbf{I}) \prod_{i\in E'} d(I_i(\x_i), I'_i(\x_i))\right).
\end{equation*}
\end{lemma}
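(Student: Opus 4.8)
The plan is to bound $|T_E(\mathbf{I}) - T_E(\mathbf{I}')|$ by interpolating between $\mathbf{I}$ and $\mathbf{I}'$ one logical relation at a time, and then to control each single-relation change by decomposing the join result according to which atoms use the inserted/deleted tuples. More concretely, I would first reduce to the case of a \emph{single} tuple change in a single logical relation $I_i(\mathbf{x}_i)$, since a general change from $I_i(\mathbf{x}_i)$ to $I'_i(\mathbf{x}_i)$ is a sequence of $d(I_i(\mathbf{x}_i), I'_i(\mathbf{x}_i))$ such atomic steps (insertions/deletions/substitutions, the last being an insertion plus a deletion). Then I would reassemble these atomic bounds into a product over $E'$: the term $\prod_{i\in E'} d(I_i(\mathbf{x}_i), I'_i(\mathbf{x}_i))$ should arise because a query result that uses a ``new'' tuple in \emph{each} atom of $E'$ can be formed in at most $\prod_{i \in E'} d(I_i(\mathbf{x}_i), I'_i(\mathbf{x}_i))$ ways from the differing tuples, while the remaining atoms $E - E'$ contribute at most $T_{E-E'}(\cdot)$ (this is where Lemma~\ref{lm:simple_lemma} gets used, to replace $T_{E-E'}$ evaluated on a mixed instance by $T_{E-E'}(\mathbf{I})$ after noting the relevant relations agree or are contained).

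In more detail, the single-step case: suppose $\mathbf{I}$ and $\mathbf{I}'$ differ only in that one tuple $s$ is inserted into atom $i_0 \in E$ (the deletion case is symmetric, and by symmetry of the absolute value we may always assume the larger instance is on the right). Fix the witness $t = t_E(\mathbf{I}')$ achieving $T_E(\mathbf{I}')$ (for the other direction, fix $t_E(\mathbf{I})$). Every tuple of $q_E(\mathbf{I}') \Join t$ either avoids $s$ in atom $i_0$ — these are exactly the tuples of $q_{E}$ built from the common relations, bounded by $T_E(\mathbf{I}) $ evaluated at that same $t$, hence by $T_E(\mathbf{I})$ — or uses $s$ in atom $i_0$, in which case it is determined by a tuple of $q_{E - \{i_0\}}$ joined with the fixed $s$ and $t$, so the number of such results is at most $T_{E - \{i_0\}}(\mathbf{I})$ (again via Lemma~\ref{lm:simple_lemma}, since the relations indexed by $E - \{i_0\}$ are unchanged). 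This gives $T_E(\mathbf{I}') \le T_E(\mathbf{I}) + T_{E-\{i_0\}}(\mathbf{I})$ and, symmetrically, $T_E(\mathbf{I}) \le T_E(\mathbf{I}') + T_{E - \{i_0\}}(\mathbf{I}')$, so each atomic step changes $T_E$ by at most $\max(T_{E-\{i_0\}}(\mathbf{I}), T_{E-\{i_0\}}(\mathbf{I}'))$.

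To get the claimed product form I would not simply sum these atomic bounds (that would overcount); instead I would argue more carefully by expanding $q_E(\mathbf{I}') \Join t$ over the subset $E' \subseteq E$ of atoms on which the chosen result tuple ``uses a tuple not in $I_i(\mathbf{x}_i)$.'' For a fixed $E'$, such a result tuple is obtained by choosing, for each $i \in E'$, one of the at most $d(I_i(\mathbf{x}_i), I'_i(\mathbf{x}_i))$ differing tuples, and then extending over the atoms in $E - E'$ using only unchanged tuples while respecting the boundary value $t$ together with the values already pinned down by the chosen tuples; the number of such extensions is at most $T_{E - E'}(\mathbf{I})$ by definition of $T$ (and Lemma~\ref{lm:simple_lemma} to pass from the mixed instance to $\mathbf{I}$). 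Summing over all nonempty $E' \subseteq E$ and absorbing the $E' = \emptyset$ term into $T_E(\mathbf{I})$ on the other side yields exactly the stated inequality.

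The main obstacle I anticipate is making the decomposition ``over which atoms use a changed tuple'' precise and genuinely non-overcounting when self-joins are present: a single physical tuple may appear through several logical atoms simultaneously, so ``the set of changed logical relations'' and the counting of how a result tuple is pinned down must be stated in terms of the logical instances $I_i(\mathbf{x}_i)$, and one must check that the boundary-$t$ constraint is correctly threaded so that the $T_{E-E'}$ factor really is an upper bound on the extensions rather than an underestimate. I would handle this by working entirely in the logical-instance view (as the excerpt suggests is legitimate), carefully defining, for a result tuple of $q_E$, the unique maximal $E'$ of atoms realized by differing tuples, and verifying the product bound for that $E'$ — everything else is routine.
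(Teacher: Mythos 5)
Your proposal is correct, but it takes a genuinely different route than the paper. The paper's proof is a three-line reduction: since $T_E(\cdot)$ depends only on the logical instances $I_i(\x_i)$, one has $T_E(\{I_i\}_i)=T_E(\{I_i(\x_i)\}_i)$, and the self-join-free version of the inequality — already proved in \cite{dong21:residual} — then applies verbatim to $\{I_i(\x_i)\}_i$ and $\{I'_i(\x_i)\}_i$; the paper does no further combinatorial work, the heavy lifting all lives in the citation. You instead re-derive the self-join-free inequality from scratch. Your plan — partition $q_E(\I')\Join t$ by the (unique maximal) set $E'\subseteq E$ of atoms realized by a tuple outside $I_i(\x_i)$, bound the choices in $E'$ by $\prod_{i\in E'}d(I_i(\x_i),I'_i(\x_i))$, bound the extensions over $E-E'$ by $T_{E-E'}(\I)$ via Lemma~\ref{lm:simple_lemma}, and fold the $E'=\emptyset$ slice into $T_E(\I)$ — is sound, and it handles both one-sided bounds: for $T_E(\I)-T_E(\I')$, the $E-E'$ extensions still use tuples in $I_i(\x_i)\cap I'_i(\x_i)\subseteq I_i(\x_i)$, so $T_{E-E'}(\I)$ is still the right factor, consistent with the asymmetric right-hand side of the lemma. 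Your observation that naively summing atomic single-tuple bounds would not directly produce the product structure is also correct; one would need an induction that effectively recovers the same $E'$-decomposition. Where the paper's reduction buys brevity by delegating to \cite{dong21:residual}, your direct argument buys self-containedness, and is presumably close to what the proof in that reference must look like.
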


\begin{proof}
The self-join-free version of this lemma was proved in \cite{dong21:residual}.  Recall that a query with self-joins can be considered as a self-join-free query on the logical instance. This means that $T_E(\{I_i\}_i) =T_E(\{I_i(\mathbf{x}_i)\}_i) $.   We now have
\begin{align*}
|T_E(\mathbf{I})-T_E(\mathbf{I}')|=&|{T}_E(\{I_i(\mathbf{x}_i)\}_i)-{T}_E(\{I'_i(\mathbf{x}_i)\}_i)|
\\
\leq & \sum_{E'\subseteq E,E'\neq \emptyset} \left( T_{E-E'}(\{I_i(\mathbf{x}_i)\}_i) \prod_{i\in E'} d({I}_i(\x_i), {I}'_i(\x_i))\right)
\\
=&\sum_{E'\subseteq E,E'\neq \emptyset} \left( T_{E-E'}(\mathbf{I}) \prod_{i\in E'} d(I_i(\x_i), I'_i(\x_i))\right),
\end{align*}
where the inequality invokes the self-join-free version of the lemma, because on $\{I_i(\mathbf{x}_i)\}_i$, the query is self-join-free.
\end{proof}

\subsection{Local Sensitivity of CQs}
\label{sec:ls}

We first consider the local sensitivity $LS(\cdot)$.  For a CQ without self-joins, its local sensitivity is precisely characterized by the $T_E(\cdot)$'s. 

\newcommand{\overbar}[1]{\mkern 1.5mu\overline{\mkern-5mu#1\mkern-5mu}\mkern 1.5mu}

\begin{lemma}[\cite{dong21:residual}]
\label{lm:ls_self_join_free_query}
For a CQ without self-joins, 
\[LS(\I) =  \max_{i\in P^n}T_{\overbar{\{i\}}}(\I).
\]
\end{lemma}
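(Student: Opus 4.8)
The plan is to prove the two directions of the equality separately, using the $T_E(\cdot)$ machinery already set up.

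\emph{Upper bound.} First I would observe that, since $q$ is self-join-free, any neighbor $\I'$ of $\I$ differs from $\I$ in exactly one relation $I_i$, which must be private, so $i\in P^n$, and $I_j(\x_j)=I_j'(\x_j)$ for all $j\ne i$. A single change is an insertion, a deletion, or a substitution, and a substitution is a deletion of some tuple $t$ followed by an insertion of some tuple $t'$. The key identity is that for any $t\in\dom(\x_i)$, the number of results of the full query that use $t$ in atom $i$ equals $|q_{\overbar{\{i\}}}(\cdot)\Join t| = |q_{\overbar{\{i\}}}(\cdot)\Join \pi_{\partial q_{\overbar{\{i\}}}}(t)|$, because atom $i$ joins with the rest of $q$ only through the boundary attributes $\partial q_{\overbar{\{i\}}} = \x_i\cap\bigcup_{j\ne i}\x_j$, and (no self-joins) atom $i$ contributes $t$ in only one way. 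Hence $\big||q(\I)|-|q(\I')|\big|$ is at most the larger of the number of results in $\I$ destroyed by the deletion, which is $\le T_{\overbar{\{i\}}}(\I)$, and the number of results in $\I'$ created by the insertion, which is $\le T_{\overbar{\{i\}}}(\I') = T_{\overbar{\{i\}}}(\I)$ by Lemma~\ref{lm:simple_lemma}, since $\I$ and $\I'$ agree on every atom in $\overbar{\{i\}}$. Maximizing over neighbors yields $LS(\I)\le\max_{i\in P^n}T_{\overbar{\{i\}}}(\I)$.

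\emph{Lower bound.} Next I would exhibit a neighbor achieving the maximum. Pick $i^*\in P^n$ with $T_{\overbar{\{i^*\}}}(\I)$ maximum, write $E=\overbar{\{i^*\}}$, and let $t^\partial = t_E(\I)\in\dom(\partial q_E)$ be a witness, so $|q_E(\I)\Join t^\partial| = T_E(\I)$. Extend $t^\partial$ to a full tuple $t\in\dom(\x_{i^*})$ by giving the attributes in $\x_{i^*}\setminus\partial q_E$ (each of which occurs only in atom $i^*$) arbitrary values. If $t\notin I_{i^*}$, take $\I'$ to be $\I$ with $t$ inserted into $I_{i^*}$, so by the counting identity $|q(\I')| = |q(\I)| + T_E(\I)$; if $t\in I_{i^*}$, take $\I'$ to be $\I$ with $t$ deleted, so $|q(\I)| = |q(\I')| + T_E(\I)$. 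In either case $d(\I,\I')=1$ and the difference in query answers is exactly $T_E(\I)$, so $LS(\I)\ge T_{\overbar{\{i^*\}}}(\I) = \max_{i\in P^n}T_{\overbar{\{i\}}}(\I)$. Combining the two bounds finishes the proof.

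\emph{Main obstacle.} The subtle point I expect to need care on is the degenerate case $\partial q_E = \x_{i^*}$, in which the full tuple $t$ is forced and may already lie in $I_{i^*}$; the add-or-delete case split above is exactly what handles it. I would also want to state the counting identity $|q(\I')| = |q(\I)| + |q_E(\I)\Join t|$ cleanly, since it is the one place self-join-freeness is genuinely used: it guarantees that atom $i^*$ supplies the new tuple in exactly one way and that the join attributes between atom $i^*$ and the rest of $q$ are precisely the boundary.
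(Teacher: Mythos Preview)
Your proof is correct. The paper does not give its own proof of this lemma---it simply cites \cite{dong21:residual}---so there is no in-paper argument to compare against directly; however, your approach is the standard one and matches the style the paper uses elsewhere (e.g., the upper-bound half is essentially the full-CQ specialization of the argument in the proof of Theorem~\ref{th:ls_non_full_cq}, and the lower-bound half is the $|E|=1$ case of the construction in Lemma~\ref{lm:compare_TE_LSn}).
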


To extend this result to CQs with self-joins, we need to bound how much $|q(\I)|$ can change when multiple relations change simultaneously, as a change in one physical relation instance may correspond to changes in multiple logical relations when self-joins are present.  We first consider the case for self-join-free queries.

\begin{lemma}
\label{lm:base_ls_self_join_query}
Let $q$ be a CQ $q$ without self-joins, $B\subseteq [n]$, $B\neq \emptyset$, and let $\mathbf{I},\mathbf{I}'$ be instances such that $d(I_j,I_j')=1$ for all $j\in B$ while $d(I_j,I_j')=0$ otherwise.  Then 
\[\left||q(\mathbf{I})|-|q(\mathbf{I'})|\right| \leq \sum_{E\subseteq B,E\neq \emptyset}T_{\bar{E}}(\mathbf{I}).\] 
\end{lemma}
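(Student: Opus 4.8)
My plan is to realize the simultaneous change of all the relations in $B$ as a sequence of single-relation changes, bound the effect of each step, and then transfer those per-step bounds --- which a priori live on intermediate instances --- back to $\I$ using Lemma~\ref{lm:sensitivity_TE}, so that the telescoped sum collapses to exactly $\sum_{\emptyset\ne E\subseteq B}T_{\bar E}(\I)$. Concretely, I would fix an arbitrary ordering $j_1,\dots,j_b$ of the $b=|B|$ indices in $B$ and form the chain $\I=\I^{(0)},\I^{(1)},\dots,\I^{(b)}=\I'$, where $\I^{(k)}$ agrees with $\I'$ on relations $j_1,\dots,j_k$ and with $\I$ on all others. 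Since $\I^{(k-1)}$ and $\I^{(k)}$ differ only in $R_{j_k}$, by a single tuple, the triangle inequality gives $\bigl||q(\I)|-|q(\I')|\bigr|\le\sum_{k=1}^{b}\bigl||q(\I^{(k-1)})|-|q(\I^{(k)})|\bigr|$.

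For the $k$-th step I would use the identity $q=q_{\overline{\{j_k\}}}\Join R_{j_k}(\x_{j_k})$: the number of results of $q$ that contain a fixed tuple $t\in R_{j_k}$ is $|q_{\overline{\{j_k\}}}(\cdot)\Join \pi_{\partial q_{\overline{\{j_k\}}}}(t)|\le T_{\overline{\{j_k\}}}(\cdot)$, which is the one-tuple special case underlying Lemma~\ref{lm:ls_self_join_free_query}. A single-tuple change in $R_{j_k}$ inserts at most one tuple and deletes at most one tuple; since $q_{\overline{\{j_k\}}}$ does not mention $R_{j_k}$, this residual query --- and hence $T_{\overline{\{j_k\}}}$ --- is identical on $\I^{(k-1)}$ and $\I^{(k)}$, so the net change in the count is at most $T_{\overline{\{j_k\}}}(\I^{(k-1)})$, regardless of whether the step is an insertion, a deletion, or a substitution. (Self-join-freeness is essential here, so that altering one physical tuple alters exactly one logical relation.)

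It then remains to bound $T_{\overline{\{j_k\}}}(\I^{(k-1)})$ by quantities on $\I$. Since $\I^{(k-1)}$ differs from $\I$ only on relations $j_1,\dots,j_{k-1}$, each by distance at most $1$, Lemma~\ref{lm:sensitivity_TE} (with the logical instances equal to the physical ones, as $q$ is self-join-free) yields
\[
T_{\overline{\{j_k\}}}(\I^{(k-1)})\ \le\ T_{\overline{\{j_k\}}}(\I)\ +\sum_{\emptyset\ne E'\subseteq\{j_1,\dots,j_{k-1}\}}T_{\overline{\{j_k\}\cup E'}}(\I),
\]
because the product $\prod_{i\in E'}d(I_i(\x_i),I^{(k-1)}_i(\x_i))$ equals $1$ when $E'\subseteq\{j_1,\dots,j_{k-1}\}$ and $0$ otherwise. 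Summing over $k$ and regrouping the right-hand side by the set $F:=\{j_k\}\cup E'\subseteq B$, I expect each nonempty $F$ to appear exactly once: the singleton $F=\{j_k\}$ from the first term at step $k$, and each $F$ with $|F|\ge 2$ from the second term at the unique $k$ for which $j_k\in F$ and $F\setminus\{j_k\}\subseteq\{j_1,\dots,j_{k-1}\}$. This makes the total sum equal to $\sum_{\emptyset\ne F\subseteq B}T_{\bar F}(\I)$, completing the argument.

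The main obstacle is this last transfer-and-regroup step. One cannot bound each single-relation change directly by $T$'s on $\I$, since the intermediate instances $\I^{(k-1)}$ genuinely differ from $\I$; and routing through the ``union'' instance whose relations are $I_j\cup I'_j$ does not help either, because then the relevant monotonicity of $T_E(\cdot)$ (Lemma~\ref{lm:simple_lemma}) points the wrong way. Combining the one-relation-at-a-time decomposition with Lemma~\ref{lm:sensitivity_TE} is what restores the correct direction; the delicate part will be checking that the resulting double sum reorganizes into precisely the stated sum over all nonempty subsets of $B$, with no over- or under-counting.
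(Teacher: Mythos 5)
Your proof is correct and matches the paper's argument in all essential respects: the same chain of intermediate instances differing in one relation at a time, the same per-step bound $\bigl||q(\I^{(k-1)})|-|q(\I^{(k)})|\bigr|\le T_{\overline{\{j_k\}}}(\I^{(k-1)})$ (the paper cites it as Lemma~\ref{lm:ls_self_join_free_query}), the same transfer back to $\I$ via Lemma~\ref{lm:sensitivity_TE}, and the same regrouping of the double sum by the set $F=\{j_k\}\cup E'$. The only cosmetic difference is that the paper walks through all $n$ indices in natural order (with trivial steps at $j\notin B$) whereas you walk only through $B$ under an arbitrary fixed ordering.
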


\begin{proof}
For the given $\I,\I'$, there is a sequence of intermediate instances $\mathbf{I}^0, \mathbf{I}^1,\dots,\mathbf{I}^n$ such that, $\mathbf{I}^0=\I$, $\mathbf{I}^n=\I'$, while $\I^{\ell-1},\I^{\ell}$  can only possibly differ in $I_{\ell}$, $\ell=1,2,\dots,n$. More precisely, we set $I_{j}^{\ell} = I_{j}'$ if $j\leq \ell$; otherwise $I_{j}^{\ell} = I_{j}$.

Notice that, for all $j\notin B,\mathbf{I}^{j}=\mathbf{I}^{j-1}$, that is 
\[\left||q(\mathbf{I}^{j})|-|q(\I^{j-1})|\right|=0.\]
For all $j\in B$, $d(\mathbf{I}^{j},\mathbf{I}^{j-1})=d(I^{j}_j,I^{j-1}_j)=1$. By Lemma~\ref{lm:ls_self_join_free_query},
\begin{equation*}
\left||q(\mathbf{I}^{j})|-|q(\mathbf{I}^{j-1})|\right|\leq T_{[n]-\{j\}}(\mathbf{I}^{j-1}).
\end{equation*}
Therefore,
\begin{equation}
\label{eq:lm:base_ls_self_join_query_1}
\left||q(\mathbf{I})|-|q(\mathbf{I'})|\right|=\left||q(\mathbf{I}^{n})|-|q(\mathbf{I^{0}})|\right|\leq \sum_{j\in B}T_{[n]-\{j\}}(\mathbf{I}^{j-1}).
\end{equation}

Note that for a self-join-free query, $I_i = I_i(\mathbf{x}_i)$.  By Lemma~\ref{lm:sensitivity_TE}, for all $j\in B$,
\begin{align}
&T_{[n]-\{j\}}(\mathbf{I}^{j-1})
\nonumber
\\
\nonumber
\leq& T_{[n]-\{j\}}(\mathbf{I}^{0})+\sum_{E\subseteq [n]-\{j\},E\neq \emptyset}\left(T_{[n]-\{j\}-E}(\mathbf{I}^0)\prod_{t\in E}d(I_t^{j-1},I_j^0)\right)
\\
= &\sum_{E\subseteq [n]-\{j\}}\left(T_{[n]-\{j\}-E}(\mathbf{I}^0)\prod_{t\in E}d(I_t^{j-1},I_j^0)\right)
\nonumber
\\
=&\sum_{E\subseteq B\cap [j-1]}T_{[n]-\{j\}-E}(\mathbf{I}^0)
\label{eq:lm:base_ls_self_join_query_2}
\\
=& \sum_{E\subseteq B\cap [j],j\in E}T_{\bar{E}}(\mathbf{I}^0).
\label{eq:lm:base_ls_self_join_query_3}
\end{align}
(\ref{eq:lm:base_ls_self_join_query_2}) is because, by definition, for $\mathbf{I}^{j-1},\mathbf{I}^{0},j\in B$, $d(I_t^{j-1},I_t^0)=1$ if $t\in B\cap [j-1]$; otherwise $d(I_t^{j-1},I_t^0)=0$. 

Plugging (\ref{eq:lm:base_ls_self_join_query_3}) into (\ref{eq:lm:base_ls_self_join_query_1}), we obtain
\[\left||q(\mathbf{I})|-|q(\mathbf{I'})|\right|\leq\sum_{j\in B}\sum_{E\subseteq B\cap[j],j\in E}T_{\bar{E}}(\mathbf{I}^0)= \sum_{E\subseteq B,E\neq \emptyset}T_{\bar{E}}(\mathbf{I}).\]\end{proof}

Based on this and (\ref{eq:ls}), we can derive an upper bound on $LS(\I)$ for CQs with self-joins.

\begin{theorem}
\label{th:ls_self_join_query}
For a CQ $q$, 
\[LS(\I)\leq \max_{i\in P^m}\sum_{E\subseteq D_i,E\neq \emptyset}T_{\bar{E}}(\I).\]
\end{theorem}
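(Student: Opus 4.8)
The plan is to obtain the bound directly from Lemma~\ref{lm:base_ls_self_join_query} via the physical-to-logical reduction already used for Lemma~\ref{lm:sensitivity_TE}. Fix an instance $\I$ and pick a neighbor $\I'$ attaining the maximum in \eqref{eq:ls}, so that $d(\I,\I') = \sum_{i\in[m]} d(I_{l_i},I_{l_i}') = 1$. Exactly one physical relation is modified, say $R_{l_i}$ with $d(I_{l_i},I_{l_i}')=1$ and $d(I_{l_{i'}},I_{l_{i'}}')=0$ for $i'\neq i$; since neighboring instances agree on all non-private relations ($d(I_j,I_j')=0$ for $j\in\bar P^n$), the modified relation must be private, i.e.\ $i\in P^m$.

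Next I would pass to the logical instances $\{I_j(\x_j)\}_{j}$. A single-tuple change to the physical relation $R_{l_i}$ is just an attribute renaming away from a single-tuple change in each logical copy $I_j(\x_j)$ with $j\in D_i$, so $d(I_j(\x_j),I_j'(\x_j))=1$ for every $j\in D_i$ and $d(I_j(\x_j),I_j'(\x_j))=0$ for every $j\notin D_i$. On the logical instances the query $q$ is self-join-free (the atoms have pairwise distinct attribute sets), and $|q(\{I_j(\x_j)\}_j)| = |q(\I)|$, and likewise for $\I'$. Hence Lemma~\ref{lm:base_ls_self_join_query} applies with $B=D_i$, yielding
\[
\bigl||q(\I)|-|q(\I')|\bigr| = \bigl||q(\{I_j(\x_j)\}_j)|-|q(\{I_j'(\x_j)\}_j)|\bigr| \leq \sum_{E\subseteq D_i,E\neq\emptyset} T_{\bar E}(\{I_j(\x_j)\}_j) = \sum_{E\subseteq D_i,E\neq\emptyset} T_{\bar E}(\I),
\]
where the last equality is again the logical-instance identity $T_{\bar E}(\{I_j(\x_j)\}_j)=T_{\bar E}(\I)$ (as in the proof of Lemma~\ref{lm:sensitivity_TE}). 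Since the optimal neighbor $\I'$ modifies \emph{some} $i\in P^m$, taking the maximum over $i\in P^m$ gives $LS(\I)\leq \max_{i\in P^m}\sum_{E\subseteq D_i,E\neq\emptyset}T_{\bar E}(\I)$.

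The only substantive point is the physical-to-logical bookkeeping in the second paragraph: one must argue carefully that a single change in $R_{l_i}$ induces distance exactly $1$ in each logical copy indexed by $D_i$ and distance $0$ in all others, so that the hypothesis of Lemma~\ref{lm:base_ls_self_join_query} is met with $B=D_i$. Everything else is routine. (Should one be uneasy about a degenerate case in which $d(I_j(\x_j),I_j'(\x_j))$ is $0$ rather than $1$ for some $j\in D_i$, monotonicity of $T_{\cdot}(\cdot)$ from Lemma~\ref{lm:simple_lemma} lets us apply the lemma with $B$ equal to the subset of $D_i$ that actually changes, which only removes terms from $\sum_{E\subseteq D_i,E\neq\emptyset}T_{\bar E}(\I)$ and so preserves the inequality.)
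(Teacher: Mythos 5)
Your proof is correct and follows essentially the same route as the paper's: rewrite $LS(\I)$ as a maximum over the private physical relation that changes, pass to the logical instances where $q$ is self-join-free and the single physical change becomes a simultaneous one-tuple change in each logical copy indexed by $D_i$, and then invoke Lemma~\ref{lm:base_ls_self_join_query} with $B=D_i$. The degenerate case you flag at the end cannot actually occur (renaming attributes preserves the per-relation distance exactly), but your monotonicity fallback is sound and does no harm.
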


\begin{proof}
First, we write (\ref{eq:ls}) as 
\begin{equation}
LS(\I) = \max_{i\in P^m}\max_{\I',d(\I,\I')=1,d(I_{l_i},I_{l_i}')=1}\left||q(\I)|-|q(\I')|\right|.
\label{eq:th:ls_self_join_query_1}
\end{equation}

We can use a similar idea in the proof of Lemma~\ref{lm:sensitivity_TE}: regard the $q$ as a self-join-free query on the logical instance and any $\I,\I'$ with $d(\I,\I')=1,d(I_{l_i},I_{l_i}')=1$ correspond to $\{I_j(\x_j)\}_j,\{I_j'(\x_j)\}_i$ with $d(I_j(\x_j),I_j(\x_j))=1$ for all $j\in D_i$. 

Then, we can derive, for any $\I,\I'$ with $d(\I,\I')=1,d(I_{l_i},I_{l_i}')=1$,
\begin{align}
\left||q(\I)|-|q(\I')|\right|=& \left||q(\{I_j(\x_j)\}_j)|-|q(\{I_j'(\x_j)\}_j)|\right|
\nonumber
\\
\label{eq:th:ls_self_join_query_2}
\leq &\sum_{E\subseteq D_i,E\neq \emptyset}T_{\bar{E}}(\{I_j(\x_j)\}_j)
\\
\label{eq:th:ls_self_join_query_3}
=& \sum_{E\subseteq D_i,E\neq \emptyset}T_{\bar{E}}(\I),
\end{align}
where (\ref{eq:th:ls_self_join_query_2}) follows from Lemma~\ref{lm:base_ls_self_join_query}.

Finally, plugging (\ref{eq:th:ls_self_join_query_3}) into (\ref{eq:th:ls_self_join_query_1}), we obtain
\[LS(\I)\leq \max_{i\in P^m} \sum_{E\subseteq D_i,E\neq \emptyset}{T}_{\bar{E}}({\mathbf{I}}).\]\end{proof}

\paragraph{Remark}
Note that when self-joins are present, we can no longer obtain an exact formula for $LS(\I)$ like for self-join-free queries in Lemma~\ref{lm:ls_self_join_free_query}.  This is precisely due to the fact that self-joins induce changes in multiple logical relations that may interact in complex manners.

\subsection{Global Sensitivity of CQs}
\label{sec:gs_cq}
Because $GS=\max_{\I} LS(\I)$, a by-product of Theorem~\ref{th:ls_self_join_query} is an upper bound on $GS$ under relaxed DP where the instance size $N$ is public.  This upper bound can be much smaller than the trivial upper bound $O(N^{n_P-1})$ mentioned in Section \ref{sec:dp_mechanisms} for many CQs.

By Theorem~\ref{th:ls_self_join_query}, we have
\begin{equation}
GS   \leq \max_{\mathbf{I}}\max_{i\in P^m}\sum_{E\subseteq D_i,E\neq \emptyset}T_{\bar{E}}(\I) \leq \max_{i\in P^m}\sum_{E\subseteq D_i,E\neq \emptyset}\max_{\mathbf{I}}T_{\bar{E}}(\I).
\label{eq:hat_gs}
\end{equation}
Observe that $\max_{\mathbf{I}}T_{\bar{E}}(\I)$ is upper bounded by the maximum join size of the residual query $q_{\bar E}$, when the logical relations of the same physical relation are allowed to be instantiated differently and the domain size of each variable in $\partial q_E = \partial q_{\bar E}$ is set to $1$, which is equivalent to removing these variables.  We can bound the maximum join size using the \textit{AGM bound}~\cite{atserias2008size}. Together with \eqref{eq:hat_gs} this yields an upper bound on $GS$.

\begin{example}
We illustrate how this is done on the triangle query $q=\mathtt{Edge}(x_1,x_2)\Join \mathtt{Edge}(x_2,x_3)\Join \mathtt{Edge}(x_1,x_3)$ on a single   physical relation $\mathtt{Edge}$. For $E =\{3\}$, i.e., $q_{\bar E}=\mathtt{Edge}(x_1,x_2)\Join \mathtt{Edge}(x_2,x_3)$ and $\partial q_{E} = \{x_1,x_3\}$, we have
\begin{align*}
\max_{\mathbf{I}}T_{\bar E}(\I)=&\max_{\mathbf{I}}\max_{t\in \dom(x_1,x_3)}|\mathtt{Edge}(x_1,x_2)\Join \mathtt{Edge}(x_2,x_3)|
\\
\leq &\max_{\mathbf{I}}\max_{t\in \dom(x_1,x_3)}|\mathtt{Edge}_1(x_1,x_2)\Join \mathtt{Edge}_2(x_2,x_3)|
\\
=& \max_{\mathbf{I}}(\mathtt{Edge}_1(x_2)\Join \mathtt{Edge}_2(x_2)).
\\
= & \mathsf{AGM}(\mathtt{Edge}_1(x_2)\Join \mathtt{Edge}_2(x_2)).
\end{align*}
We can similarly derive a bound for other $E$'s. Note that when $E$ consists of $2$ relations, $\max_{\mathbf{I}}T_{\bar E}(\I)\le 1$. Thus,
\begin{align*}
GS \le & \mathsf{AGM}(\mathtt{Edge}_1(x_2)\Join \mathtt{Edge}_2(x_2))
\\
&+\mathsf{AGM}(\mathtt{Edge}_2(x_3)\Join \mathtt{Edge}_3(x_3))
\\
&+ \cdots\\
=& O(N).
\end{align*}
\end{example}

\begin{example}
\label{ex:path4}
As a more complicated example, consider the path-4 query
\[q=\mathtt{Edge}(x_1,x_2)\Join \mathtt{Edge}(x_2,x_3)\Join \mathtt{Edge}(x_3,x_4)\Join \mathtt{Edge}(x_4,x_5).\]
We have 
\begin{align*}
GS  \le& \mathsf{AGM}(\mathtt{Edge}_2(x_3) \Join\mathtt{Edge}_3(x_3,x_4) \Join \mathtt{Edge}_4(x_4, x_5)) \\
 &+ \mathsf{AGM}(\mathtt{Edge}_1(x_1) \Join\mathtt{Edge}_3(x_4) \Join \mathtt{Edge}_4(x_4, x_5)) \\
 &+ \mathsf{AGM}(\mathtt{Edge}_1(x_1,x_2) \Join\mathtt{Edge}_2(x_2) \Join \mathtt{Edge}_4(x_5)) \\
 &+ \cdots\\
 =& O(N^2).
\end{align*}
\end{example}

In general, any join size upper bound can be plugged into \eqref{eq:hat_gs}. For example, when degree information or functional dependencies are publicly available, tighter upper bounds can be derived \cite{gottlob12size,abokhamis17shannon}.  Although DP mechanisms based on $GS$ are not as accurate as our $RS(\cdot)$-based mechanisms to be presented next, they can be computed in $O(1)$ time (excluding the time for computing $|q(\I)|$). 

\subsection{Deriving \texorpdfstring{$\hat{LS}^{(k)}(\cdot)$}{TEXT}}
\label{sec:hat_lsk}

Theorem \ref{th:ls_self_join_query} has provided an upper bound for $LS(\I)$.  The next step is to derive $\hat{LS}^{(k)}(\I)$, an upper bound for $LS^{(k)}(\I)$.

For any two instances $\I,\I'$, define their \textit{distance vector} as \[\s=(d(I_{1},I_{1}'),d(I_{2},I_{2}'),\dots,d(I_{n},I_{n}')).\]
For any $\mathbf{s} =(s_1,\dots, s_n)$, let $\mathcal{I}^{\mathbf{s}}=\{\mathbf{I}': d(I_{j},I_{j}')=s_j, j\in[n]\}$ be the set of instances whose distance vectors are $\s$ from $\I$.  Note that when self-joins are present, not any $\s\in \mathbb{N}^n$ is a valid distance vector. We must ensure $s_{l_i} = s_{l_i+1}=\dots=s_{l_{i}+n_i-1}$, for any $i\in[m]$.  Let $\mathbf{S}^k$ be the set of valid distance vectors such that the total distance of all private relations is $k$, i.e., 
\[\mathbf{S}^k=\left\{\mathbf{s}:\sum_{i\in P^m}s_{l_i}=k; s_j=0, j\in \bar{P}^n; \forall i \in [m], j\in D_i, s_{j}=s_{l_i} \right\}.\]
Denote the set of instances at $k$ distance from $\I$ as $\mathcal{I}^k = \{\I':d(\I,\I')=k\}$, i.e., 
\[\mathcal{I}^k = \cup_{\s\in \mathbf{S}^k}\mathcal{I}^{\mathbf{s}}.\]
We now derive an upper bound of $LS^{(k)}(\cdot)$ in terms of $T_E(\cdot)$.
\begin{lemma}
\label{lm:upper_bound_lsk}
\[LS^{(k)}(\I)\leq\max_{\s\in \mathbf{S}^k} \max_{i\in P^m} \sum_{E\subseteq D_i,E\neq \emptyset}\max_{\I'\in \mathcal{I}^{\s}}{T}_{\bar{E}}(\I').\]
\end{lemma}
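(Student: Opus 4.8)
\textbf{Proof plan for Lemma~\ref{lm:upper_bound_lsk}.}

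The plan is to unfold the definition of $LS^{(k)}(\cdot)$ and then bound $LS(\I')$ for every instance $\I'$ at distance $k$ from $\I$, combining Theorem~\ref{th:ls_self_join_query} (the upper bound on $LS$ of a CQ in terms of the $T_{\bar E}$'s) with Lemma~\ref{lm:sensitivity_TE} (the sensitivity of $T_E$). First I would recall that, by \eqref{eq:lsk_cq}, $LS^{(k)}(\I) = \max_{\I' : d(\I,\I') = k} LS(\I')$, and that the maximizing $\I'$ ranges over $\mathcal{I}^k = \cup_{\s \in \mathbf{S}^k} \mathcal{I}^{\s}$; so it suffices to fix an arbitrary valid distance vector $\s \in \mathbf{S}^k$ and an arbitrary $\I' \in \mathcal{I}^{\s}$ and show $LS(\I') \le \max_{i\in P^m}\sum_{E\subseteq D_i, E\neq\emptyset}\max_{\I''\in\mathcal{I}^{\s}} T_{\bar E}(\I'')$. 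Then taking the max over $\I'\in\mathcal{I}^{\s}$ and over $\s\in\mathbf{S}^k$ yields the claim.

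For the fixed $\I'$, applying Theorem~\ref{th:ls_self_join_query} gives $LS(\I') \le \max_{i\in P^m}\sum_{E\subseteq D_i, E\neq\emptyset} T_{\bar E}(\I')$, so the only remaining task is to bound each $T_{\bar E}(\I')$ by something of the form $\max_{\I''\in\mathcal{I}^{\s}} T_{\bar E}(\I'')$. But this is immediate once we observe that $\I' \in \mathcal{I}^{\s}$ itself, so $T_{\bar E}(\I') \le \max_{\I''\in\mathcal{I}^{\s}} T_{\bar E}(\I'')$ trivially. Hence, for this fixed $\s$, $LS(\I')\le \max_{i\in P^m}\sum_{E\subseteq D_i,E\neq\emptyset}\max_{\I''\in\mathcal{I}^{\s}}T_{\bar E}(\I'')$ for every $\I'\in\mathcal{I}^{\s}$, and the right-hand side no longer depends on $\I'$, so it also bounds $\max_{\I'\in\mathcal{I}^{\s}}LS(\I')$. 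Finally, since $\mathcal{I}^k = \cup_{\s\in\mathbf{S}^k}\mathcal{I}^{\s}$, we get $LS^{(k)}(\I) = \max_{\s\in\mathbf{S}^k}\max_{\I'\in\mathcal{I}^{\s}}LS(\I') \le \max_{\s\in\mathbf{S}^k}\max_{i\in P^m}\sum_{E\subseteq D_i,E\neq\emptyset}\max_{\I''\in\mathcal{I}^{\s}}T_{\bar E}(\I'')$, which is exactly the statement.

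I expect there is essentially no obstacle here: the lemma is a straightforward bookkeeping combination of Theorem~\ref{th:ls_self_join_query} with the decomposition of $\mathcal{I}^k$ by distance vectors. The only point requiring a moment's care is the use of \eqref{eq:lsk_cq} rather than \eqref{eq:lsk}, i.e., that it is legitimate to restrict to instances at distance \emph{exactly} $k$ (dummy tuples can always be padded in to reach distance exactly $k$ without decreasing $LS$), and the observation that $\mathbf{S}^k$ correctly enumerates exactly the valid distance vectors of total private weight $k$ under the self-join constraint $s_j = s_{l_i}$ for $j \in D_i$. Neither of these introduces any real difficulty, so the proof is short.
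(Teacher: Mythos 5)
Your proposal is correct and follows essentially the same route as the paper: unfold $LS^{(k)}$ via \eqref{eq:lsk_cq}, decompose $\mathcal{I}^k$ over distance vectors $\s\in\mathbf{S}^k$, apply Theorem~\ref{th:ls_self_join_query} to bound $LS(\I')$ by $\max_{i\in P^m}\sum_{E\subseteq D_i, E\neq\emptyset} T_{\bar E}(\I')$, and then push the inner max over $\I'\in\mathcal{I}^{\s}$ inside the sum. The only superfluous remark is your mention of Lemma~\ref{lm:sensitivity_TE} in the plan preamble --- it plays no role in this lemma (it is used only afterwards to produce the computable surrogate $\hat T_{E,\s}$), and your actual argument correctly does not invoke it.
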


\begin{proof}
By (\ref{eq:lsk_cq}), we have
\begin{align}
LS^{(k)}(\I) =& \max_{\mathbf{I}', d(\mathbf{I}, \mathbf{I}')= k}LS(\mathbf{I}')
\nonumber
\\
=& \max_{\I'\in \mathcal{I}^k}\max_{i\in P^m} LS(\mathbf{I}')
\nonumber
\\
\leq &\max_{\I'\in \mathcal{I}^k}\max_{i\in P^m}\sum_{E\subseteq D_i,E\neq \emptyset}{T}_{\bar{E}}(\I')
\label{eq:lm:upper_bound_lsk_l1}
\\
=&\max_{\s\in \mathbf{S}^k} \max_{\I'\in \mathcal{I}^{\s}}\max_{i\in P^m}\sum_{E\subseteq D_i,E\neq \emptyset}{T}_{\bar{E}}(\I')
\nonumber
\\
=&\max_{\s\in \mathbf{S}^k} \max_{i\in P^m}\max_{\I'\in \mathcal{I}^{\s}}\sum_{E\subseteq D_i,E\neq \emptyset}{T}_{\bar{E}}(\I')
\nonumber
\\
\leq & \max_{\s\in \mathbf{S}^k} \max_{i\in P^m} \sum_{E\subseteq D_i,E\neq \emptyset}\max_{\I'\in \mathcal{I}^{\s}}{T}_{\bar{E}}(\I').
\end{align}
(\ref{eq:lm:upper_bound_lsk_l1}) follows Theorem~\ref{th:ls_self_join_query}.
\end{proof}

Let $\hat{T}_{E,\mathbf{s}}(\mathbf{I})$ be an upper bound of $\max_{\I'\in \mathcal{I}^{\mathbf{s}}}{T}_{E}(\mathbf{I'})$. Then 
\begin{equation}
\label{eq:hat_lsk}
\hat{LS}^{(k)}(\mathbf{I}):=\max_{\mathbf{s}\in \mathbf{S}^k}\max_{i\in P^m}\sum_{E\subseteq D_i,E\neq \emptyset}\hat{T}_{\bar{E},\s}(\I'),
\end{equation}
is clearly an upper bound of $LS^{(k)}(\I)$. 

Now, it remains to find a valid $\hat{T}_{E,\mathbf{s}}(\mathbf{I})$.
By Lemma~\ref{lm:sensitivity_TE}, we have for any $E\subseteq [n]$ and any $\I'\in \mathcal{I}^{\mathbf{s}}$,
\[T_E(\I')\leq T_{E}(\I) + \sum_{E'\subseteq E, E'\neq \emptyset}\left(T_{E-E'}(\I)\prod_{j\in E'}s_j\right).\]
So we set $\hat{T}_{E,\mathbf{s}}(\mathbf{I})$ as
\begin{equation}
\label{eq:hat_TES}
\hat{T}_{E,\mathbf{s}}(\I):=\sum_{E'\subseteq E}\left(T_{E-E'}(\I)\prod_{j\in E'}s_j\right),
\end{equation}
where we define $\prod_{j\in \emptyset}s_j=1$.

Finally, the residual sensitivity is defined as in \eqref{eq:hat_ss} by setting $\hat{LS}^{(k)}(\I)$ as in \eqref{eq:hat_lsk}:
\begin{equation}
\label{eq:rs}
RS(\I) = \max_{k\ge 0}e^{-\beta k}\hat{LS}^{(k)}(\I).
\end{equation}

However, in order for $RS(\cdot)$ to be a valid DP mechanism, we need to show that $\hat{LS}^{(k)}(\cdot)$ follows the smoothness property (\ref{eq:des_property}). To do so, we need a technical result from \cite{dong21:residual}:

\begin{lemma}[\cite{dong21:residual}]
\label{lm:base_for_proof_desir_pro}
Given any self-join-free CQ, any $E\subseteq[n]$, any $i\in[n]$,  two instances 
$\I,\I'$ that differ by one tuple in $R_i(\x_i)$, any two distance vectors $\s=(s_1,\dots, s_n)$ and $\s'=(s'_1,\dots, s'_n)$ such that
\[\s' = (s_1,\dots,s_{i-1},s_{i}+1,s_{i+1},\dots,s_n),\]
we have
\[\sum_{E'\subseteq E}\left(T_{E-E'}(\I)\prod_{j\in E'}s_j\right)\leq \sum_{E'\subseteq E}\left(T_{E-E'}(\I')\prod_{j\in E'}s_j'\right).\]
\end{lemma}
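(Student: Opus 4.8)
The plan is to compare the two sides by partitioning the index subsets $E'\subseteq E$ according to whether they contain $i$, exploiting that $\I$ and $\I'$ agree on every relation except $R_i$ while $\s$ and $\s'$ agree on every coordinate except the $i$-th. Write $\hat T_{E,\s}(\I)$ for the quantity $\sum_{E'\subseteq E}(T_{E-E'}(\I)\prod_{j\in E'}s_j)$ of \eqref{eq:hat_TES}, so the goal is $\hat T_{E,\s}(\I)\le\hat T_{E,\s'}(\I')$. \emph{Case $i\notin E$:} then $i\notin E'$ for every $E'\subseteq E$, so $\prod_{j\in E'}s_j=\prod_{j\in E'}s'_j$; and $i\notin E-E'$, so $\I$ and $\I'$ have identical logical instances on $E-E'$ and Lemma~\ref{lm:simple_lemma} gives $T_{E-E'}(\I)=T_{E-E'}(\I')$. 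The two sums then coincide term by term, and the claim holds with equality.

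\emph{Case $i\in E$:} put $F=E\setminus\{i\}$ and split each sum into the terms with $i\in E'$ and those with $i\notin E'$. For $i\in E'$, writing $E'=E''\cup\{i\}$ with $E''\subseteq F$, we get $E-E'=F-E''$ and $\prod_{j\in E'}s_j=s_i\prod_{j\in E''}s_j$; since $i\notin F-E''$, Lemma~\ref{lm:simple_lemma} gives $T_{F-E''}(\I)=T_{F-E''}(\I')$, so these terms contribute $s_i\cdot A$ to $\hat T_{E,\s}(\I)$ and $(s_i+1)\cdot A$ to $\hat T_{E,\s'}(\I')$, where $A=\sum_{E''\subseteq F}T_{F-E''}(\I)\prod_{j\in E''}s_j\ge 0$. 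For $i\notin E'$ (so $E'\subseteq F$), the residual set $(F-E')\cup\{i\}$ contains $i$, and applying Lemma~\ref{lm:sensitivity_TE} to $\I,\I'$ — whose distance vector has only its $i$-th coordinate nonzero, equal to $1$ — kills every term on its right-hand side except the one for $E'=\{i\}$, yielding $T_{(F-E')\cup\{i\}}(\I)-T_{(F-E')\cup\{i\}}(\I')\le T_{F-E'}(\I)$. Multiplying by $\prod_{j\in E'}s_j$ ($=\prod_{j\in E'}s'_j$) and summing over $E'\subseteq F$, the $i\notin E'$ parts satisfy (difference) $\le\sum_{E'\subseteq F}T_{F-E'}(\I)\prod_{j\in E'}s_j=A$. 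Hence $\hat T_{E,\s}(\I)-\hat T_{E,\s'}(\I')\le A-A=0$.

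The only real subtleties are orientation and uniformity. Lemma~\ref{lm:sensitivity_TE} must be applied with $\I$ (not $\I'$) as the ``reference'' instance on its right-hand side — which is exactly the orientation needed here — and with the distance vector of the single-tuple change, in which only the $R_i$-coordinate is $1$; this is what makes $T_G(\I)-T_G(\I')\le T_{G\setminus\{i\}}(\I)$ hold uniformly whether the change is an insertion, a deletion, or a substitution, so no case analysis on the change type is needed. Self-join-freeness enters only through the identification of physical and logical instances, so that ``differ by one tuple in $R_i(\x_i)$'' is an honest one-step change and every $\s\in\mathbb N^n$ is a valid distance vector — precisely the hypothesis under which the self-join-free ingredients of \cite{dong21:residual} are stated. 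Given all this, the proof is a short bookkeeping computation rather than anything deep; the bulk of the work is simply tracking the subset splits correctly.
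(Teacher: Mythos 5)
Your proof is correct. A small caveat before assessing it: the paper does not give its own proof of this lemma; it is stated with a citation to \cite{dong21:residual}, so there is no in-paper argument to compare against. What you supply is a clean, self-contained derivation from Lemmas~\ref{lm:simple_lemma} and~\ref{lm:sensitivity_TE}, both of which the paper does state.

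Checking the argument: the case $i\notin E$ is immediate by Lemma~\ref{lm:simple_lemma} and the fact that $s$ and $s'$ agree off coordinate $i$. For $i\in E$ with $F=E\setminus\{i\}$, the split of $\hat T_{E,\s}(\I)-\hat T_{E,\s'}(\I')$ into $i\in E'$ and $i\notin E'$ terms is correct. In the first block, since $i\notin F-E''$ and $i\notin E''$, Lemma~\ref{lm:simple_lemma} gives $T_{F-E''}(\I)=T_{F-E''}(\I')$ and the weights agree, so the block contributes $s_iA-(s_i+1)A=-A$. In the second block, you instantiate Lemma~\ref{lm:sensitivity_TE} with $G=(F-E')\cup\{i\}$; because $d(I_j(\x_j),I'_j(\x_j))$ is $1$ when $j=i$ and $0$ otherwise, every term in its right-hand side with $E''\neq\{i\}$ vanishes, leaving $T_G(\I)-T_G(\I')\le T_{G-\{i\}}(\I)=T_{F-E'}(\I)$; multiplying by the nonnegative weight $\prod_{j\in E'}s_j$ and summing over $E'\subseteq F$ bounds the block by $A$, yielding $\le 0$ overall. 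Your remarks about orientation (taking $\I$ as the reference on the right of Lemma~\ref{lm:sensitivity_TE}) and about self-join-freeness making the single-tuple change an honest one-step move with exactly one nonzero coordinate are accurate and worth making explicit. I see no gap.

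One tangential point worth being aware of: your proof leans on Lemma~\ref{lm:sensitivity_TE}, whose self-join-free version is itself cited from the same prior work. As a proof inside this paper's logical dependency structure that is fine, but if one were reproving \cite{dong21:residual} from scratch, one should confirm that the self-join-free $T_E$-sensitivity bound is established independently of the smoothness lemma you are proving so that no circularity is introduced.
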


In order to handle self-joins, we need to extend Lemma~\ref{lm:base_for_proof_desir_pro} to the case where multiple relations may differ.

\begin{lemma}
\label{lm:inter_for_proof_desir_pro}
Given any self-join-free CQ, any $E\subseteq[n]$, any $B\subseteq [n]$, two instances 
$\I,\I'$ that differ by one tuple in every $R_i(\x_i),i\in B$, and two distance vectors  $\s=(s_1,\dots, s_n)$ and $\s'=(s'_1,\dots, s'_n)$ such that
\[s_i' = \begin{cases}
s_i+1, & i\in B;
\\
s_i, & i\notin B,
\end{cases}\]
we have
\[\sum_{E'\subseteq E}\left(T_{E-E'}(\I)\prod_{j\in E'}s_j\right)\leq \sum_{E'\subseteq E}\left(T_{E-E'}(\I')\prod_{j\in E'}s_j'\right).\]
\end{lemma}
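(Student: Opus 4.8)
The plan is to reduce the multi-relation case to the single-relation case of Lemma~\ref{lm:base_for_proof_desir_pro} by inserting the tuples one relation at a time, exactly as in the proof of Lemma~\ref{lm:base_ls_self_join_query}. Write $B = \{b_1 < b_2 < \cdots < b_p\}$, and build a chain of instances $\I = \I^0, \I^1, \dots, \I^p = \I'$, where $\I^\ell$ agrees with $\I'$ on relations $R_{b_1}(\x_{b_1}), \dots, R_{b_\ell}(\x_{b_\ell})$ and with $\I$ on all other relations; so $\I^{\ell-1}$ and $\I^\ell$ differ by exactly one tuple, in $R_{b_\ell}(\x_{b_\ell})$. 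Correspondingly, build a chain of distance vectors $\s = \s^0, \s^1, \dots, \s^p = \s'$, where $\s^\ell$ increments coordinates $b_1, \dots, b_\ell$ of $\s$ by $1$ and leaves the rest unchanged; thus $\s^{\ell}$ is obtained from $\s^{\ell-1}$ by incrementing coordinate $b_\ell$ by one. The claim is that the quantity $\Phi(\I^\ell, \s^\ell) := \sum_{E'\subseteq E}\bigl(T_{E-E'}(\I^\ell)\prod_{j\in E'}s^\ell_j\bigr)$ is nondecreasing in $\ell$; chaining these $p$ inequalities gives $\Phi(\I, \s) \le \Phi(\I', \s')$, which is exactly the statement.

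For the single step from $\ell-1$ to $\ell$, I would invoke Lemma~\ref{lm:base_for_proof_desir_pro} with the instance pair $(\I^{\ell-1}, \I^\ell)$, which differ by one tuple in $R_{b_\ell}(\x_{b_\ell})$, and the distance-vector pair $(\s^{\ell-1}, \s^\ell)$, which differ precisely by a $+1$ in coordinate $b_\ell$. That lemma's hypotheses are met verbatim, and its conclusion is $\Phi(\I^{\ell-1}, \s^{\ell-1}) \le \Phi(\I^\ell, \s^\ell)$. Summing over $\ell = 1, \dots, p$ telescopes to the desired bound. Since the underlying CQ is self-join-free throughout (that is the hypothesis of both lemmas), there is no subtlety about logical versus physical instances here; that bookkeeping is handled separately when this lemma is applied to the self-join case.

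The one point that needs a little care is that the intermediate distance vectors $\s^\ell$ are themselves legitimate inputs to Lemma~\ref{lm:base_for_proof_desir_pro}: that lemma places no constraint on $\s$ beyond its coordinates being nonnegative integers (it is stated for a self-join-free query, so there is no ``equal coordinates within $D_i$'' requirement), so each $\s^\ell \in \mathbb{N}^n$ is fine, and likewise each $\I^\ell$ is a genuine instance differing from its neighbor by a single tuple. The main ``obstacle'' is therefore just organizational rather than mathematical: one must be careful that at each step the \emph{same} coordinate $b_\ell$ is the one being incremented in the distance vector as the relation being changed in the instance, so that the hypothesis of Lemma~\ref{lm:base_for_proof_desir_pro} lines up. With the chain set up as above this alignment is automatic, and the proof is a short telescoping argument.
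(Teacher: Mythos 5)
Your proposal is correct and takes essentially the same approach as the paper: build a chain of intermediate instances and distance vectors, apply Lemma~\ref{lm:base_for_proof_desir_pro} at each one-tuple step, and telescope. The only cosmetic difference is that you index the chain by the elements of $B$ directly (length $|B|$), whereas the paper runs through all $n$ relations with no-op steps for $\ell\notin B$; the content is identical.
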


\begin{proof}
Given $B\subseteq[n]$, two instances $\I,\I'$ differing by one tuple in all $R_i(\x_i),i\in B$, we define ${\I}^{0},{\I}^{1},\dots,{\I}^{n}$ and ${\s}^{0},{\s}^{1},\dots,{\s}^{n}$: for $\ell\in [0, n]$,
\[{I}^{\ell}_j(\x_j)=\begin{cases}
{I}_j'(\x_j) & j\in[\ell]
\\
{I}_j(\x_j) & otherwise
\end{cases}\]
and
\[{s}^{\ell}_j=\begin{cases}
{s}_j' & j\in[\ell]
\\
{s}_j & otherwise
\end{cases}.\]
It is trivial to see $\I^{0}=\I$, $\I^{n}=\I'$, $\s={\s}^{0}$ and $\s'={\s}^{n}$.
Recall for $j\in [n]-B$, ${I}_j(\x_j)={I}'_j(\x_j)$, $s_j=s_j'$ and for $j\in B$, $d({I}_j(\x_j),{I}_j'(\x_j))=1$ and $s_j'=s_j+1$.

Therefore, for $\ell\notin B$, $\I^{\ell-1}=\I^{\ell}$ and $\s^{\ell-1}=\s^{\ell}$. That is,
\begin{equation*}
\label{eq:lm:inter_for_proof_desir_pro_1}
\sum_{E'\subseteq E}\left(T_{E-E'}(\I^{\ell-1})\prod_{j\in E'}s_j^{\ell-1}\right)= \sum_{E'\subseteq E}\left(T_{E-E'}(\I^{\ell})\prod_{j\in E'}s_j^{\ell}\right).
\end{equation*}

For $\ell\in B$, $\I^{\ell-1},\I^{\ell}$ differ by one tuple in $R_{\ell}(\x_{\ell})$ and $\s^{\ell-1},\s^{\ell}$ only differ by $s^{\ell} = s^{\ell-1}+1$. Based on Lemma~\ref{lm:base_for_proof_desir_pro}, we have 
\begin{equation*}
\label{eq:lm:inter_for_proof_desir_pro_2}
\sum_{E'\subseteq E}\left(T_{E-E'}(\I^{\ell-1})\prod_{j\in E'}s_j^{\ell-1}\right)\leq \sum_{E'\subseteq E}\left(T_{E-E'}(\I^{\ell})\prod_{j\in E'}s_j^{\ell}\right).
\end{equation*}

Above all, for any $\ell\in[n]$, 
\begin{equation*}
\sum_{E'\subseteq E}\left(T_{E-E'}(\I^{\ell-1})\prod_{j\in E'}s_j^{\ell-1}\right)\leq \sum_{E'\subseteq E}\left(T_{E-E'}(\I^{\ell})\prod_{j\in E'}s_j^{\ell}\right).
\end{equation*}
And finally, 
\begin{align*}
\sum_{E'\subseteq E}\left(T_{E-E'}(\I)\prod_{j\in E'}s_j\right)=&\sum_{E'\subseteq E}\left(T_{E-E'}(\I^{0})\prod_{j\in E'}s_j^{0}\right)
\\
\leq& \sum_{E'\subseteq E}\left(T_{E-E'}(\I^{n})\prod_{j\in E'}s_j^{n}\right)
\\
=& \sum_{E'\subseteq E}\left(T_{E-E'}(\I')\prod_{j\in E'}s_j'\right)
\end{align*}
\end{proof}

With Lemma~\ref{lm:inter_for_proof_desir_pro}, we can show the smoothness property of $\hat{LS}^{(k)}(\cdot)$ for CQs with self-joins.

\begin{theorem}
\label{th:deseired_properity}
For any CQ and any $\mathbf{I},\mathbf{I}'$ such that $d(\mathbf{I},\mathbf{I}')=1$,
$\hat{LS}^{(k)}(\mathbf{I})\leq \hat{LS}^{(k+1)}(\mathbf{I}')$ for any $k\ge 0$.
\end{theorem}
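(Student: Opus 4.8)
The plan is to reduce the statement, as in the earlier proofs, to the self-join-free case on the logical instances and then invoke Lemma~\ref{lm:inter_for_proof_desir_pro}. Fix $k\ge 0$. Let $\s\in\mathbf{S}^k$ and $i\in P^m$ be a pair attaining the maximum in the definition \eqref{eq:hat_lsk}, so that $\hat{LS}^{(k)}(\I)=\sum_{E\subseteq D_i,\,E\ne\emptyset}\hat{T}_{\bar{E},\s}(\I)$. Since $d(\I,\I')=1$ and neighboring instances can differ only in private physical relations, there is a unique $i_0\in P^m$ with $d(I_{l_{i_0}},I'_{l_{i_0}})=1$, while $\I$ and $\I'$ agree on every other physical relation; equivalently, the logical instances $\{I_j(\x_j)\}_j$ and $\{I'_j(\x_j)\}_j$ differ by exactly one tuple in each $R_j(\x_j)$ with $j\in D_{i_0}$ and agree elsewhere.

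Next I would define the target distance vector $\s'$ for $\I'$ by setting $s'_j=s_j+1$ for $j\in D_{i_0}$ and $s'_j=s_j$ otherwise. Because $i_0\in P^m$ and all logical copies making up the block $D_{i_0}$ are incremented together, $\s'$ respects the block-consistency constraint and moves only private relations, with $\sum_{i'\in P^m}s'_{l_{i'}}=k+1$; hence $\s'\in\mathbf{S}^{k+1}$. It then suffices to prove, for every nonempty $E\subseteq D_i$, the single-term inequality $\hat{T}_{\bar{E},\s}(\I)\le \hat{T}_{\bar{E},\s'}(\I')$: summing over all such $E$ yields $\hat{LS}^{(k)}(\I)\le\sum_{E\subseteq D_i,\,E\ne\emptyset}\hat{T}_{\bar{E},\s'}(\I')\le\hat{LS}^{(k+1)}(\I')$, the last step being just the definition of $\hat{LS}^{(k+1)}$ together with $\s'\in\mathbf{S}^{k+1}$ and $i\in P^m$.

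For the single-term inequality I would expand $\hat{T}_{\bar{E},\s}(\I)=\sum_{E'\subseteq\bar{E}}\big(T_{\bar{E}-E'}(\I)\prod_{j\in E'}s_j\big)$ following \eqref{eq:hat_TES}, and use $T_{\bar{E}-E'}(\I)=T_{\bar{E}-E'}(\{I_j(\x_j)\}_j)$ together with its analogue for $\I'$, so that the claim becomes a statement purely about the self-join-free query on the logical instances. That statement is exactly Lemma~\ref{lm:inter_for_proof_desir_pro}, instantiated with the lemma's ``$E$'' taken to be $\bar{E}$, its ``$B$'' taken to be $D_{i_0}$, and the two distance vectors $\s,\s'$ above; crucially the lemma permits $B$ and $E$ to be arbitrary and unrelated, which we need since $D_{i_0}$ has no a priori relation to $D_i$.

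The argument is essentially bookkeeping once Lemma~\ref{lm:inter_for_proof_desir_pro} is in hand, so I do not anticipate a genuine obstacle; the one place warranting care is the reduction to the logical instances (checking that every $T_E$ term is unchanged) together with verifying that the block-wise increment $\s\mapsto\s'$ produces a legitimate element of $\mathbf{S}^{k+1}$. An alternative that avoids the reduction would be to re-run the telescoping chain $\I^0,\dots,\I^n$ of one-logical-relation-at-a-time changes directly on the self-join instance, but quoting the self-join-free lemma is cleaner and shorter.
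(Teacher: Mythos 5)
Your proposal follows essentially the same route as the paper's proof: identify the changed physical relation $R_{l_{i_0}}$, form $\s'$ by incrementing $s_j$ for all $j\in D_{i_0}$ (verifying $\s'\in\mathbf{S}^{k+1}$), pass to the logical instances, and apply Lemma~\ref{lm:inter_for_proof_desir_pro} term by term. Your explicit remark that the block of the change $D_{i_0}$ need not coincide with the block $D_i$ ranging in the outer max is a point the paper leaves implicit, but the substance of the argument is the same.
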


\begin{proof}
Based on (\ref{eq:hat_lsk}), we have
\[\begin{cases}
\hat{LS}^{(k)}(\mathbf{I})=\max_{\mathbf{s}\in \mathbf{S}^k}\max_{i\in P^m}\sum_{E\subseteq D_i,E\neq \emptyset}\hat{T}_{[n]-E,\s}(\I);
\\
\hat{LS}^{(k+1)}(\mathbf{I}')=\max_{\mathbf{s}\in \mathbf{S}^{k+1}}\max_{i\in P^m}\sum_{E\subseteq D_i,E\neq \emptyset}\hat{T}_{[n]-E,\s}(\I').
\end{cases}\]
Recall formulation of $\hat{E}_{E,\s}(\I)$ in (\ref{eq:hat_TES}). It is sufficient to show, for any $\s\in \mathbf{S}^k$, we can find a $\s'\in\mathbf{S}^{k+1}$ such that for any $E\subseteq[n]$
\begin{equation}
\label{eq:th:deseired_properity_1}
\sum_{E'\subseteq E}\left(T_{E-E'}(\I)\prod_{j\in E'}s_j\right)\leq
\sum_{E'\subseteq E}\left(T_{E-E'}(\I')\prod_{j\in E'}s_j'\right)
\end{equation}

Assume for $\I,\I'$ differs by tuple in relation $R_{l_i},i\in P^m$. That is, for all $j\in D_i, d(I_j(\x_j),I_j'(\x_j))=1$. Given $\s\in \mathbf{S}^k$, we construct a $\s'$ such that
\[{s}_{j}'=\begin{cases}{s}_{j}+1&j\in D_i\\
{s}_{j}&otherwise\end{cases}.\]
It is clear $\s'\in \mathbf{S}^{k+1}$ and we show $\s'$ meets the requirement in (\ref{eq:th:deseired_properity_1}).

We follow the idea in the proof of Lemma~\ref{lm:sensitivity_TE}: regard the $q$ as a self-join-free query on the logical instance and any $\I,\I'$ with $d(\I,\I')=1,d(I_{l_i},I_{l_i}')=1$ correspond to $\{I_j(\x_j)\}_j,\{I_j'(\x_j)\}_i$ with $d(I_j(\x_j),I_j(\x_j))=1$ for all $j\in D_i$. 
Besides, recall, $\s$ and $\s'$ only differ by ${s}_{j}'={s}_{j}+1$ for all $j\in D_i$. 

Finally, we derive, 
\begin{align*}
\sum_{E'\subseteq E}\left(T_{E-E'}(\I)\prod_{j\in E'}s_j\right)=&\sum_{E'\subseteq E}\left(T_{E-E'}(\{I_j(\x_j)\}_j)\prod_{j\in E'}s_j\right)
\\
\leq&\sum_{E'\subseteq E}\left(T_{E-E'}(\{I_j'(\x_j)\}_j)\prod_{j\in E'}s_j'\right)
\\
=&\sum_{E'\subseteq E}\left(T_{E-E'}(\I')\prod_{j\in E'}s_j'\right),
\end{align*}
where the inequality follows from Lemma~\ref{lm:inter_for_proof_desir_pro}.
\end{proof}

\subsection{Computing \texorpdfstring{$RS(\cdot)$}{TEXT}}
\label{sec:rs_cq}

Recall that for any given $k$, $\hat{LS}^{(k)}(\I)$ can be computed in polynomial time since each $T_{\bar E}(\I)$ is an AJAR/FAQ query \cite{joglekar2016ajar,abo2016faq}. The last missing piece is to bound the range of $k$ that one has to consider when computing $RS(\I)$ using \eqref{eq:rs}.  The following lemma implies that we only need to consider $k=0,\dots, \hat{k}=O(1)$ when computing $RS(\I)$.  

\begin{lemma}
\label{lm:max_k}
For any $k\geq  \hat{k} =\frac{m_p}{1-\exp(-\beta/\max_{i\in[m]}n_i)}$, \[e^{-\beta k}\hat{LS}^{(k)}(\I)\leq e^{-\beta (k-1)}\hat{LS}^{(k-1)}(\I).\]
\end{lemma}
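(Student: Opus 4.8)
The plan is to show that once $k$ is large enough, the factor $e^{-\beta k}$ shrinks faster than $\hat{LS}^{(k)}(\cdot)$ grows, so the sequence $e^{-\beta k}\hat{LS}^{(k)}(\I)$ becomes non-increasing. The central task is therefore to bound the growth ratio $\hat{LS}^{(k)}(\I)/\hat{LS}^{(k-1)}(\I)$ from above and compare it to $e^{\beta}$.

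\medskip

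\textbf{Step 1: Reduce to a single $(\s, i, E)$ term.} First I would fix the maximizer $(\s, i, E)$ achieving $\hat{LS}^{(k)}(\I)$ in \eqref{eq:hat_lsk}, where $\s \in \mathbf{S}^k$. The idea is to produce a competitor $\s'' \in \mathbf{S}^{k-1}$ obtained from $\s$ by decreasing the coordinates of one private physical relation group $D_{i'}$ (with $s_{l_{i'}} \ge 1$, which exists since $k \ge 1$) each by one; this lowers the total private distance by exactly $n_{i'} \le \max_i n_i$, so to land exactly in $\mathbf{S}^{k-1}$ one may need to be slightly careful — decreasing $D_{i'}$ by one step lowers $\sum_{i\in P^m} s_{l_i}$ by $1$ (since the sum in $\mathbf{S}^k$ is over the $l_i$-coordinates only, each group contributing $s_{l_i}$), so in fact a single decrement of $s_{l_{i'}}$ by $1$ already gives a vector in $\mathbf{S}^{k-1}$. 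Good — so $\s''$ is just $\s$ with $s_{l_{i'}}$ (and all $s_j$, $j\in D_{i'}$) decreased by $1$.

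\medskip

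\textbf{Step 2: Bound the per-term ratio.} Next I would expand $\hat{T}_{\bar E, \s}(\I) = \sum_{E'\subseteq \bar E}\bigl(T_{\bar E - E'}(\I)\prod_{j\in E'} s_j\bigr)$ using \eqref{eq:hat_TES}, and compare it term-by-term with $\hat{T}_{\bar E, \s''}(\I)$. Each monomial $\prod_{j\in E'} s_j$ either is unchanged (if $E' \cap D_{i'} = \emptyset$) or picks up a factor $\prod_{j\in E'\cap D_{i'}} \frac{s_j}{s_j - 1}$; in the worst case $E' \supseteq D_{i'}$ and each such $s_j = 1$, which is degenerate. I expect the clean way around this is: either $s_{l_{i'}} \ge 2$, in which case $\frac{s_j}{s_j-1} \le 2$ and each of the $|D_{i'}\cap E'| \le n_{i'}$ factors contributes at most... this still does not immediately give $e^\beta$. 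So the sharper approach is the standard one used for residual/smooth sensitivity: show directly that $\hat{LS}^{(k)}(\I) \le \bigl(\frac{k}{k-m_p}\bigr)^{?}\cdot$ something, or more precisely bound $\hat{LS}^{(k)}(\I) \le \left(1 + \frac{n_i}{k/m_p}\right)^{n_i-1}\hat{LS}^{(k-1)}(\I)$-type inequalities by averaging the coordinates. The cleanest route: since $\s \in \mathbf{S}^k$ has $m_p$ free coordinates (one per private physical relation) summing to $k$, some coordinate $s_{l_{i'}} \ge k/m_p$; using that particular $i'$ in Step 1, each surviving factor $\frac{s_j}{s_j-1} \le \frac{k/m_p}{k/m_p - 1} = \frac{k}{k - m_p}$ (this needs $k > m_p$, which holds for $k \ge \hat k$). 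Since at most $n_{i'} - 1 \le \max_i n_i - 1$ of the $s_j$ with $j \in D_{i'}$ can appear in a nonempty $E' \subsetneq \bar E$ while $\bar E \supseteq$ ... — actually $E' \subseteq \bar E = [n]\setminus E$ and $E \subseteq D_i$, so $D_{i'} \subseteq \bar E$ entirely when $i' \ne i$, so up to $n_{i'}$ factors appear; handling $i' = i$ separately or just bounding by $n_{i'}$ factors gives $\hat{LS}^{(k)}(\I) \le \left(\frac{k}{k-m_p}\right)^{\max_i n_i}\hat{LS}^{(k-1)}(\I)$.

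\medskip

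\textbf{Step 3: Compare with $e^\beta$ and solve for $\hat k$.} Finally, the desired inequality $e^{-\beta k}\hat{LS}^{(k)}(\I) \le e^{-\beta(k-1)}\hat{LS}^{(k-1)}(\I)$ is equivalent to $\hat{LS}^{(k)}(\I) \le e^\beta \hat{LS}^{(k-1)}(\I)$, so by Step 2 it suffices to have $\left(\frac{k}{k-m_p}\right)^{\max_i n_i} \le e^\beta$, i.e. $\frac{k}{k-m_p} \le e^{\beta/\max_i n_i}$, i.e. $k\bigl(1 - e^{-\beta/\max_i n_i}\bigr) \ge m_p$, i.e. $k \ge \frac{m_p}{1 - e^{-\beta/\max_i n_i}} = \hat k$. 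This matches the statement exactly. The main obstacle is Step 2: getting the per-step growth ratio down to the sharp $\left(\tfrac{k}{k-m_p}\right)^{\max_i n_i}$ requires choosing the right private relation $i'$ to decrement (the one with the largest coordinate, guaranteeing $s_{l_{i'}} \ge k/m_p$) and carefully counting how many of its $n_{i'}$ logical copies can appear as factors in the expansion of $\hat{T}_{\bar E,\s}$ — I would want to double-check whether it is $n_{i'}$ or $n_{i'}-1$, but since $\max_i n_i$ is an upper bound either way the stated $\hat k$ goes through, possibly with a harmless slack. The degenerate sub-case where the chosen coordinate equals $1$ (only possible if $k \le m_p$, excluded since $k \ge \hat k > m_p$) does not arise.
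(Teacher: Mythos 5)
Your proposal is correct and follows essentially the same route as the paper: fix a maximizing $\s\in\mathbf{S}^k$, pick by pigeonhole the private physical relation $i'$ with $s_{l_{i'}}\ge k/m_p$, decrement that group's coordinate to get $\s'\in\mathbf{S}^{k-1}$, compare $\hat{T}_{\bar E,\s}$ and $\hat{T}_{\bar E,\s'}$ monomial by monomial, and observe that the ratio $\prod_{j\in E'\cap D_{i'}}s_j/s'_j$ is at most $\bigl(\tfrac{k}{k-m_p}\bigr)^{\max_i n_i}\le e^{\beta}$ once $k\ge\hat k$. Your Step~2/3 algebra is in fact slightly cleaner than the paper's final display (which writes the per-factor bound as $\tfrac{1}{1-1/k}$ where it should use $s_{l_{i'}}\ge k/m_p$, i.e.\ $\tfrac{k}{k-m_p}$), and your worry about $n_{i'}$ versus $n_{i'}-1$ is immaterial since $\max_i n_i$ suffices either way.
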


\begin{proof}
We expand $e^{-\beta k}\hat{LS}^{(k)}(\I)$ with (\ref{eq:hat_lsk}), (\ref{eq:hat_TES})
\[e^{-\beta k}\hat{LS}^{(k)}(\I) =\max_{\s\in \mathbf{S}^{k}} \max_{i\in P^m}\sum_{E\subseteq D_i, E\neq \emptyset}\sum_{E'\subseteq [n]-E}\left(e^{-\beta k}T_{[n]-E-E'}\prod_{j\in E'}s_j\right).\]

Now, we show when $k\geq\frac{m_p}{1-\exp(-\beta/\max_{i\in[m]}n_i)}$, for any $\s\in \mathbf{S}^k$ there is an $\s'\in \mathbf{S}^{k-1}$ such that for any $i\in P^m$,
\begin{align}
&\sum_{E\subseteq D_i, E\neq \emptyset}\sum_{E'\subseteq [n]-E}\left(e^{-\beta k}T_{[n]-E-E'}\prod_{j\in E'}s_j\right)
\nonumber
\\
\leq& \sum_{E\subseteq D_i, E\neq \emptyset}\sum_{E'\subseteq [n]-E}\left(e^{-\beta (k-1)}T_{[n]-E-E'}\prod_{j\in E'}s_j'\right).
\label{eq:lm:max_k_1}
\end{align}

Since $\s\in\mathbf{S}^k, k\geq\frac{m_p}{1-\exp(-\beta/\max_{i\in[m]}n_i)}$, there must exist one $i'\in[m]$, such that $s_{l_{i'}}\geq \frac{1}{1-\exp(-\beta/\max_{i\in[m]}n_i)}$. Then, we define the $\s'$ as
\[s'_{j}=\begin{cases}
s_{j}-1, &j\in D_{i'};
\\
s_{j}, & otherwise.
\end{cases}\]

Then, to prove (\ref{eq:lm:max_k_1}), it is sufficient to show for any $E'\subseteq E\subset [n]$,
\begin{equation}
e^{-\beta k}T_{E-E'}(\I)\prod_{j\in E'}s_j\leq e^{-\beta (k-1)}T_{E-E'}(\I)\prod_{j\in E'}s_j'.
\label{eq:lm:max_k_2}
\end{equation}
(\ref{eq:lm:max_k_2}) can be simplified to 
\[\prod_{j\in E'}s_j\leq e^{\beta}\prod_{j\in E'}s_j',\]
which can be further reduced to 
\[\prod_{j\in E'\cap D_{i'}}s_j\leq e^{\beta}\prod_{j\in E'\cap D_{i'}}s_j',\]
since $s_j=s_j'$ for $j\notin D_{i'}$.

Since
\[\prod_{j\in E'\cap D_{i'}}\frac{s_j}{s_j'}\leq \left(\frac{1}{1-\frac{1}{k}}\right)^{\max_{i\in[m]}n_i}\leq (e^{\beta/\max_{i\in[m]}n_i})^{\max_{i\in[m]}n_i}=e^{\beta},\]
we prove the claim.
\end{proof}

\paragraph{Remark}
In actual implementation, we first compute $T_{\bar E}(\I)$ for all $E\subseteq D_i, E\ne \emptyset$.  After that, it  only takes $O(1)$ time to compute $RS(\I)$ using formulas \eqref{eq:hat_lsk}, \eqref{eq:hat_TES}, and \eqref{eq:rs}. Thus, the concrete computational complexity of $RS(\cdot)$ for a CQ $q$ is $O(N^{w_{\max}})$, where $w_{\max}$ is the maximum AJAR/FAQ width \cite{joglekar2016ajar,abo2016faq} of the residual queries of $q$. 
  
%Note that $\hat{LS}^{(k)}(\I)$ is much easier to compute than $LS^{(k)}(\I)$.  The former can be computed in $O(k^{m_P+1})$ time, given the values of the $T_{\bar E}(\I)$'s for all $E\subseteq P^n$, while each $T_{\bar E}(\I)$ can be computed in polynomial time \cite{joglekar2016ajar,abo2016faq}. Note that for any $i\in \bar{P}^n$, $s_i=0$, so the corresponding term in \eqref{eq:hat_TES} is $0$. On the other hand, $LS^{(k)}(\I)$ relies on the local sensitivity of $O(N^k)$ instances\footnote{Even this bound needs a nontrivial argument \cite{dong21:residual}, since one may insert $k$ tuples to $\I$ whose attribute values are drawn from a potentially infinite domain.} in the $k$-neighborhood of $\I$, while there still no polynomial-time algorithms for computing the local sensitivity when the CQ has self-joins.

%The last missing piece is to bound the range of $k$ that one has to consider when computing $RS(\I)$ using \eqref{eq:rs}.  Below we show that it is not only bounded, but the bound is a constant depending only on the query and the privacy parameter $\varepsilon$, not the instance size.  Recall that we set $\beta = \varepsilon/10$ when using the general Cauchy distribution to calibrate noise.

\section{Neighborhood Optimality}
\label{sec:optimality_analysis}

In this section we prove Theorem \ref{th:fullCQ}. This is done in three steps: We first derive a sufficient condition for $SS(\cdot)$ to be an $r$-neighborhood lower bound.  Next, we show that this condition holds for full CQs with $r=O(1)$.  Finally, we show that $RS(\cdot)$ is at most a constant factor larger than $SS(\cdot)$, 

\subsection{General Neighborhood Lower Bounds}
\label{sec:general_lower_bound}
We first develop two general neighborhood lower bounds that hold for arbitrary queries (not necessarily CQs), one based on $LS^{(k)}(\cdot)$ while the other based on $SS(\cdot)$.  These lower bounds hold for an arbitrary query $q$ with vectored outputs.
We start with an observation from \cite{vadhan2017complexity}: 

\begin{lemma}[\cite{vadhan2017complexity}]
\label{lm:ls_origin_bound}
For any $\varepsilon$-DP mechanism $\mathcal{M}'(\cdot)$ and any instance $\I$, there exists an $\I'$ with $d(\I,\I')\leq 1$ such that
\begin{equation*}
\label{eq:ls_origin}
\Pr\left[\left|\mathcal{M}'(\I')-q(\I')\right|\geq \frac{LS(\I)}{2}\right]\geq \frac{1}{1+e^{\varepsilon}}.
\end{equation*}
\end{lemma}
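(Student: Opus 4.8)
The plan is to run the standard two-point (``packing'') argument for DP lower bounds. First I would invoke the definition of local sensitivity to obtain a witness neighbor: since $LS(\I)$ is the supremum of $|q(\I'')-q(\I)|$ over all $\I''$ with $d(\I,\I'')=1$, there is such an $\I''$ with $|q(\I'')-q(\I)|=LS(\I)$ (for CQs this supremum is attained; for a general query one picks $\I''$ with $|q(\I'')-q(\I)|\ge LS(\I)-\delta$ and sends $\delta\to 0$ at the end). Define the open ball $A=\{y:\ |y-q(\I)|<LS(\I)/2\}$ around $q(\I)$. The elementary geometric fact driving the proof is that, by the triangle inequality, no point of $A$ lies within $LS(\I)/2$ of $q(\I'')$: if $y\in A$ then $|y-q(\I'')|\ge |q(\I'')-q(\I)|-|y-q(\I)|>LS(\I)-LS(\I)/2=LS(\I)/2$.

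Next I would split on how much mass $\mathcal{M}'$ places on $A$ when run on $\I$ itself. If $\Pr[\mathcal{M}'(\I)\in A]\le \frac{e^{\varepsilon}}{1+e^{\varepsilon}}$, then $\Pr[\mathcal{M}'(\I)\notin A]\ge \frac{1}{1+e^{\varepsilon}}$; but $y\notin A$ is exactly the event $|y-q(\I)|\ge LS(\I)/2$, so taking $\I':=\I$ (with $d(\I,\I')=0\le 1$) already proves the claim. Otherwise $\Pr[\mathcal{M}'(\I)\in A]>\frac{e^{\varepsilon}}{1+e^{\varepsilon}}$, and applying $\varepsilon$-DP \eqref{eq:DPdef} to the neighboring pair $\I,\I''$ gives $\Pr[\mathcal{M}'(\I'')\in A]\ge e^{-\varepsilon}\Pr[\mathcal{M}'(\I)\in A]>\frac{1}{1+e^{\varepsilon}}$. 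By the geometric fact above, the event $\{\mathcal{M}'(\I'')\in A\}$ is contained in $\{|\mathcal{M}'(\I'')-q(\I'')|\ge LS(\I)/2\}$, so $\I':=\I''$ (with $d(\I,\I')=1\le 1$) proves the claim. Since one of the two cases always holds, the lemma follows.

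I do not expect a genuine obstacle here: the argument is a textbook packing bound (a one-step group-privacy argument). The only points requiring a little care are (i) the limiting argument needed when the supremum defining $LS(\I)$ is not attained, which is irrelevant for the CQ setting but needed for the stated generality over arbitrary queries with vectored outputs, and (ii) keeping the open/closed distinction for $A$ and the strict versus non-strict inequalities consistent across the two cases, so that the threshold $\frac{e^{\varepsilon}}{1+e^{\varepsilon}}$ yields precisely the probability bound $\frac{1}{1+e^{\varepsilon}}$ in both branches.
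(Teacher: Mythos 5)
Your proof is correct and is the standard two-point ``packing'' argument; the paper itself cites this lemma from Vadhan's survey without reproducing a proof, so there is no in-paper argument to compare against. Your case split on the mass $\mathcal{M}'$ places in the open ball $A$ is exactly the usual group-privacy/packing step, and the open-versus-closed bookkeeping is handled so that both branches yield precisely the stated $\tfrac{1}{1+e^\varepsilon}$ bound. The caveat you raise about $LS(\I)$ possibly not being attained is real in full generality (and your $\delta\to 0$ sketch would need the neighbor set to stabilize along a subsequence to close cleanly), but it is immaterial for this paper: counting-query outputs are integers, so whenever $LS(\I)$ is finite it is attained by some one-step neighbor, which is the only case used.
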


This implies that $LS(\cdot)$ is an $1$-neighborhood lower bound, i.e., 
\begin{equation} 
\label{eq:lslb}
\max_{\I': d(\I,\I')\le 1} \mathrm{Err}(\mathcal{M}',\I') \geq \frac{1}{2 \sqrt{1+e^{\varepsilon}}}\cdot LS(\I),
\end{equation}
for any $\I$ and any $\mathcal{M'}$. We generalize this result, showing that $LS^{(r-1)}(\cdot)$ is an $r$-neighborhood lower bound:

\begin{lemma}
\label{lm:lb_lower_bound}
For any $\I$, any $\varepsilon$-DP mechanism $\mathcal{M}'$, and any $r\geq 1$,
\begin{equation}
\label{eq:lsrlb}
\max_{\I': d(\I,\I')\le r} \mathrm{Err}(\mathcal{M}',\I') \ge \frac{1}{2\sqrt{1+e^{\varepsilon}}}\cdot LS^{(r-1)}(\I).
\end{equation}
\end{lemma}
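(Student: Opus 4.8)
The plan is to reduce Lemma~\ref{lm:lb_lower_bound} to the distance-$1$ case (Lemma~\ref{lm:ls_origin_bound}) by moving from $\I$ to a worst-case instance $\hat\I$ at distance $r-1$. Concretely: by the CQ rewriting \eqref{eq:lsk_cq} — or, for general queries, by the standard padding argument — we have $LS^{(r-1)}(\I) = \max_{\I'': d(\I,\I'')\le r-1} LS(\I'')$, so let $\hat\I$ be an instance achieving this maximum, i.e. $d(\I,\hat\I)\le r-1$ and $LS(\hat\I) = LS^{(r-1)}(\I)$. Now apply Lemma~\ref{lm:ls_origin_bound} with $\hat\I$ in place of $\I$: for any $\varepsilon$-DP mechanism $\mathcal{M}'$ there exists $\I'$ with $d(\hat\I,\I')\le 1$ such that
\[
\Pr\!\left[\left|\mathcal{M}'(\I') - q(\I')\right| \ge \frac{LS(\hat\I)}{2}\right] \ge \frac{1}{1+e^\varepsilon}.
\]
By the triangle inequality for $d(\cdot,\cdot)$, $d(\I,\I') \le d(\I,\hat\I) + d(\hat\I,\I') \le (r-1) + 1 = r$, so $\I'$ lies in the $r$-neighborhood of $\I$.

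It remains to convert the tail bound into a bound on $\mathrm{Err}(\mathcal{M}',\I')$. This is the routine step: by Markov's inequality applied to the nonnegative random variable $(\mathcal{M}'(\I') - q(\I'))^2$,
\[
\mathrm{Err}(\mathcal{M}',\I')^2 = \mathsf{E}\!\left[(\mathcal{M}'(\I') - q(\I'))^2\right] \ge \left(\frac{LS(\hat\I)}{2}\right)^2 \cdot \Pr\!\left[\left|\mathcal{M}'(\I') - q(\I')\right| \ge \frac{LS(\hat\I)}{2}\right] \ge \frac{LS(\hat\I)^2}{4(1+e^\varepsilon)}.
\]
Taking square roots and using $LS(\hat\I) = LS^{(r-1)}(\I)$ gives
\[
\max_{\I': d(\I,\I')\le r}\mathrm{Err}(\mathcal{M}',\I') \ge \mathrm{Err}(\mathcal{M}',\I') \ge \frac{1}{2\sqrt{1+e^\varepsilon}}\cdot LS^{(r-1)}(\I),
\]
which is \eqref{eq:lsrlb}.

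There is no real obstacle here; the only thing to be careful about is that the instance $\I'$ produced by Lemma~\ref{lm:ls_origin_bound} genuinely depends on $\mathcal{M}'$, so the quantification order in the statement — ``for any $\I$, any $\mathcal{M}'$, and any $r$, $\max_{\I'}\mathrm{Err}(\mathcal{M}',\I')\ge\cdots$'' — is exactly what the argument delivers: we fix $\I, \mathcal{M}', r$ first, choose $\hat\I$ (which depends only on $\I$ and $r$), then invoke the distance-$1$ lemma to extract $\I'$, and finally take the max over the $r$-neighborhood to absorb the dependence of $\I'$ on $\mathcal{M}'$. One should also note that $\hat\I$ is well-defined: for $r=1$ we have $\hat\I=\I$ and the statement degenerates to \eqref{eq:lslb}, and for $r>1$ the max over the finite-radius neighborhood is attained (or, if one worries about infinite domains, one takes a near-maximizer and lets the slack go to $0$, which does not affect the asymptotic constant).
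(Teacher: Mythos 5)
Your proof is correct and follows essentially the same route as the paper: pick the maximizer $\hat\I$ of $LS(\cdot)$ in the $(r-1)$-ball around $\I$, apply Lemma~\ref{lm:ls_origin_bound} at $\hat\I$, convert the tail bound to an $\ell_2$-error bound, and use the triangle inequality on $d(\cdot,\cdot)$. The only difference is cosmetic — you spell out the Markov step that the paper applies implicitly.
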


\begin{proof}
For any $\I$, we need show that there exists an $\I'$ in its $r$-neighborhood such that 
\[
 \mathrm{Err}(\mathcal{M}',\I') \geq \frac{1}{2\sqrt{1+e^{\varepsilon}}} \cdot LS^{(r-1)}(\I).
\]

For any $r\geq 1$, let
\[\I^* = \argmax_{\I',d(\I,\I')\leq r-1}LS(\I').\]
By Lemma~\ref{lm:ls_origin_bound}, for any $\varepsilon$-DP mechanism $\mathcal{M}'(\cdot)$, there exists an $\I'$ with $d(\I^*,\I')\leq 1$ such that
\[
\mathsf{E}\left[(\mathcal{M}'(\I')-|q(\I')|)^2\right] \geq \frac{1}{1+e^{\varepsilon}}\cdot \left({LS(\I^{*}) \over 2}\right)^2,
\]
so
\[
 \mathrm{Err}(\mathcal{M}',\I') \geq \frac{1}{2\sqrt{1+e^{\varepsilon}}} \cdot LS(\I^*) = \frac{1}{2\sqrt{1+e^{\varepsilon}}} \cdot LS^{(r-1)}(\I).
\]

Since $d(\I, \I^*)\le r-1$ and $d(\I^*,\I')\leq 1$, we have $d(\I,\I')\leq r$, i.e., $\I'$ is in the $r$-neighborhood of $\I$, as desired. 
\end{proof}
Note that \eqref{eq:lslb} is the special case of \eqref{eq:lsrlb} with $r=1$. 

\paragraph{Remark} 
Previously, Asi and Duchi~\cite{asi2020instance} also derive a neighborhood lower bound.  Here, we show that our lower bound is always no worse than theirs for $\varepsilon=O(1)$, while can be polynomially better for certain queries.  Furthermore, their lower bound requires a technical condition on $q$ while our lower bound holds for an arbitrary $q$. 

The $r$-neighborhood lower bound by Asi and Duchi (Lemma C.1 in \cite{asi2020instance}), when specialized to the 1-dimensional case, is as follows. Given a query $q$, for any $k$ and any instance $\I$, define 
\begin{equation*}
\label{eq:omega_I_r}
\omega(\I,r) := \max_{\I',d(\I,\I')\leq k}|q(\I)-q(\I')|.
\end{equation*}
If $\{ q(\I') : d(\I,\I') \le k\}$ contains an interval of length $c \cdot \omega(\I, k)$ for some $c>0$ and all $k\le r$, then 
\begin{equation}
\label{eq:DuchiLB}\max_{\I': d(\I,\I')\le r} \mathrm{Err}(\mathcal{M}',\I') \ge {c \over 16} \max_{k\leq r}e^{-\varepsilon k}\omega(\I,k).
\end{equation}
%Note that the condition on $q$, if satisfied, must have $c\le 1$. 
Our lower bound, which does not require any condition on $q$, is
\begin{equation}
\label{eq:ourLB}
\max_{\I': d(\I,\I')\le r} \mathrm{Err}(\mathcal{M}',\I') \ge {1 \over 2\sqrt{1+e^\varepsilon}}\cdot LS^{(r-1)}(\I).
\end{equation}

Next we compare \eqref{eq:DuchiLB} and \eqref{eq:ourLB}. 
By the definition of $LS^{(k)}(\I)$ and $\omega(\I,k)$, we have
\begin{equation*}
\label{eq:compare_lower_bound_q1}
\omega(\I,k)\leq \sum_{0\leq k'\leq k-1}LS^{(k')}(\I)\leq k\cdot LS^{(k-1)}(\I).
\end{equation*}
Then, 
\[\eqref{eq:DuchiLB} \le {c \over 16} \max_{k\leq r} (e^{-\varepsilon k} \cdot k\cdot LS^{(k-1)}(\I)) \le  {c \over 16} \cdot \max_{k\leq r} (ke^{-\varepsilon k}) \cdot LS^{(r-1)}(\I),  \]
which is asymptotically upper bounded by \eqref{eq:ourLB} as long as 
\[k e^{-\varepsilon k} \le O\left({1 \over \sqrt{1+e^\varepsilon}}\right),\]
which is true when $\varepsilon=O(1)$. 

On the other hand, the gap between \eqref{eq:DuchiLB} and \eqref{eq:ourLB} can be $\mathrm{poly}(N)$.  Consider the {\sc Median} query with a constant $\varepsilon$. Let $\I$ consist of $\log N$ copies of $0.5$, while the remaining entries are half $0$ and half $1$.  For any $r\ge \log N$, our lower bound \eqref{eq:ourLB}  is $LS^{(r)}(\I)=0.5$, while their lower bound \eqref{eq:DuchiLB} is
\[\max_{k\leq r}e^{-\varepsilon k}\omega(\I,k) \leq e^{-\varepsilon\log N} \cdot 0.5 = 1/N^{\Omega(1)}.\]

Nevertheless, Lemma C.1 in \cite{asi2020instance} yields better lower bounds for high-dimensional queries.

To show that $SS(\cdot)$ is an $r$-neighborhood lower bound, we need a condition on $LS^{(k)}(\cdot)$, that they do not grow more than exponentially quickly when $k\ge r$. 

\begin{lemma}
\label{lm:sufficient_condition_ss_neighborhood_optimal}
Given any $\varepsilon, \beta>0$ and any instance $\I$, if for some $r\ge 1$ (possibly depending on $\beta$ and $\I$),
\begin{equation}
\label{eq:lm:sufficient_condition_ss_neighborhood_optimal}
LS^{(k)}(\I)\leq e^{\beta k}  LS^{(r-1)}(\I),
\end{equation}
 for any $k\geq r$, for any $\varepsilon$-DP mechanism $\mathcal{M}'$,
\[ \max_{\I': d(\I,\I')\le r} \mathrm{Err}(\mathcal{M}',\I') \ge \frac{1}{2 \sqrt{1+e^{\varepsilon}}}\cdot SS(\I). \]
\end{lemma}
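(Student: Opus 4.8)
The plan is to combine the $r$-neighborhood lower bound of Lemma~\ref{lm:lb_lower_bound}, which gives $\max_{\I': d(\I,\I')\le r} \mathrm{Err}(\mathcal{M}',\I') \ge \frac{1}{2\sqrt{1+e^\varepsilon}} LS^{(r-1)}(\I)$, with the hypothesis \eqref{eq:lm:sufficient_condition_ss_neighborhood_optimal} to show that $LS^{(r-1)}(\I) \ge SS(\I)$. Once that inequality is established, the claimed bound follows immediately. So the entire work reduces to proving $SS(\I) \le LS^{(r-1)}(\I)$ under the stated growth condition.

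To see this, recall from \eqref{eq:ss} that $SS(\I) = \max_{k\ge 0} e^{-\beta k} LS^{(k)}(\I)$. I would split the maximum into the range $0 \le k \le r-1$ and the range $k \ge r$. For the first range, monotonicity of $LS^{(k)}(\cdot)$ in $k$ (immediate from definition \eqref{eq:lsk}, since a larger neighborhood can only increase the max) gives $e^{-\beta k} LS^{(k)}(\I) \le LS^{(k)}(\I) \le LS^{(r-1)}(\I)$ for all such $k$, using $e^{-\beta k}\le 1$. For the second range, the hypothesis \eqref{eq:lm:sufficient_condition_ss_neighborhood_optimal} states exactly that $LS^{(k)}(\I) \le e^{\beta k} LS^{(r-1)}(\I)$ for $k\ge r$, hence $e^{-\beta k} LS^{(k)}(\I) \le LS^{(r-1)}(\I)$. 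Taking the max over both ranges yields $SS(\I) \le LS^{(r-1)}(\I)$, as desired.

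Combining, for any $\varepsilon$-DP mechanism $\mathcal{M}'$,
\[
\max_{\I': d(\I,\I')\le r} \mathrm{Err}(\mathcal{M}',\I') \;\ge\; \frac{1}{2\sqrt{1+e^\varepsilon}}\cdot LS^{(r-1)}(\I) \;\ge\; \frac{1}{2\sqrt{1+e^\varepsilon}}\cdot SS(\I),
\]
which is precisely the statement.

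There is no real obstacle here: the lemma is essentially a bookkeeping step that packages Lemma~\ref{lm:lb_lower_bound} together with the definitional growth condition. The only mild subtlety is making sure the $k=0,\dots,r-1$ part of the $SS$ maximum is controlled, which is handled by the trivial monotonicity of $LS^{(k)}$ together with $e^{-\beta k}\le 1$; the $k\ge r$ part is controlled by assumption. The genuine content of the optimality argument lies elsewhere — namely in verifying that condition \eqref{eq:lm:sufficient_condition_ss_neighborhood_optimal} actually holds for full CQs with $r = O(1)$, and in the upper bound relating $RS(\cdot)$ to $SS(\cdot)$ — but those are separate results and not needed for this lemma.
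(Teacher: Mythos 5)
Your proof is correct and is essentially the same as the paper's: both split the maximum defining $SS(\I)$ at $k=r$, bounding the $k<r$ part by monotonicity of $LS^{(k)}(\cdot)$ and $e^{-\beta k}\le 1$, bounding the $k\ge r$ part by the hypothesis, and then invoking Lemma~\ref{lm:lb_lower_bound}.
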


\begin{proof}
Consider any $\I$. When $k<r$,
\[
e^{-\beta k}LS^{(k)}(\I)
\leq LS^{(k)}(\I) \leq LS^{(r-1)}(\I),
\]
where the second inequality follows from the definition of $LS^{(k)}(\I)$ in (\ref{eq:lsk}).

When $k\geq r$, from the premise of the lemma, we have
\begin{equation*}
e^{-\beta k}LS^{(k)}(\I)\leq e^{-\beta k}e^{\beta k} LS^{(r-1)}(\I)=LS^{(r-1)}(\I).
\end{equation*}

Therefore, for any $k\geq 0$, $e^{-\beta k}LS^{(k)}(\I)\leq LS^{(r-1)}(\I)$. So
\begin{equation}
\label{eq:lm:sufficient_condition_ss_neighborhood_optimal_2}
SS(\I) = \max_{ k\geq 0}e^{-\beta k}LS^{(k)}(\I)\leq LS^{(r-1)}(\I).
\end{equation}

Finally, combine (\ref{eq:lm:sufficient_condition_ss_neighborhood_optimal_2}) and Lemma~\ref{lm:lb_lower_bound}, we prove the lemma.
\end{proof}

\paragraph{Remark 1} Recall from Section~\ref{sec:dp_mechanisms} that $\beta$ and $\varepsilon$ are just a constant-factor apart, so $\beta$ is also a constant if $\varepsilon$ is considered a constant.  

\paragraph{Remark 2} The restriction of the growth rate is very mild, except that it also forbids $LS^{(k)}(\cdot)$ to go from zero to nonzero.  This is why we impose this restriction only for $k\ge r$.  For certain problems like {\sc Median}, this requires a large $r$, which is actually unavoidable since a large flat neighborhood will rule out $r$-neighborhood optimal mechanisms for small $r$ anyway, as we argued in Section~\ref{sec:intro}.

\medskip 
Before considering CQs, as a warm-up, we apply Lemma~\ref{lm:sufficient_condition_ss_neighborhood_optimal} to the triangle counting problem.  Here, the instance $\I$ is a simple graph (i.e., no self-loops and multi-edges), and the query $q$ returns the number of triangles in $\I$.  

\begin{lemma}
\label{lm:optimality_ss_in_triangle_counting_problem}
For the triangle counting problem, the condition in Lemma \ref{lm:sufficient_condition_ss_neighborhood_optimal} holds with $r=\max\left\{3,\left \lceil 4\frac{\ln(1/\beta)}{\beta}\right \rceil\right\}$. 
\end{lemma}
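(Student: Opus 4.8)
The plan is to verify the hypothesis \eqref{eq:lm:sufficient_condition_ss_neighborhood_optimal} of Lemma~\ref{lm:sufficient_condition_ss_neighborhood_optimal} directly for triangle counting, namely that $LS^{(k)}(\I)\le e^{\beta k}LS^{(r-1)}(\I)$ for all $k\ge r$, with the stated $r$. The first step is to get a clean handle on $LS^{(k)}(\I)$ for a simple graph. By Lemma~\ref{lm:ls_self_join_free_query} (applied to the triangle query, which has three atoms over one physical relation, so one has to be slightly careful and use Theorem~\ref{th:ls_self_join_query}), the local sensitivity of triangle counting at a graph $\I$ is the maximum, over all edges one could insert/delete, of the number of triangles through that edge, which is just the maximum codegree $\Delta_2(\I) = \max_{u\ne v} |N(u)\cap N(v)|$ (plus lower-order terms from the self-join structure that are dominated). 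Then $LS^{(k)}(\I)\le \Delta_2(\I)+k$, since adding $k$ edges can raise the max codegree of a pair by at most $k$ (each new edge incident to both endpoints of the witnessing pair contributes $1$). More crudely, $LS^{(k)}(\I)\le LS^{(0)}(\I)+k \le LS^{(r-1)}(\I)+k$ because $LS^{(r-1)}(\I)\ge LS^{(0)}(\I)$ by monotonicity of $LS^{(k)}$.

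The second step is the elementary inequality: I need $LS^{(r-1)}(\I)+k\le e^{\beta k}LS^{(r-1)}(\I)$ for all $k\ge r$, i.e. $k\le (e^{\beta k}-1)LS^{(r-1)}(\I)$. Here I use the crucial observation that for the triangle query, a flat neighborhood cannot persist: within distance $3$ one can always create a triangle (add three edges forming one), so $LS^{(k)}(\I)\ge 1$ once $k\ge 3$; and in fact $LS^{(r-1)}(\I)\ge 1$ since $r\ge 3$. Hence it suffices to show $k\le e^{\beta k}-1$ for all $k\ge r$ with $r=\max\{3,\lceil 4\ln(1/\beta)/\beta\rceil\}$. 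For $\beta\ge 1$ this is immediate. For $\beta<1$, one checks that $e^{\beta k}\ge 2k$ (say) for $k\ge 4\ln(1/\beta)/\beta$: taking logs, this is $\beta k\ge \ln 2+\ln k$, and plugging the threshold value $k=4\ln(1/\beta)/\beta$ gives $\beta k = 4\ln(1/\beta)$, which dominates $\ln 2 + \ln(4\ln(1/\beta)/\beta) = \ln 2+\ln 4+\ln\ln(1/\beta)+\ln(1/\beta)$ for small $\beta$; and since $\beta k - \ln k$ is increasing in $k$ on this range, the inequality propagates to all larger $k$. This yields $e^{\beta k}-1\ge 2k-1\ge k$, closing the argument.

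The one genuinely delicate point — and the step I expect to be the main obstacle — is pinning down the constant $4$ in $\lceil 4\ln(1/\beta)/\beta\rceil$ and making the ``$\beta k-\ln k$ is increasing'' monotonicity argument fully rigorous across the regime $\beta\in(0,1)$, because the bound $\ln(4\ln(1/\beta)/\beta)$ is itself of order $\ln(1/\beta)$ and one needs the factor in front of $\ln(1/\beta)$ to be strictly larger than $1$ with enough slack to absorb the additive constants and the $\ln\ln(1/\beta)$ term uniformly; a factor of $4$ is comfortably safe but verifying it requires care with the case $\beta$ close to $1$, where $\ln(1/\beta)\to 0$ and the $\lceil\cdot\rceil$ together with $r\ge 3$ is what saves the bound. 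Everything else (the structural fact that $LS$ of triangle counting is essentially the max codegree, that $LS^{(k)}$ grows by at most $1$ per added edge, and that $LS^{(r-1)}(\I)\ge 1$) is routine given Theorem~\ref{th:ls_self_join_query} and the no-flat-neighborhood observation from Section~\ref{sec:intro}.
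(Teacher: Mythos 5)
Your proposal is correct and takes essentially the same approach as the paper: upper bound $LS^{(k)}(\I)$ by (max codegree) $+\,k$, lower bound $LS^{(r-1)}(\I)$ away from zero, and then show that $e^{\beta k}$ dominates the linear growth once $k\ge r$ with $r=\max\{3,\lceil 4\ln(1/\beta)/\beta\rceil\}$. The only cosmetic difference is that the paper invokes the explicit Nissim-style lower bound $LS^{(k)}(\I)\ge \max_{i,j}a_{i,j}(\I)+\frac{k-1}{2}$ and factors the multiplicative bound through $\Delta_2 e^{\beta k}\ge\Delta_2$, whereas you replace this by the monotonicity $\Delta_2=LS^{(0)}\le LS^{(r-1)}$ together with the no-flat-neighborhood observation $LS^{(r-1)}(\I)\ge 1$ (reducing to the scalar inequality $e^{\beta k}\ge k+1$); the resulting elementary inequality and the choice of constant are the same. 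One tiny imprecision: you state $LS^{(k)}(\I)\ge 1$ ``once $k\ge 3$,'' but you apply it at $k=r-1$ which can equal $2$; this is harmless because two edge insertions already create a length-$2$ path whose flip sensitivity is $1$, so $LS^{(2)}(\I)\ge 1$, which is what both you and the paper's $\frac{r-1}{2}\ge 1$ (with $r\ge 3$) ultimately rely on.
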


\begin{proof}
Let $V$ be the domain of vertices.  For $i,j\in V$, let $x_{i,j}(\I)=1$ if there is an edge between vertex $i$ and $j$ in $\I$, and $0$ otherwise. Then the degree of vertex $i$ in $\I$ is
\[d_i(\I) = \sum_{j\in V}x_{i,j}(\I),\] and the number of common neighbors of vertices $i,j$ is
\[a_{i,j}(\I) = \sum_{k\in V}x_{i,k}(\I)\cdot x_{j,k}(\I).\]

An exact formula for $LS^{(k)}(\I)$, hence $SS(\I)$, is given in \cite{nissim2007smooth} for the triangle counting problem, but we only need the following upper and lower bound on $LS^{(k)}$:
\begin{equation}
\nonumber
\label{eq:lm:optimality_ss_in_triangle_counting_problem_1}
\max_{i,j\in V}a_{i,j}(\I)+\frac{k-1}{2} \le LS^{(k)}(\I) \leq  \max_{i,j\in V}a_{i,j}(\I)+k.
\end{equation}

We will show that by setting $r=\max\left\{3,\left \lceil 4\frac{\ln(1/\beta)}{\beta}\right \rceil\right\}$, (\ref{eq:lm:sufficient_condition_ss_neighborhood_optimal}) holds for any $\I$ and $k\ge r$. Thus $SS(\cdot)$ is $O(1)$-neighborhood optimal.

For $k\geq r$, we have
\begin{align*}
LS^{(r-1)}(\I)e^{\beta k} \geq & \max_{i,j\in V}a_{i,j}(\I)e^{\beta k}+\frac{r-1}{2}e^{\beta k}
\\
\geq & \max_{i,j\in V}a_{i,j}(\I)+e^{\beta k}
\\
\geq & \max_{i,j\in V}a_{i,j}(\I)+k
\\
\geq & LS^{(k)}(\I).
\end{align*}
The first equality is by the lower bound on $LS^{(k)}(\I)$; the second inequality is because $e^{\beta k}\geq 1$ and $r\geq 3$; the third inequality $e^{\beta k} \ge k$ follows from $k\ge r\ge  4\frac{\ln(1/\beta)}{\beta}$ and some simple algebra; the last inequality is by the upper bound on $LS^{(k)}(\I)$. 
\end{proof}

Note that the $r$ needed in the lemma above is independent of $\I$.  Thus, $SS(\cdot)$ is an $O(1)$-neighborhood lower bound for the triangle counting problem, i.e., the previous $SS$-based DP-mechanism for triangle counting \cite{nissim2007smooth} is $O(1)$-neighborhood optimal. This is the first optimality guarantee for triangle counting under DP, while our main optimality result is a vast generalization of this. 

\subsection{Neighborhood Lower Bound for CQs}
\label{sec:optimality_ss}

To show that $SS(\cdot)$ is an $O(1)$-neighborhood lower bound for CQs, we need to show that the condition in Lemma~\ref{lm:sufficient_condition_ss_neighborhood_optimal} holds  with some constant $r$.  This requires an upper bound on $LS^{(k)}(\cdot)$, as well as a lower bound on $LS^{(r-1)}(\cdot)$. For the upper bound on $LS^{(k)}(\cdot)$, we use the $\hat{LS}^{(k)}(\cdot)$ developed in Section~\ref{sec:hat_lsk}.  For the lower bound, we first consider the case $r=n_P$. Recall that $n_P=|P^n|$ is the number of private logical relations.

\begin{lemma}
\label{lm:compare_TE_LSn}
For any CQ, any instance $\I$, and any $E\subseteq P^n$, $E\neq \emptyset$, we have $LS^{(n_P-1)}(\I)\geq T_{\bar{E}}(\I)$.
\end{lemma}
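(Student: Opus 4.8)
The plan is to exhibit, for any given $E \subseteq P^n$ with $E \neq \emptyset$, a concrete instance $\I'$ with $d(\I,\I') \le n_P - 1$ such that $LS(\I') \ge T_{\bar E}(\I)$; combined with the definition \eqref{eq:lsk_cq} of $LS^{(n_P-1)}(\cdot)$, this gives the claim. The natural candidate is obtained by picking a witness tuple $t = t_{\bar E}(\I)$ realizing $T_{\bar E}(\I) = |q_{\bar E}(\I) \Join t|$ over the boundary $\partial q_{\bar E}$, and then modifying $\I$ on the relations indexed by $E$ so that, after the modification, a single further change to one relation in $\bar E$ creates $T_{\bar E}(\I)$ new query results at once. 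Concretely, since $E \subseteq P^n$ and $|E| \le n_P$, one should be able to "fill in" the $E$-side of the join to agree with $t$ on $\partial q_{\bar E}$ using at most $|E| \le n_P$ tuple changes; but because we are allowed only $n_P - 1$ changes, we want to leave exactly one atom of $E$ one tuple short, so that inserting that last tuple (the single change measured by $LS(\cdot)$) completes all $T_{\bar E}(\I)$ joins simultaneously.

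The steps, in order: (1) Fix $t = t_{\bar E}(\I)$ and note $|q_{\bar E}(\I) \Join t| = T_{\bar E}(\I)$. (2) Construct $\I'$ from $\I$: for each $i \in E$ except one distinguished $i_0 \in E$, replace $I_i(\x_i)$ by a single tuple (or the appropriate small set of tuples) that matches $t$ on the shared boundary variables and is chosen generically on the private-side variables so that the residual join on $E \setminus \{i_0\}$, joined with $q_{\bar E}(\I) \Join t$, has exactly $T_{\bar E}(\I)$ tuples; set $I_{i_0}(\x_{i_0})$ to $\emptyset$ (or whatever makes $q(\I') = \emptyset$-ish but one tuple away from producing all joins). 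This costs at most $n_P - 1$ changes to private relations, so $d(\I,\I') \le n_P - 1$; here one uses that relations in $\bar P^n$ are untouched, and Lemma~\ref{lm:simple_lemma} to control how $T_{\bar E}$ behaves under these edits. (3) Observe that a single insertion into $I_{i_0}$ — the tuple matching $t$ and the generic values chosen in step (2) — adds exactly $T_{\bar E}(\I)$ tuples to $q(\I')$, hence $LS(\I') \ge T_{\bar E}(\I)$. (4) Conclude via \eqref{eq:lsk_cq}.

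There are two subtleties to handle carefully, and the second is the main obstacle. The first is self-joins: if some atoms of $E$ share a physical relation with atoms of $\bar E$, then modifying the $E$-side logical relation also modifies the corresponding $\bar E$-side logical relation, which could change $T_{\bar E}(\I)$ and break the bookkeeping; this is exactly why the lemma restricts to $E \subseteq P^n$ and why counting is done in logical-relation units $n_P$ — I expect the argument to go through by treating $q$ as self-join-free on the logical instance (as done throughout Section~\ref{sec:rs}) and then checking that the $n_P - 1$ budget, measured on logical relations, suffices even though physical changes may be fewer. The second and harder issue is ensuring the generic choices in step (2) genuinely yield $T_{\bar E}(\I)$ distinct completed join tuples without collisions, i.e., that padding the $E$-atoms with fresh constants on the non-boundary variables does not accidentally merge or kill query results, and that the single atom left empty really is the unique bottleneck. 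This requires a clean combinatorial argument that the join $q_{E \setminus \{i_0\}} \Join (q_{\bar E}(\I) \Join t)$, with $E$-atoms instantiated by singletons on fresh values, is in bijection with $q_{\bar E}(\I) \Join t$; I would prove this by induction on $|E|$, peeling off one atom of $E$ at a time and using that each newly added singleton atom shares with the rest only boundary variables already pinned to $t$, so it contributes a multiplicative factor of $1$.
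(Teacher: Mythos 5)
The core construction in your step (2) does not stay within the distance budget, and this is a genuine gap. You propose to \emph{replace} each $I_i(\x_i)$ for $i\in E\setminus\{i_0\}$ by a singleton and to \emph{empty out} $I_{i_0}(\x_{i_0})$. But $d(\I,\I')$ counts tuple-level edits, and the relations in $\I$ can be arbitrarily large; emptying a relation of size $M$ already costs $M$ deletions. Your $\I'$ would therefore be at distance $\Theta(N)$ from $\I$, not distance at most $n_P-1$, so the conclusion via \eqref{eq:lsk_cq} does not follow. The correct construction must keep $\I$ essentially intact and make only $O(1)$ changes: the paper fixes a $t'\in\dom(\cup_{i\in E}\x_i)$ with $\pi_{\partial q_E}t'=t_{\bar E}(\I)$, picks one $j_E\in E$, and \emph{only inserts} $\pi_{\x_j}t'$ into $I_j$ for each $j\in E\setminus\{j_E\}$ (skipping ones already present). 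That is at most $|E|-1\le n_P-1$ insertions, and, crucially, nothing is deleted, so by Lemma~\ref{lm:simple_lemma} all the $\bar E$-side structure that realizes $T_{\bar E}(\I)$ survives into $\I'$. Then flipping the single tuple $\pi_{\x_{j_E}}t'$ in $I_{j_E}$ changes $|q|$ by at least $T_{\bar E}(\I)$, because the join factors in $E$ are all pinned down to $t'$, and the remaining factor is exactly $q_{\bar E}(\I)\ltimes t_{\bar E}(\I)$.

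Two of your secondary concerns also dissolve once the construction is insertion-only. First, you worry about collisions and needing the newly created join results to be in bijection with $q_{\bar E}(\I)\Join t$, i.e.\ aiming for an exact count; but the lemma only needs $LS(\I')\ge T_{\bar E}(\I)$, which follows from monotonicity of the join under the containment $I'_j(\x_j)\supseteq I_j(\x_j)$ for every $j$ (including any $\bar E$-side logical copies that get enlarged through a self-join). No fresh-constant bookkeeping or induction on $|E|$ is needed. Second, your self-join concern — that editing an $E$-side logical relation may drag along a $\bar E$-side logical copy of the same physical relation — is real, but it is handled precisely because the edits are insertions: enlarging a $\bar E$-side copy can only increase the relevant join size, so the inequality goes the right way. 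Replacing by singletons or emptying relations, as in your plan, would shrink the $\bar E$-side copies and could destroy the very join results you are trying to create, which is another reason that route fails.
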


\begin{proof}
We will construct an $\I'$ from $\I$ such that $d(\I,\I')\leq n_P-1$ and $LS(\I')\geq T_{\bar{E}}(\I)$. Recall $t_{\bar{E}}(\I)$ is the witness of $T_{\bar{E}}(\I)$. Then, we write $T_{\bar{E}}(\I)$ as
\[T_{\bar{E}}(\I)=|\Join_{i\in {\bar{E}}}(I_i(\x_i)\ltimes t_{\bar{E}}(\I))|.\]

We construct the $\I'$ as follows.
First, we fix $j_E$ and $i_E$ such that $j_E\in E$ and $j_E\in D_{i_E}$. 
Then, we find a tuple $t'\in \dom{(\cup_{i\in E}\x_i)}$ such that $\pi_{\partial q_{{E}}}t'=t_{\bar{E}}(\I)$. 
Next, for each $j\in E-\{j_E\}$, we add $\pi_{\x_j}t'$ to $I_j$ unless it already exists in $I_j$. 
Here, we at most add $|E|-1$ tuples. 
Since $E\subseteq P^n$, $d(\I,\I')\leq n_P-1$ and $LS(\I')\leq LS^{(n_P-1)}(\I)$. 
Besides, we can ensure for each $j\in E-\{j_E\}$, $I_j'(\x_j)$ contains $\pi_{\x_j}t'$.

Therefore, it suffices to show $LS(\I')\geq T_{\bar{E}}(\I)$.
To do that, we will show by flipping the tuple $t_{j_E}=\pi_{\x_{j_E}}t'$ in $I'_{j_E}$, $q(\I')$ will change by at least $T_{\bar{E}}(\I)$. 
Suppose $t_{j_E}\notin I'_{j_E}$. 
The number of tuples changes by
\begin{align}
&|(\Join_{j\in D_{i_E}}(I_j'(\x_j)\cup t_{j_E}))\Join(\Join_{j\in [n]-D_{i_E}}I_j'(\x_j))
\nonumber
\\
&-(\Join_{j\in [n]}I_j'(\x_j))|
\nonumber
\\
\geq& |(I_{j_E}'(\x_j)\cup t_{j_E})\Join(\Join_{j\in [n]-j_E}I_j'(\x_j))
\nonumber
\\
&-(\Join_{j\in [n]}I_j'(\x_j))|
\nonumber
\\
=&|t_{j_E}\Join (\Join_{j\in [n]-j_E}I_j'(\x_j))|
\nonumber
\\
=&|t_{j_E}\Join (\Join_{j\in E-j_E}I_j'(\x_j))\Join (\Join_{[n]-E}I_j'(\x_j))|
\nonumber
\\
\geq &|\pi_{\x_{j_E}}t' \Join (\Join_{j\in E-j_E}\pi_{\x_j}t')\Join (\Join_{j\in\bar{E}}I_j(\x_j))|
\label{eq:lm:compare_TE_LSn_1}
\\
= &|t'\Join (\Join_{\bar{E}}I_j(\x_j))|
\nonumber
\\
= &|\Join_{i\in {\bar{E}}}(I_i(\x_i)\ltimes t_{\bar{E}}(\I))|=T_{\bar{E}}(\I)
\nonumber
\end{align}
The (\ref{eq:lm:compare_TE_LSn_1}) is because for each $j\in E-\{j_E\}$, $I_j'(\x_j)$ contains $\pi_{\x_j}t'$ and for $j\in[n]-E$, $I_j(\x_j)=I_j'(\x_j)$.
For the case $t_{j_E}\in I'_i$, we can draw a similar conclusion.
\end{proof}

Next, recall from equations (\ref{eq:hat_lsk}) and (\ref{eq:hat_TES}) that $\hat{LS}^{(k)}(\I)$ is also defined in terms of the $T_{\bar E}(\I)$'s.  Together with Lemma~\ref{lm:compare_TE_LSn}, this allows us to build a connection between $\hat{LS}^{(k)}(\I)$ and $LS^{(n_P-1)}(\I)$:

\begin{lemma}
\label{lm:hat_lsk_bound}
For any CQ, any instance $\I$, and any $k\ge 1$, we have
\[\hat{LS}^{(k)}(\I)\leq (4k)^{n_P-1}LS^{(n_P-1)}(\I).\]
\end{lemma}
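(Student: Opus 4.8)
The plan is to unfold the definition of $\hat{LS}^{(k)}(\I)$ from \eqref{eq:hat_lsk} and \eqref{eq:hat_TES}, and bound each constituent $T_{\bar{E}}(\I)$ term against $LS^{(n_P-1)}(\I)$ using Lemma~\ref{lm:compare_TE_LSn}, while absorbing the distance-vector products $\prod_{j\in E'} s_j$ into the $(4k)^{n_P-1}$ factor through a counting argument on subsets. Concretely, recall that
\[
\hat{LS}^{(k)}(\I) = \max_{\s\in\mathbf{S}^k}\max_{i\in P^m}\sum_{E\subseteq D_i, E\neq\emptyset}\sum_{E'\subseteq [n]-E}\left(T_{[n]-E-E'}(\I)\prod_{j\in E'} s_j\right).
\]
Fix a maximizing $\s\in\mathbf{S}^k$ and a maximizing $i\in P^m$. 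For each pair $(E,E')$ with $\emptyset\neq E\subseteq D_i$ and $E'\subseteq[n]-E$, the set $F := E\cup E'$ is a nonempty subset of $[n]$ that intersects $D_i$ (hence intersects $P^n$), and $[n]-E-E' = \bar F$. So by Lemma~\ref{lm:compare_TE_LSn} (applied with the set $F\cap P^n$, noting that removing the non-private part of $F$ only increases $T_{\bar{(\cdot)}}$ by Lemma~\ref{lm:simple_lemma}) we get $T_{\bar F}(\I)\le LS^{(n_P-1)}(\I)$ — I need to double-check that $F\cap P^n$ is nonempty, which holds because $E\subseteq D_i\subseteq P^n$ and $E\neq\emptyset$. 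The remaining task is to show that
\[
\sum_{E\subseteq D_i, E\neq\emptyset}\sum_{E'\subseteq[n]-E}\prod_{j\in E'} s_j \le (4k)^{n_P-1},
\]
after which the lemma follows immediately.

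For the combinatorial bound, first observe $\sum_{j\in P^n} s_j = n_i \cdot k \le$ (a query-dependent constant)$\cdot k$; more carefully, since data complexity suppresses query size, I expect the intended reading is that each $s_{l_i}$ contributes and $\sum_{i\in P^m} s_{l_i} = k$, so for the fixed block $D_i$ we have $s_j = s_{l_i}\le k$ for all $j\in D_i$, and in general $s_j\le k$ for all $j\in P^n$, with the total over distinct physical relations equal to $k$. The key estimate is $\sum_{E'\subseteq[n]-E}\prod_{j\in E'} s_j = \prod_{j\in[n]-E}(1+s_j)$. Splitting the product over private versus public coordinates (the latter have $s_j=0$, contributing a factor $1$), this equals $\prod_{j\in P^n\setminus E}(1+s_j)$. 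Using $1+s_j\le 2^{s_j}$ when... no — better: bound $\prod_{j\in P^n}(1+s_j)$ by AM-GM or by the crude estimate that a product of $n_P$ terms each at most $1+k\le 2k$ is at most $(2k)^{n_P}$; combined with the outer sum over the at most $2^{n_i}\le 2^{n_P}$ choices of $E\subseteq D_i$, and pulling out one factor because $E$ is nonempty (so the product ranges over at most $n_P-1$ coordinates, giving $(2k)^{n_P-1}$), the whole double sum is at most $2^{n_P}\cdot(2k)^{n_P-1}$. This is not quite $(4k)^{n_P-1}$, so I would instead be more careful: since $E\subseteq D_i$ and $E\neq\emptyset$, the coordinates in $[n]-E$ that are private and have $s_j>0$ number at most $n_P-1$ (we lose at least one coordinate $j_E\in E$), and each such factor $1+s_j\le 2\cdot 2^{?}$... the cleanest route is $\prod (1+s_j)\le \prod 2^{s_j}$ is false for $s_j=0$ giving $1=1$ fine, but for $s_j\ge 1$, $1+s_j\le 2^{s_j}$ holds; so $\prod_{j}(1+s_j)\le 2^{\sum s_j}\le 2^{n_i k}$, which is exponential in $k$ — too weak. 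So the right bound really is the polynomial one: $\prod_{j\in P^n\setminus E}(1+s_j)$ is a product of at most $n_P-1$ factors each $\le 1+k$, hence $\le (1+k)^{n_P-1}\le (2k)^{n_P-1}$ for $k\ge1$; summing over the $\le 2^{n_P-1}$ nonempty... actually over $E\subseteq D_i$, $E\neq\emptyset$, there are fewer than $2^{n_i}$ of them, and one would need $2^{n_i}(2k)^{n_P-1}\le (4k)^{n_P-1}$, i.e. $2^{n_i}\le 2^{n_P-1}$, which holds since $n_i\le n_P$ — except when $n_i=n_P$ and there is a single block, giving $2^{n_P}$, off by a factor of $2$. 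I would handle this edge case by noting that when $E=D_i=P^n$, the inner sum over $E'\subseteq[n]-E=\bar P^n$ is just $1$ (all $s_j=0$ there), so that term contributes only $LS^{(n_P-1)}(\I)$, separately; for all other $E$ we have $|E|<n_i$ so the product ranges over $\ge 1$ private coordinate fewer, and the count works out.

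The main obstacle is this bookkeeping: precisely tracking which coordinates survive in $[n]-E$, ensuring the exponent stays at $n_P-1$ rather than $n_P$, and confirming the constant base is $4$ and not something larger. I expect the authors handle it by a slightly slicker grouping — perhaps bounding $\hat{LS}^{(k)}(\I)$ directly by $\sum_{\emptyset\neq F\subseteq P^n} T_{\bar F}(\I)\prod_{j\in F'}s_j$-type reindexing where the outer and inner sums are merged, so that the total multiplicity is $\sum_{\emptyset\neq F\subseteq P^n}\prod_{j\in F}(\text{something})\le \prod_{j\in P^n}(1+2s_j)$ or similar, immediately yielding $(1+2k)^{n_P}/(1+2k)\le(4k)^{n_P-1}$ after extracting one nonempty coordinate. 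Either way, Lemma~\ref{lm:compare_TE_LSn} and Lemma~\ref{lm:simple_lemma} supply everything needed on the $T$-side, and the rest is elementary; I would write it as: expand, apply Lemma~\ref{lm:compare_TE_LSn} termwise, factor the product of $(1+s_j)$'s, bound by $(4k)^{n_P-1}$ via the counting above, and conclude.
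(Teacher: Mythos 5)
Your proof takes exactly the paper's route: expand $\hat{LS}^{(k)}(\I)$ via \eqref{eq:hat_lsk} and \eqref{eq:hat_TES}, restrict $E'$ to $P^n-E$ using $s_j=0$ off $P^n$, bound each $T_{[n]-E-E'}(\I)$ by $LS^{(n_P-1)}(\I)$ via Lemma~\ref{lm:compare_TE_LSn} (using precisely the observation you made, that $E\cup E'\subseteq P^n$ is nonempty because $E\subseteq D_i\subseteq P^n$ with $E\neq\emptyset$), and then count. The paper bounds the inner sum by $2^{n_P-1}k^{n_P-1}$, matching your $\prod_{j\in P^n-E}(1+s_j)\le(2k)^{n_P-1}$, and then bounds the outer sum over nonempty $E\subseteq D_i$ by another $2^{n_P-1}$, giving $(4k)^{n_P-1}$.

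The bookkeeping concern you flagged is genuine, and the paper does not actually resolve it: the outer count is $2^{n_i}-1$, and the paper simply writes $\le 2^{n_P-1}$, which fails when $n_i=n_P$ (a single private physical relation appearing $n_P\ge 2$ times, e.g.\ the triangle query). A careful count in that case gives $(2+k)^{n_P}-(1+k)^{n_P}$, which exceeds $(4k)^{n_P-1}$ for small $k$; so the exact base $4$ is not quite right, and neither your edge-case patch nor the paper's informal accounting fixes it. None of this matters downstream, though: the lemma is only invoked (Theorem~\ref{th:ss_lower_bound}, Lemma~\ref{lm:compare_RS_SS}) through the functional form $\hat{LS}^{(k)}(\I)\le(Ck)^{n_P-1}LS^{(n_P-1)}(\I)$ for some constant $C$ depending only on the query, and any $C$ works. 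So your reconstruction is essentially correct and matches the paper, including its small looseness; you can safely stop agonizing over whether the constant is $4$ or $8$.
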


\begin{proof}
Based on (\ref{eq:hat_lsk}) and (\ref{eq:hat_TES}), we have
\begin{align}
\hat{LS}^{(k)}(\I)=&\max_{\mathbf{s}\in \mathbf{S}^{k}}\max_{i\in P^m} \sum_{E\subseteq D_i,E\neq \emptyset} \sum_{E'\subseteq[n]-E}\left(T_{[n]-E-E'}(\I)\prod_{j\in E'}s_j\right)
\nonumber
\\
=&\max_{\mathbf{s}\in \mathbf{S}^{k}}\max_{i\in P^m} \sum_{E\subseteq D_i,E\neq \emptyset} \sum_{E'\subseteq P^n-E}\left(T_{[n]-E-E'}(\I)\prod_{j\in E'}s_j\right)
\label{eq:lm:hat_lsk_bound_1}
\end{align}
The (\ref{eq:lm:hat_lsk_bound_1}) is because for any $\mathbf{s}$, $s_j=0$ for any $j\notin P^n$.

For above $E$ and $E'$, $E\cup E'\subseteq P^n$. 
Based on Lemma~\ref{lm:compare_TE_LSn}, we have $T_{[n]-E-E'}(\I)\leq LS^{(n_P-1)}(\I)$.
Besides, for any $\s\in\mathbf{S}^k$, since $E'\subseteq P^n-E,E\neq\emptyset$, we have $\prod_{j\in E'}s_j\leq k^{n_P-1}$.
Plug these into (\ref{eq:lm:hat_lsk_bound_1}),

\begin{align*}
\hat{LS}^{(k)}(\I)
\leq&\max_{\mathbf{s}\in\mathbf{S}^{k}}\max_{i\in P^m}\sum_{E\subseteq D_i,E\neq \emptyset}
\sum_{E'\subseteq P^n-E}\left(LS^{(n_P-1)}(\I)k^{n_P-1}\right)
\\
\leq&\max_{\mathbf{s}\in\mathbf{S}^{k}}\max_{i\in P^m}\sum_{E\subseteq D_i,E\neq \emptyset}
\left(2^{n_P-1}LS^{(n_P-1)}(\I)k^{n_P-1}\right)
\\
\leq&\max_{\mathbf{s}\in\mathbf{S}^{(k)}}\max_{i\in[m]}
(2^{n_P-1}2^{n_P-1}LS^{(n_P-1)}(\I)k^{n_P-1})
\\
=&(4k)^{n_P-1}LS^{(n_P-1)}(\I)
\end{align*}
\end{proof}

Lemma ~\ref{lm:hat_lsk_bound} almost meets the condition of Lemma~\ref{lm:sufficient_condition_ss_neighborhood_optimal}, except that $(4k)^{n_P-1}$ is not necessarily smaller than $e^{\beta k}$.  But as the former is a polynomial while the latter is exponential, this is not an issue as long as $k$ is larger than a constant. 

\begin{theorem}
\label{th:ss_lower_bound}
For any CQ $q$, any $\varepsilon, \beta>0$, there exist a constant $r>0$ (depending on $q$, $\varepsilon, \beta$) such that for any $\I$ and any $\varepsilon$-DP mechanism $\mathcal{M}'$,
\[ \max_{\I': d(\I,\I')\le r} \mathrm{Err}(\mathcal{M}',\I') \ge \frac{1}{2 \sqrt{1+e^{\varepsilon}}}\cdot SS(\I). \]
\end{theorem}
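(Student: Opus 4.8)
The plan is to verify the hypothesis of Lemma~\ref{lm:sufficient_condition_ss_neighborhood_optimal} --- namely that $LS^{(k)}(\I)\le e^{\beta k}LS^{(r-1)}(\I)$ for all $k\ge r$ --- with a constant $r$ depending only on $q$ and $\beta$ (equivalently $q$ and $\varepsilon$, since $\beta$ and $\varepsilon$ differ by only a constant factor). Once this is done the theorem follows immediately by invoking that lemma.

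The first step is to chain the bounds already available. Since $\hat{LS}^{(k)}(\cdot)$ is by construction an upper bound of $LS^{(k)}(\cdot)$ (see \eqref{eq:hat_lsk}), and since Lemma~\ref{lm:hat_lsk_bound} gives $\hat{LS}^{(k)}(\I)\le (4k)^{n_P-1}LS^{(n_P-1)}(\I)$ for all $k\ge 1$, we get
\[
LS^{(k)}(\I)\ \le\ (4k)^{n_P-1}\,LS^{(n_P-1)}(\I)\qquad\text{for all }k\ge 1 .
\]
If $r\ge n_P$, then $LS^{(k)}(\I)$ being nondecreasing in $k$ (by \eqref{eq:lsk}) yields $LS^{(n_P-1)}(\I)\le LS^{(r-1)}(\I)$, so the displayed inequality becomes $LS^{(k)}(\I)\le (4k)^{n_P-1}LS^{(r-1)}(\I)$ for all $k\ge 1$.

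The second step handles the polynomial-versus-exponential comparison. Because $n_P-1$ is a query parameter (hence a constant) and $\beta>0$ is a constant, elementary calculus gives a constant $k_0=k_0(n_P,\beta)$ such that $(4k)^{n_P-1}\le e^{\beta k}$ for every $k\ge k_0$. Taking $r=\max\{n_P,\,k_0\}$, which is independent of $N$, we obtain for every $k\ge r$
\[
LS^{(k)}(\I)\ \le\ (4k)^{n_P-1}LS^{(r-1)}(\I)\ \le\ e^{\beta k}LS^{(r-1)}(\I),
\]
which is precisely the hypothesis of Lemma~\ref{lm:sufficient_condition_ss_neighborhood_optimal}. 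Applying that lemma gives $\max_{\I':d(\I,\I')\le r}\mathrm{Err}(\mathcal{M}',\I')\ge \frac{1}{2\sqrt{1+e^\varepsilon}}\,SS(\I)$, as desired. The degenerate case $LS^{(r-1)}(\I)=0$ is harmless: the chain above then forces $LS^{(k)}(\I)=0$ for all $k$, so $SS(\I)=0$ and the bound holds trivially --- and for full CQs this case does not arise anyway, as observed in Section~\ref{sec:intro}.

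No real obstacle remains: all the heavy lifting was done in Lemma~\ref{lm:hat_lsk_bound} (extending residual sensitivity to self-joins and tying it to $LS^{(n_P-1)}$) and in Lemma~\ref{lm:sufficient_condition_ss_neighborhood_optimal}. The only point requiring care is bookkeeping on $r$: it must be chosen large enough to simultaneously exceed $n_P$ (so the monotonicity step is valid and $k\ge 1$ as required by Lemma~\ref{lm:hat_lsk_bound}) and exceed $k_0$ (so the polynomial is dominated by the exponential), and one should finally check that the resulting $r$ genuinely depends only on $q$, $\varepsilon$, $\beta$ and not on $N$.
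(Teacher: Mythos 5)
Your proposal is correct and follows essentially the same route as the paper's proof: both chain $LS^{(k)}\le\hat{LS}^{(k)}$ with Lemma~\ref{lm:hat_lsk_bound}, use monotonicity of $LS^{(k)}$ in $k$ to replace $LS^{(n_P-1)}$ by $LS^{(r-1)}$ once $r\ge n_P$, and pick $r$ large enough that the polynomial factor is dominated by $e^{\beta k}$, so that Lemma~\ref{lm:sufficient_condition_ss_neighborhood_optimal} applies. The paper merely makes your $k_0$ explicit as $\max\{4,\lceil\frac{2(n_P-1)}{\beta}\ln\frac{2(n_P-1)}{\beta}\rceil\}$ via the intermediate bound $(4k)^{n_P-1}\le k^{2(n_P-1)}$ for $k\ge 4$, which is an immaterial difference.
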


\begin{proof}
We will show that
\begin{equation}
\label{eq:lskhatopt}
\hat{LS}^{(k)}(\I)\leq e^{\beta k}LS^{(r-1)}(\I),
\end{equation}
for all $k\geq  \max\left\{4,n_P,\left \lceil  \frac{2(n_P-1)}{\beta}\ln \frac{2(n_P-1)}{\beta}\right \rceil\right\}$.

By Lemma~\ref{lm:hat_lsk_bound}, for $k\ge 4$, we have
\[\hat{LS}^{(k)}(\I)\leq (4k)^{n_P-1}LS^{(n_P-1)}(\I) \le k^{2(n_P-1)}LS^{(n_P-1)}(\I).\]

By setting
\[r:=\max\left\{4,n_P,\left \lceil  \frac{2(n_P-1)}{\beta}\ln \frac{2(n_P-1)}{\beta}\right \rceil\right\},\]
We can show that \eqref{eq:lskhatopt} holds for any $k\ge r$:
\begin{align*}
\hat{LS}^{(k)}(\I)\leq & k^{2(n_P-1)}LS^{(n_P-1)}(\I)
\\
\leq & e^{\beta k}LS^{(r-1)}(\I).
\end{align*}
The second inequality follow because when $r\geq n_P$, we have $LS^{(n_P-1)}(\I) \le LS^{(r-1)}(\I)$, and when $k\geq \frac{2(n_P-1)}{\beta}\ln \frac{2(n_P-1)}{\beta}$, we have $k^{2(n_P-1)}\leq e^{\beta k}$.
\end{proof}

\subsection{Optimality of \texorpdfstring{$RS(\cdot)$}{TEXT}}
\label{sec:optimality_rs}

To complete the proof of Theorem \ref{th:fullCQ}, we show that $RS(\cdot)$ is at most a constant-factor larger than $SS(\cdot)$.

\begin{lemma}
\label{lm:compare_RS_SS}
For any CQ and any $\I$, $RS(\I)\leq \left(\frac{4(n_P-1)}{\beta e^{1-\beta}}\right)^{n_P-1}SS(\I)$.
\end{lemma}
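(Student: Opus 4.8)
The plan is to establish $RS(\I) \le \left(\frac{4(n_P-1)}{\beta e^{1-\beta}}\right)^{n_P-1} SS(\I)$ by comparing the two maxima $\max_{k\ge 0} e^{-\beta k}\hat{LS}^{(k)}(\I)$ and $\max_{k\ge 0} e^{-\beta k}LS^{(k)}(\I)$ term by term. The key tool is Lemma~\ref{lm:hat_lsk_bound}, which bounds $\hat{LS}^{(k)}(\I) \le (4k)^{n_P-1} LS^{(n_P-1)}(\I)$ for all $k\ge 1$, and the trivial observation that $SS(\I) \ge e^{-\beta(n_P-1)} LS^{(n_P-1)}(\I)$, since $SS(\I)$ is a maximum over all $k$ including $k = n_P-1$. (I would also need to separately dispose of the $k=0$ term in $RS(\I)$, where $\hat{LS}^{(0)}(\I) = LS(\I) \le LS^{(n_P-1)}(\I)$, which fits the same bound up to constants; or one can simply note $n_P \ge 1$ so $(4k)^{n_P-1}$ still makes sense, and if $n_P = 1$ then $RS(\I) = SS(\I) = LS(\I)$ trivially.)

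First I would write, for each $k\ge 1$,
\begin{align*}
e^{-\beta k}\hat{LS}^{(k)}(\I) &\le e^{-\beta k}(4k)^{n_P-1} LS^{(n_P-1)}(\I)\\
&= (4k)^{n_P-1} e^{-\beta k} e^{\beta(n_P-1)} \cdot e^{-\beta(n_P-1)}LS^{(n_P-1)}(\I)\\
&\le (4k)^{n_P-1} e^{-\beta k} e^{\beta(n_P-1)} \cdot SS(\I).
\end{align*}
Then I would take the maximum over $k\ge 0$ on the left and argue that the coefficient $\max_{k\ge 0}(4k)^{n_P-1} e^{-\beta(k-n_P+1)}$ equals the claimed constant. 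The routine calculus step is to maximize $f(k) = (4k)^{n_P-1}e^{-\beta k}$ over $k\ge 0$: setting $f'(k)=0$ gives $k^* = (n_P-1)/\beta$, and substituting yields $f(k^*) = (4(n_P-1)/\beta)^{n_P-1} e^{-(n_P-1)}$. Multiplying by the leftover factor $e^{\beta(n_P-1)}$ produces exactly $\left(\frac{4(n_P-1)}{\beta e^{1-\beta}}\right)^{n_P-1}$, matching the statement. Since the supremum is attained at an interior point and the left-hand maximum over $k$ is at most this supremum times $SS(\I)$, we get $RS(\I) \le \left(\frac{4(n_P-1)}{\beta e^{1-\beta}}\right)^{n_P-1} SS(\I)$.

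The main obstacle, such as it is, is bookkeeping rather than conceptual: I need to be careful that Lemma~\ref{lm:hat_lsk_bound} is only stated for $k\ge 1$, so the $k=0$ term of $RS(\I)$ must be handled separately (it equals $e^{0}\hat{LS}^{(0)}(\I) = LS(\I) = LS^{(0)}(\I) \le LS^{(n_P-1)}(\I) \le e^{\beta(n_P-1)}SS(\I)$, which is dominated by the constant above since $(4(n_P-1)/(\beta e^{1-\beta}))^{n_P-1} \ge e^{\beta(n_P-1)}$ whenever $n_P \ge 2$, and the case $n_P=1$ is degenerate). I would also double-check that the maximizer $k^* = (n_P-1)/\beta$ need not be an integer — this is fine because we are bounding a maximum over integer $k$ by the supremum over real $k\ge 0$, which only makes the bound weaker, i.e., still valid. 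With these caveats noted, the proof reduces to the one-line calculus computation above.
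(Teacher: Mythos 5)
Your proof is correct and takes essentially the same approach as the paper: both invoke Lemma~\ref{lm:hat_lsk_bound} to bound $\hat{LS}^{(k)}(\I)$ by $(4k)^{n_P-1}LS^{(n_P-1)}(\I)$, both use $SS(\I)\geq e^{-\beta(n_P-1)}LS^{(n_P-1)}(\I)$, and both carry out the same calculus optimization of $e^{-\beta k}(4k)^{n_P-1}$ at $k=(n_P-1)/\beta$. Your explicit handling of the $k=0$ term (where Lemma~\ref{lm:hat_lsk_bound} does not apply) is a small point of extra care that the paper's proof silently glosses over, and your observation that the maximizer need not be an integer is likewise a correct if routine check.
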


\begin{proof}
Recall the definition of $SS(\I)$ and $RS(\I)$ in (\ref{eq:ss}) and (\ref{eq:rs}),
\[SS(\I)=\max_{k\geq 0}e^{-\beta k}LS^{(k)}(\I),\]
\[RS(\I)=\max_{k\geq 0}e^{-\beta k}\hat{LS}^{(k)}(\I).\]
Let $k^*=\argmax e^{-\beta k}\hat{LS}^{(k)}(\I)$, and define the function 
\[g(k)=e^{-\beta k}(4k)^{n_P-1}LS^{(n_P-1)}(\I).\]
Setting its derivative to $0$, we see that $g(k)$ maximizes at $k_{\max} = \frac{n_P-1}{\beta}$ (even allowing k to take fractional values). Thus
\begin{equation}
\label{eq:th:compare_RS_SS_1}
g(k)\leq g(\frac{n_P-1}{\beta}).
\end{equation}

Therefore, 
\begin{align*}
RS(\I)= & e^{-\beta k^*}\hat{LS}^{(k^*)}(\I)
\nonumber
\\
\leq &e^{-\beta k^*} (4k^*)^{n_P-1}LS^{(n_P-1)}(\I)
\\
\leq &e^{-(n_P-1)}\left(\frac{4(n_P-1)}{\beta}\right)^{n_P-1}LS^{(n_P-1)}(\I)
\\
\leq &\left(\frac{4(n_P-1)}{\beta e^{1-\beta}}\right)^{n_P-1} \max_{k\geq 0} e^{-\beta k}LS^{(k)}(\I)
\\
= &\left(\frac{4(n_P-1)}{\beta e^{1-\beta}}\right)^{n_P-1}SS(\I).
\end{align*}
The first inequality follows from Lemma~\ref{lm:hat_lsk_bound}, and the second inequality is by (\ref{eq:th:compare_RS_SS_1}).
\end{proof}

\subsection{Elastic Sensitivity}
\label{sec:es}

Elastic sensitivity~\cite{johnson2018towards}, denoted as $ES(\cdot)$, is the only other DP mechanism for CQs with self-joins.  It is also a version of $\hat{SS}(\cdot)$, but defined using a different $\hat{LS}^{(k)}(\cdot)$. 
For $i\in[n],\x\subseteq \x_i$, let $mf(\x,I_i(\x_i))$ be the \textit{maximum frequency} in $I_i(\x_i)$ on attributes $\x$, i.e., $mf(\x,I_i(\x_i))=\max_{t\in\dom(\x)}|I_i(\x_i)\ltimes t|$. For $ES(\cdot)$, $\hat{LS}^{(k)}(\cdot)$ is defined as a product of a number of maximum frequencies; please see \cite{johnson2018towards} for the exact formula.

We give an example below showing that $ES(\I)$ can be asymptotically larger than $GS$. This means that $ES(\cdot)$ is not even worst-case optimal (i.e., not $N$-neighborhood optimal).

\begin{example}
Consider the path-4 query:
\[q=\mathtt{Edge}(x_1,x_2)\Join \mathtt{Edge}(x_2,x_3)\Join \mathtt{Edge}(x_3,x_4)\Join \mathtt{Edge}(x_4,x_5).\]
We showed that $GS =  O(N^2)$ in Example~\ref{ex:path4}.  Now consider the following instance $\I$ on $\mathtt{Edge}$ relation (assume the domain is $\mathbb{N}$):
\begin{align*}
\I(\mathtt{Edge}) = & \{(0, 1),(0,2),\dots,(0,{\frac{N}{2}}),
\\
&({\frac{N}{2}+1},{N+1}),\dots,({N},{N+1})\}.
\end{align*}
Note that $mf(x_i,E(x_i,x_{i+1})) = mf(x_{i+1},E(x_i,x_{i+1}))=\frac{N}{2}$ for $i=1,2,3,4$.  By the formula in \cite  {johnson2018towards}, we have $\hat{LS}^{(0)}(\I)= 4(\frac{N}{2})^3 = \frac{N^3}{2}$, thus
\[ES(\I) = \max_{k\geq 0}e^{-\beta k}\hat{LS}^{(k)}(\I)\geq \hat{LS}^{(0)}(\I) =\Omega(N^3).\]\end{example}

\begin{figure*}[htbp]
\includegraphics[width=1\textwidth]{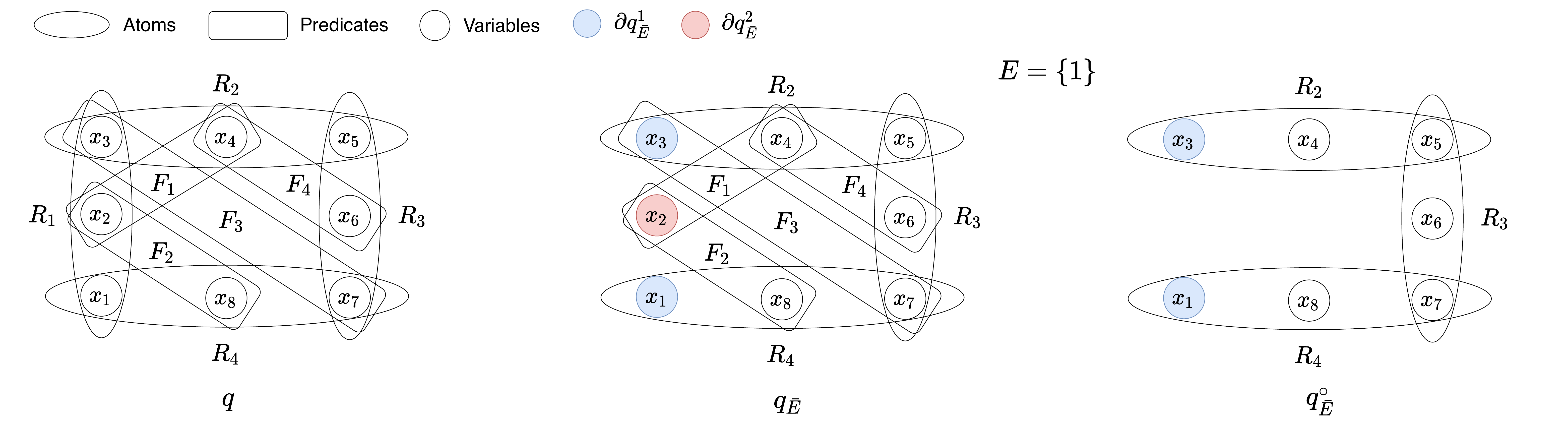}
    \caption{$q_{\bar{E}}$, $q_{\bar{E}}^{\circ}$, $\partial q_{\bar{E}}^1$ and $\partial q_{\bar{E}}^2$.}
    \label{fig:boundaries}
\end{figure*}

\section{CQ\lowercase{s} with Predicates}
\label{sec:rs_filter}

A CQ with predicates (CQP) has the form
\[
q:=\sigma_{P_1(\y_1)\wedge \cdots \wedge P_\kappa(\y_\kappa)} (R_1(\x_1)\Join \cdots \Join R_n(\x_n)),
\]
where each $P_j:\mathbf{dom}(\mathbf{y}_j) \rightarrow\{\mathsf{ True, False}\}$ is a computable function for some $\mathbf{y}_j \subseteq var(q) = \x_1\cup \cdots \cup \x_n$. 
By a slight abuse of notation, we also use $P(\mathbf{y})$ to denote the (possibly infinite) relation $\{ t \in \dom(\y) \mid P(t)\}$.  This way, a CQP can be written as a normal CQ:
\begin{equation}
\label{eq:q_filter_1}
q:=R_1(\x_1)\Join \cdots \Join R_n(\x_n)\Join P_1(\y_1)\Join \cdots \Join P_\kappa(\y_\kappa).
\end{equation}
Note that the $P_j(\y_j)$'s are all public, since they only depend on the query and the domain, not on the instance. 

The current approach to dealing with a CQP under DP \cite{johnson2018towards,kotsogiannis2019privatesql,dong21:residual} is to evaluate the CQP as given, but compute the sensitivity without considering the predicates.  This yields a valid DP mechanism, but loses optimality. To see this, just consider an extreme case where a predicate always returns $\mathsf{False}$.  Then the query becomes a trivial query and the optimal (under any notion of optimality) mechanism is $\mathcal{M}(\cdot)\equiv 0$, i.e., $\mathrm{Err}(\mathcal{M}, \I)=0$ for all $\I$, but the sensitivity of the query without the predicate must be nonzero.

In this section, we show how to extend $RS(\cdot)$ to CQPs while preserving its $O(1)$-neighborhood optimality.  The idea is simple, we just consider a CQP as a CQ as defined in \eqref{eq:q_filter_1}, so optimality immediately follows from Theorem \ref{th:fullCQ}.  The issue, however, is how to compute $RS(\cdot)$ when some relations are infinite.  In Section~\ref{sec:general_filters} we first give a general algorithm, which may take exponential time, to compute $RS(\cdot)$ for arbitrary predicates under the technical condition of Theorem \ref{th:filters}; in Section \ref{sec:inequality_comparison_filters} we give a polynomial-time algorithm for the case where all the predicates are inequalities or comparisons, which proves the second part of Theorem \ref{th:filters}.

\subsection{General Predicates}
\label{sec:general_filters}

The first observation is that, when the $P(\y_j)$'s are arbitrary (but still computable), it is undecidable to check if a given CQP is a trivial query.  Recall that if the query is trivial, the optimal DP mechanism $\mathcal{M}$ is deterministic and achieves $\mathrm{Err}(\mathcal{M}, \I)=0$ for all $\I$; otherwise, the mechanism must be probabilistic.  Since one cannot distinguish between the two cases, optimal (under any notion of optimality) DP mechanisms do not exist. For the undecidability result, just consider the simple CQP $q_M=R(x) \Join P_M(x)$, where $P_M(x)=\mathsf{True}$ iff the Turing machine $M$ terminates in less than $x$ steps. Note that $P_M(x)$ is decidable. However, it is easy to see that $|q_M(\cdot)|\equiv 0$ iff $M$ does not halt. 

However, the situation is not hopeless. Below we show how to compute $RS(\I)$ for any CQP if for any $\mathbf{z} \subseteq var(q)$, the satisfiability of $\varphi_1 \wedge \cdots \wedge\varphi_S$ is decidable, where each $\varphi_i$ is $P_j(\mathbf{u}_i)$ for any $j$ and $\mathbf{u}_i$ is $\mathbf{y}_j$ after replacing all variables not in $\mathbf{z}$ by any constants.  This is a very mild condition; in fact, the entire literature on \textit{constraint satisfaction problems (CSPs)} is devoted to designing efficient algorithms for determining the satisfiability of $\varphi_1 \wedge \cdots \wedge\varphi_S$ when the $\varphi_i$'s take certain forms, and finding a satisfying valuation for $\mathbf{z}$, if one exists.

It suffices to show how to compute $T_{\bar{E}}(\I)$ for any $E\subseteq P^n$. Since all the predicates correspond to public relations, the residual query has the form
\[q_{\bar{E}} = (\Join_{i\in \bar{E}} R_i(\x_i)) \Join (\Join_{j\in[\kappa]}P_j(\y_j)).\]
We split the boundary variables as $\partial q_{\bar{E}} = \partial q^1_{\bar{E}} \cup \partial q^2_{\bar{E}}$, where
\[\partial q_{\bar{E}}^{1} = \{x \mid x\in \x_i\cap \x_j,i\in E,j\in \bar{E}\},\]
and 
\[\partial q_{\bar{E}}^{2} = \{x \mid x\in \x_i\cap \y_j,i\in E,j\in[\kappa]\} - \partial q^1_{\bar E}.\]
Let 
\[q_{\bar{E}}^{\circ} = (\Join_{i\in \bar{E}} R_i(\x_i))\]
be the CQ part of $q_{\bar E}$. 

\begin{example}
\label{ex:boundaries}
Figure~\ref{fig:boundaries} illustrates these concepts with the query 
\begin{align*}
q=&R_1(x_1,x_2,x_3)\Join R_2(x_3,x_4,x_5)\Join R_3(x_5,x_6,x_7) \Join R_4(x_1,x_7,x_8)
\\
&\Join P_1(x_2,x_4)\Join P_2(x_2,x_8)\Join P_3(x_3,x_7)\Join P_4(x_4,x_6),
\end{align*}
where we set $E=\{1\}$.
\end{example}

The following observations about the boundary variables are straightforward.

\begin{lemma}
\label{lm:properties_boundary}
For any CQP $q$ and any $E\subseteq [n]$,
\begin{enumerate}
    \item $\partial q_{\bar{E}}^{2} \subseteq\mathbf{y}_1\cup \cdots \cup \y_\kappa \subseteq \partial q_{\bar{E}}^2 \cup var(q_{\bar{E}}^{\circ})$;
    \item $\partial q_{\bar{E}}^{1}\subseteq var(q_{\bar{E}}^{\circ})$;
    \item $var(q_{\bar{E}}^{\circ}) \cap \partial q_{\bar{E}}^{2} = \emptyset$. 
\end{enumerate}
\end{lemma}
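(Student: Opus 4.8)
The plan is to prove all three items by direct element‑chasing, unfolding the definitions
$\partial q_{\bar{E}}^{1} = \{x : x\in\x_i\cap\x_j,\ i\in E,\ j\in\bar{E}\}$,
$\partial q_{\bar{E}}^{2} = \{x : x\in\x_i\cap\y_j,\ i\in E,\ j\in[\kappa]\} - \partial q_{\bar{E}}^{1}$, and
$var(q_{\bar{E}}^{\circ}) = \bigcup_{i\in\bar{E}}\x_i$, together with the standing convention that each $\y_j\subseteq var(q)=\x_1\cup\cdots\cup\x_n$. Two of the inclusions are then one‑liners: if $x\in\partial q_{\bar{E}}^{1}$ then $x\in\x_j$ for some $j\in\bar{E}$, hence $x\in var(q_{\bar{E}}^{\circ})$, which is item (2); and if $x\in\partial q_{\bar{E}}^{2}$ then $x\in\y_j$ for some $j\in[\kappa]$, hence $x\in\y_1\cup\cdots\cup\y_\kappa$, which is the first inclusion of item (1).

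For item (3) I would argue by contradiction. Suppose $x\in var(q_{\bar{E}}^{\circ})\cap\partial q_{\bar{E}}^{2}$. Then $x\in\x_j$ for some $j\in\bar{E}$, and from $x\in\partial q_{\bar{E}}^{2}$ we also get $x\in\x_i$ for some $i\in E$. But then $x$ satisfies exactly the defining condition of $\partial q_{\bar{E}}^{1}$, so $x\in\partial q_{\bar{E}}^{1}$; this contradicts the fact that $\partial q_{\bar{E}}^{2}$ is defined with $\partial q_{\bar{E}}^{1}$ subtracted out. Hence the intersection is empty.

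The only point requiring a (trivial) case split is the second inclusion of item (1), namely $\y_1\cup\cdots\cup\y_\kappa\subseteq\partial q_{\bar{E}}^{2}\cup var(q_{\bar{E}}^{\circ})$. Given $x\in\y_j$, the convention $\y_j\subseteq var(q)$ supplies an atom index $i$ with $x\in\x_i$. If some such index lies in $\bar{E}$, then $x\in var(q_{\bar{E}}^{\circ})$ and we are done; otherwise every atom containing $x$ is indexed in $E$, so for any such $i\in E$ we have $x\in\x_i\cap\y_j$ with $i\in E$, $j\in[\kappa]$, and moreover $x\notin\partial q_{\bar{E}}^{1}$ since no atom outside $E$ contains $x$, so $x\in\partial q_{\bar{E}}^{2}$. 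Either way $x\in\partial q_{\bar{E}}^{2}\cup var(q_{\bar{E}}^{\circ})$. I do not expect any genuine obstacle here: the lemma is pure syntactic bookkeeping whose role is to license the decomposition $\partial q_{\bar{E}}=\partial q_{\bar{E}}^{1}\cup\partial q_{\bar{E}}^{2}$ that underlies the later computation of $T_{\bar{E}}(\I)$ for a CQP without materializing the (possibly infinite) predicate relations.
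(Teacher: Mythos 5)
Your proof is correct. The paper offers no argument for this lemma — it simply declares the three items ``straightforward'' — and your direct element-chasing from the definitions of $\partial q_{\bar{E}}^{1}$, $\partial q_{\bar{E}}^{2}$, $var(q_{\bar{E}}^{\circ})$, and the standing convention $\y_j\subseteq var(q)=\x_1\cup\cdots\cup\x_n$ is exactly the intended reasoning, including the only non-trivial step (the case split for the second inclusion of item (1)).
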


Now, we look at how to compute $T_{\bar{E}}(\I)$:
\begin{align}
T_{\bar{E}}(\I) =&\max_{t\in\dom(\partial q_{\bar{E}})} |q_{\bar{E}}(\I)\Join t|
\nonumber
\\
=&\max_{t\in\dom(\partial q_{\bar{E}})} |q_{\bar{E}}^{\circ}(\I)\Join (\Join_{j\in[\kappa]}P_i(\y_i))\Join t|
\nonumber
\\
=&\max_{\substack{t_1\in \dom(\partial q_{\bar{E}}^1) \\ t_2\in \dom(\partial q_{\bar{E}}^2)}}  
\left| 
 q_{\bar{E}}^{\circ}(\I)\Join (\Join_{j\in[\kappa]} P_j(\y_j)) \Join t_1 
\Join t_2 \right|
\nonumber
\\
=&\max_{\substack{t_1\in \pi_{\partial q_{\bar{E}}^1}q_{\bar{E}}^{\circ}(\I)\\ t_2\in \dom(\partial q_{\bar{E}}^2)}}  
\left|
 q_{\bar{E}}^{\circ}(\I)\Join (\Join_{j\in[\kappa]} P_j(\y_j)) \Join t_1 
\Join t_2 \right|.
\nonumber
\end{align}
The last step is because only $t_1\in \pi_{\partial q_{\bar{E}}^1}q_{\bar{E}}^{\circ}(\I)$ can join with $q_{\bar{E}}^{\circ}(\I)$. Since $|q_{\bar{E}}^{\circ}(\I)|$ is bounded by $O(N^n)$, the choices of $t_1$ are limited. The difficulty is that $t_2\in \dom(\partial q_{\bar{E}}^2)$ has infinitely many choices.
The idea is to flip the problem around.  For any $B\subseteq q_{\bar{E}}^{\circ}(\I)$, we check if there exist $t_1, t_2$ such that 
\begin{equation}
\label{eq:checkB}
|B\Join (\Join_{j\in[\kappa]} P_j(\y_j)) \Join t_1 \Join t_2| = |B|.
\end{equation}
This is equivalent to checking if $t_B \Join t_1 \Join t_2$ can pass all predicates for every $t_B \in B$.  Since $\partial q_{\bar{E}}^{1}\subseteq var(q_{\bar{E}}^{\circ})$, for each $t_B\in B$, $t_1$ must be $\pi_{\partial q_{\bar{E}}^1}t_B$.  Thus the problem boils down to deciding if 
\begin{equation} 
\label{eq:sat}
\bigwedge_{t_B \in B, j\in [\kappa]} P_j(\y_j(t_B))
\end{equation}
is satisfiable, where $\y_j(t_B)$ denotes $\y_j$ after replacing its variables by the corresponding constants if they appear in $t_B$.  Note that free variables in \eqref{eq:sat} are $\mathbf{z}=\partial q^2_{\bar E}$. This is precisely the technical condition we impose on the predicates. Finally, we enumerate all $B$, and return the maximum $|B|$ for which \eqref{eq:sat} is satisfiable.  This proves the first part of Theorem~\ref{th:filters}.  However, this algorithm runs in exponential time since there are $2^{|q_{\bar{E}}^{\circ}(\I)|} = 2^{\mathrm{poly}(N)}$ $B$'s that need to be considered. 

\subsection{Comparison and Inequality Predicates}
\label{sec:inequality_comparison_filters}

For CQPs where the predicates are inequalities or comparisons, we may without loss of generality assume that the domain of all attributes in $\y_1\cup \cdots \cup \y_\kappa$ is $\fdom$.  We show in this subsection how to reduce the running time of the algorithm to $\mathrm{poly}(N)$ in this case. Let $\rho=|\partial q_{\bar{E}}^2|$.  Then $t_2$ takes values from $\fdom^\rho$. The key to an efficient algorithm is thus to reduce this domain, and then apply the algorithm in \cite{joglekar2016ajar,abo2016faq}.

To reduce the domain of $t_2$, one natural idea is to only consider the \textit{active domain}~\cite{abiteboul1995foundations}. Let $\adom(\I)$  be the set of integers appearing in $\I$ on attributes $\y_1\cup \cdots \cup \y_\kappa$, and let $\adom(q)$ be the set of integers appearing in the predicates of $q$.  Then the active domain is $\adom(q,\I)=\adom(q)\cup \adom(\I)\cup\{-\infty,\infty\}$. However, only considering $t_2\in (\adom(q,\I))^{\rho}$ is not enough as seen in the following example.

\begin{example}
Following Example~\ref{ex:boundaries}, suppose $P_1(x_2,x_4)$ is $x_2 > x_4$, $P_2(x_2,x_8)$ is $x_8>x_2$, while ignoring $P_3$ and $P_4$.  Consider the following instance $\I$:
\begin{align*}
R_1(x_1,x_2,x_3)&=\{(0,3,0),(0,5,0)\},\\
R_2(x_3,x_4,x_5)&=\{(0,1,0),(0,2,0),(0,3,0)\},
\\
R_3(x_5,x_6,x_7)&=\{(0,0,0)\},
\\
R_4(x_7,x_8,x_1)&=\{(0,5,0),(0,6,0),(0,7,0)\}.
\end{align*}
For $E=\{1\}$, $T_{\bar{E}}(\I)$ attains its maximum at $x_2=4$, which is not included in $\adom(q,\I)$.
\end{example}

This example shows that $T_{\bar E}(\I)$ may attain its maximum at some value between two consecutive values in the active domain.  Thus, we augment the active domain to $\audom(q,\I)$, as follows.  Let $\adom(q, \I, i)$ be the $i$th elements in $\adom(q, \I)$ in order.  $\audom(q,\I)$ includes all elements in  $\adom(q, \I, i)$, plus $2\kappa$ arbitrary distinct elements between $\adom(q, \I, i)$ and $\adom(q, \I, i+1)$ for all $i\in[|\adom(q,\I)|-1]$.  If there are less than $2 \kappa$ elements between $\adom(q, \I, i)$ and $\adom(q, \I, i+1)$, all elements in between are included.

We show that it suffices to use $\audom(q,\I)$ as the domain of $t_2$.

\begin{lemma}
\label{lm:filter_inequal_compare}
When all the predicates are inequalities and comparisons, 
\begin{equation}
\label{eq:Te_aug}
T_{\bar E}(\I) = \max_{\substack{t_1\in \pi_{\partial q_{\bar{E}}^1}q_{\bar{E}}^{\circ}(\I)\\ t_2\in (\audom(q,\I))^\rho}}  
\left|
 q_{\bar{E}}^{\circ}(\I)\Join (\Join_{j\in[\kappa]} P_j(\y_j)) \Join t_1 
\Join t_2 \right|.
\end{equation}
\end{lemma}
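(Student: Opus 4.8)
The plan is to prove the two inequalities in \eqref{eq:Te_aug} separately. One direction, $T_{\bar E}(\I) \ge$ the right-hand side of~\eqref{eq:Te_aug}, is immediate: $\audom(q,\I)^\rho \subseteq \fdom^\rho = \dom(\partial q_{\bar{E}}^2)$, and we have already argued that $T_{\bar E}(\I) = \max_{t_1 \in \pi_{\partial q_{\bar{E}}^1}q_{\bar{E}}^{\circ}(\I),\, t_2 \in \dom(\partial q_{\bar{E}}^2)} |q_{\bar{E}}^{\circ}(\I)\Join (\Join_{j\in[\kappa]} P_j(\y_j)) \Join t_1 \Join t_2|$, so restricting the range of $t_2$ can only shrink the maximum. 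The entire content is therefore the reverse inequality.

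For that, first I would fix a pair $(t_1^*, t_2^*)$ attaining $T_{\bar E}(\I)$; such a pair exists because, for each of the finitely many $t_1$, the join size is a non-negative integer bounded by $|q_{\bar{E}}^{\circ}(\I)|$. The goal is to produce $t_2' \in \audom(q,\I)^\rho$ with $|q_{\bar{E}}^{\circ}(\I)\Join (\Join_{j\in[\kappa]} P_j(\y_j)) \Join t_1^* \Join t_2'| = |q_{\bar{E}}^{\circ}(\I)\Join (\Join_{j\in[\kappa]} P_j(\y_j)) \Join t_1^* \Join t_2^*|$. The key structural observation, extracted from Lemma~\ref{lm:properties_boundary}, is that every variable occurring in a predicate lies in $\partial q_{\bar{E}}^2$ or in $var(q_{\bar{E}}^{\circ})$; hence, once a tuple $t_B \in q_{\bar{E}}^{\circ}(\I)$ with $\pi_{\partial q_{\bar{E}}^1} t_B = t_1^*$ is fixed, each predicate instance collapses to a single comparison or inequality whose free variables lie among $\mathbf{z} = \partial q_{\bar{E}}^2$ and whose constants come from $\adom(q)\cup\adom(\I)\subseteq\adom(q,\I)$. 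Thus $t_B$ contributes to the join exactly when all these comparisons, evaluated at $t_2$, hold, and whether they hold depends only on the \emph{order type} of $t_2$ relative to $\adom(q,\I)$: for each coordinate, its three-way position (below, equal, above) with respect to every element of $\adom(q,\I)$, and for each pair of coordinates, their relative order. Consequently, any $t_2'$ with the same order type as $t_2^*$ admits exactly the same surviving tuples, hence yields the same join size.

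It then remains to realize the order type of $t_2^*$ inside $\audom(q,\I)^\rho$. Here I would use the bound $\rho = |\partial q_{\bar{E}}^2| \le |\y_1 \cup \cdots \cup \y_\kappa| \le 2\kappa$, valid because each predicate involves at most two variables. The set $\adom(q,\I)$ splits $\fdom$ into its own points and the open gaps between consecutive elements; each coordinate of $t_2^*$ is either equal to a point of $\adom(q,\I)$, in which case we leave it unchanged since it already lies in $\audom(q,\I)$, or lies in a unique gap. Within a gap, the coordinates of $t_2^*$ that fall there take at most $\rho \le 2\kappa$ distinct values; if that gap contains at least $2\kappa$ integers then $\audom(q,\I)$ supplies $2\kappa$ of them and these values can be embedded order-preservingly, while if it contains fewer than $2\kappa$ integers then $\audom(q,\I)$ contains all of them, so those coordinates can stay put. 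Handling each gap this way yields $t_2' \in \audom(q,\I)^\rho$ with the same order type, which finishes the argument.

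The only nontrivial part is the reduction of predicate satisfaction to the finite order type together with the accompanying count: one has to check that the $\pm\infty$ sentinels in $\adom(q,\I)$ cause no trouble (the two unbounded gaps contain infinitely many integers, so there is always room), that strict and non-strict comparisons are both recovered from the three-way position against each constant, and — the crux of the budget argument — that $2\kappa$ extra points per gap suffice, which is exactly where $\rho \le 2\kappa$ enters. Everything else is bookkeeping.
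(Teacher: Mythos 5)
Your proof is correct and follows essentially the same route as the paper's: both argue that the value of $t_2$ can be replaced by an order-preserving embedding into $\audom(q,\I)^\rho$, with $\rho\le 2\kappa$ guaranteeing enough room in each gap, and both rely on the fact that predicate satisfaction depends only on comparison relationships, not on actual values. Your write-up is somewhat more explicit than the paper's — it spells out why the maximizer exists, formalizes the "order type" invariant, and justifies $\rho\le 2\kappa$ — but these are refinements of the same argument rather than a different approach.
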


\begin{proof}
It is sufficient to show that, for any $t_2 \in \fdom^\rho$, we can find a $t_2' \in  (\audom(q,\I))^\rho$ such that 
\begin{align}
&|q_{\bar{E}}^{\circ}(\I)\Join (\Join_{j\in[\kappa]} P_j(\y_j)) \Join t_1 
\Join t_2|  \nonumber \\
=& |q_{\bar{E}}^{\circ}(\I)\Join (\Join_{j\in[\kappa]} P_j(\y_j)) \Join t_1 
\Join t_2'|. 
\label{eq:aug1}
\end{align}

We order the at most $\rho$ distinct values in $t_2$ as $v_1, v_2, \dots$, and map them to values in $\audom(q,\I)$ to obtain $t'_2$, as follows.  If $v_i \in \adom(q, \I)$, we map $v_i$ to itself. For values strictly between $\adom(q, \I, i)$ and $\adom(q, \I, i+1)$  for some $i$, we map them to the additional elements in $\audom(q,\I)$ between $\adom(q, \I, i)$ and $\adom(q, \I, i+1)$ in an order-preserving fashion. Since $\rho \le 2\kappa$, this is always possible. 

It is easy to see that (\ref{eq:aug1}) holds after the above mapping.  This is because the attributes of $t_2$ and $t'_2$ only appear in the predicates and all the comparison/inequality relationships remain unchanged.
\end{proof}

Since $\audom(q,\I)=O(N+\kappa) =O(N)$, we can simply materialize each $P_j(\y_j)$ into $\{ t \in (\audom(q,\I))^2\mid P_j(t)\}$, which has size $O(N^2)$.  Thus, evaluating \eqref{eq:Te_aug} using the algorithm in  \cite{joglekar2016ajar,abo2016faq} also takes polynomial time, and we have concluded the proof of the second part of Theorem \ref{th:filters}. 

As a practical improvement, observe that if a variable $y \in \partial q^2_{\bar E}$ is only involved in inequality predicates, then it can always take a value such that all these inequalities hold.  Thus, there is no need to materialize these predicates.  In particular, we arrive at a simpler formula for computing $T_{\bar E}(\I)$ when all predicates are inequalities, e.g., graph pattern counting queries.

\begin{corollary}
For a CQP where all predicates are inequalities, 
\[
T_{\bar E}(\I) = \max_{t_1\in \dom(\partial q_{\bar{E}}^1)}
\left|
 q_{\bar{E}}^{\circ}(\I)\Join (\Join_{j\in[\kappa], \y_j \subseteq var(q_{\bar E}^\circ)} P_j(\y_j)) \Join t_1 \right|.
\]
\end{corollary}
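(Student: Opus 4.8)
\textbf{The plan} is to specialize the formula for $T_{\bar E}(\I)$ established in Section~\ref{sec:general_filters},
\[
T_{\bar{E}}(\I) = \max_{\substack{t_1\in \pi_{\partial q_{\bar E}^1}q_{\bar E}^{\circ}(\I)\\ t_2\in \dom(\partial q_{\bar E}^2)}}
\left| q_{\bar{E}}^{\circ}(\I)\Join \bigl(\Join_{j\in[\kappa]} P_j(\y_j)\bigr) \Join t_1 \Join t_2 \right|,
\]
and to show that, when every $P_j$ is an inequality, both the maximization over $t_2$ and every predicate mentioning a variable of $\partial q_{\bar E}^2$ can be removed, extending the single-variable remark preceding the corollary to all of $\partial q_{\bar E}^2$. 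First I would classify the predicates: call $P_j$ \emph{internal} if $\y_j\subseteq var(q_{\bar E}^{\circ})$ and \emph{boundary-touching} otherwise. By Lemma~\ref{lm:properties_boundary} every variable of $P_j$ lies either in $var(q_{\bar E}^{\circ})$ or in $\partial q_{\bar E}^2$, and these two sets are disjoint; hence a boundary-touching inequality $x_a\ne x_b$ has at least one endpoint in $\partial q_{\bar E}^2$, and therefore at most one endpoint in $var(q_{\bar E}^{\circ})$.

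\textbf{The two directions.} The upper bound is immediate: for any fixed $t_1$, discarding the boundary-touching predicates and the choice of $t_2$ only enlarges the join, and since the remaining quantity $\bigl| q_{\bar E}^{\circ}(\I)\Join (\Join_{j:\,\y_j\subseteq var(q_{\bar E}^{\circ})} P_j(\y_j)) \Join t_1 \bigr|$ is $0$ when $t_1\notin\pi_{\partial q_{\bar E}^1}q_{\bar E}^{\circ}(\I)$, maximizing it over $t_1\in\dom(\partial q_{\bar E}^1)$ dominates $T_{\bar E}(\I)$. For the reverse inequality I would take the $t_1$ attaining that maximum, set $B=q_{\bar E}^{\circ}(\I)\ltimes t_1$ (the tuples of $q_{\bar E}^{\circ}(\I)$ that agree with $t_1$ on $\partial q_{\bar E}^1$), and construct $t_2$ greedily: scan the variables of $\partial q_{\bar E}^2$ in any order and assign each a value in $\fdom$ distinct from all values already assigned to $\partial q_{\bar E}^2$-variables and from every value occurring in a tuple of $B$ on a variable of $var(q_{\bar E}^{\circ})$. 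Since $B$ is finite and $\fdom$ is infinite, such a value always exists. Then for every boundary-touching inequality $x_a\ne x_b$ and every $t_B\in B$, the valuation $(t_B,t_2)$ satisfies it --- if both endpoints lie in $\partial q_{\bar E}^2$ by the first exclusion rule, and if one endpoint lies in $var(q_{\bar E}^{\circ})$ by the second --- so every $t_B$ that passes the internal predicates survives the full join. Hence the join size at this $(t_1,t_2)$ equals $\bigl| q_{\bar E}^{\circ}(\I)\Join (\Join_{j:\,\y_j\subseteq var(q_{\bar E}^{\circ})} P_j(\y_j)) \Join t_1 \bigr|$, matching the upper bound (with the degenerate case $q_{\bar E}^{\circ}(\I)=\emptyset$ making both sides $0$).

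\textbf{The main obstacle} is the ``simultaneity'' in the lower-bound construction: a single $t_2$ must satisfy the boundary-touching inequalities against \emph{every} tuple of $B$ at once. This is exactly what forces the argument to rely on the finiteness of $B$ (with $|B|\le|q_{\bar E}^{\circ}(\I)|=O(N^n)$) and on $\fdom$ being infinite, and it is precisely the step that breaks for comparison predicates: a value for $y\in\partial q_{\bar E}^2$ generally cannot simultaneously lie above and below all the constants it is compared with, which is why Lemma~\ref{lm:filter_inequal_compare} must keep those predicates and pay for the augmented active domain $\audom(q,\I)$.
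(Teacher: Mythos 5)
Your proposal is correct and formalizes exactly the observation the paper gives as its justification: the remark immediately preceding the corollary says that a variable of $\partial q_{\bar E}^2$ involved only in inequalities ``can always take a value such that all these inequalities hold,'' which is precisely your greedy, fresh-value construction of $t_2$ (made possible because $B$ is finite and $\fdom$ is infinite). The paper leaves this as an informal remark rather than a proof; your write-up supplies both directions cleanly, using the same decomposition of $T_{\bar E}(\I)$ from Section~\ref{sec:general_filters} and the same classification of predicates into those with $\y_j\subseteq var(q_{\bar E}^\circ)$ and the rest.
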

To compute $T_{\bar E}(\I)$, we compute $q_{\bar{E}}^{\circ}(\I)$, apply all predicates $P_j(\y_j))$ for $j\in[\kappa], \y_j \subseteq var(q_{\bar E}^\circ))$, do a count group-by $\partial q_{\bar{E}}^1$, and return the maximum count.

\section{Non-full CQ\lowercase{s}}
\label{sec:non_full_cq}

A non-full CQ has the form
\[q:=\pi_{\out}\left(R_1(\x_1)\Join \dots \Join R_n(\x_n)\right),\]
where $\out\subseteq\x$ denotes the set of output variables. 

Similarly, the current approach \cite{johnson2018towards,kotsogiannis2019privatesql,dong21:residual} simply computes the noise ignoring the projection.  This performs badly as the projection usually reduces the true count significantly, so the noise becomes relatively much larger. In this section, we show how to add projection into the  residual sensitivity framework.

For any $E\subseteq [n]$, define 
\[\out_E = \out\cap(\cup_{i\in E}\x_i).\]
Note that $\out = \out_{[n]}$.

Given $E\subseteq[n]$, the residual query with projection is
\[q_E:=\pi_{\out_E}(\Join_{i\in E}R_i(\x_i)).\]

The boundary variables $\partial q_E=\{x|x\in\x_i\cap\x_j,i\in E,j\in\bar{E}\}$ remain unchanged, but the maximum boundary $T_E(\I)$ and witness $t_E(\mathbf{I})$ are modified as 
\[T_{E}(\mathbf{I}) = \max_{t \in \dom(\partial q_E)}  |\pi_{\out_E}(\Join_{i\in E}I_i(\x_i)\Join t)|\]
and
\[t_{E}(\mathbf{I}) = \argmax_{t \in \dom(\partial q_E)} |\pi_{\out_E}(\Join_{i\in E}I_i(\x_i)\Join t)|.\]
If $\out_E=\emptyset$, there is always a $t\in \dom(\partial q_E)$ such that $(\Join_{i\in E}I_i(\x_i))\Join t \ne \emptyset$, which becomes $\{\langle \rangle\}$ after the projection, so $T_E(\I)=1$.

Note that these definitions degenerate into the full-CQ case when $\out = var(q)$. 

We as before compute $\hat{LS}^{(k)}(\I)$ by (\ref{eq:hat_lsk}), (\ref{eq:hat_TES}), and then $RS(\I)$ by (\ref{eq:rs}), but using the new definition of $T_E(\I)$ with projection.  Below we show that $RS(\cdot)$ is still a valid $\varepsilon$-DP mechanism and it can be computed efficiently.  Recall that the validity of $RS(\cdot)$ is based on (1) $\hat{LS}^{(k)}(\cdot)$ is an upper bound of ${LS}^{(k)}(\cdot)$; and (2) $\hat{LS}^{(k)}(\cdot)$ satisfies the smoothness property (\ref{eq:des_property}).   The first depends on Lemma~\ref{lm:sensitivity_TE} and Theorem~\ref{th:ls_self_join_query} while the second depends on Lemma~\ref{lm:sensitivity_TE}.  One can verify that, as long as Lemma~\ref{lm:sensitivity_TE} and Theorem~\ref{th:ls_self_join_query} hold for non-full CQs, the rest of the  validity proof will go through.  We thus focus on verifying  Lemma~\ref{lm:sensitivity_TE} and Theorem~\ref{th:ls_self_join_query} on non-full CQs.

\begin{lemma}
\label{lm:cliams_for_TE_non_full_cq}
For non-full CQs, Lemma~\ref{lm:simple_lemma}, and~\ref{lm:sensitivity_TE} still hold.
\end{lemma}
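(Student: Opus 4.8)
The plan is to dispatch Lemma~\ref{lm:simple_lemma} quickly and then spend the effort on Lemma~\ref{lm:sensitivity_TE}, carrying the projection $\pi_{\out_E}$ through the existing full-CQ arguments.

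\emph{Lemma~\ref{lm:simple_lemma}.} Here I observe that, for non-full CQs as well, $T_E(\I)$ is by construction a function of the logical instances $\{I_i(\x_i):i\in E\}$ only, since $q_E$, its output set $\out_E$, and its boundary $\partial q_E$ are fixed by $q$; thus the ``in particular'' clause is immediate. For the monotonicity part, I fix a boundary tuple $t\in\dom(\partial q_E)$ and use that both the natural join and the projection $\pi_{\out_E}$ are monotone under $\subseteq$: from $I_i(\x_i)\subseteq I'_i(\x_i)$ for all $i\in E$ we get $\pi_{\out_E}\big((\Join_{i\in E}I_i(\x_i))\Join t\big)\subseteq\pi_{\out_E}\big((\Join_{i\in E}I'_i(\x_i))\Join t\big)$, and taking the maximum over $t$ gives $T_E(\I)\le T_E(\I')$.

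\emph{Lemma~\ref{lm:sensitivity_TE}.} I would mirror the self-join-free proof behind this lemma in \cite{dong21:residual}, inserting $\pi_{\out_E}$ everywhere. Since the statement only involves logical instances it suffices to treat $q$ as self-join-free, and it is enough to bound $T_E(\I)-T_E(\I')$ and $T_E(\I')-T_E(\I)$ separately. For the first, let $t^*$ be a witness of $T_E(\I)$ and split each relation as $I_i(\x_i)=A_i\cup B_i$ with $A_i=I_i(\x_i)\cap I'_i(\x_i)$ and $B_i=I_i(\x_i)\setminus I'_i(\x_i)$, so that $|B_i|\le d(I_i(\x_i),I'_i(\x_i))$. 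Distributing the join over this union, applying $\pi_{\out_E}(\,\cdot\Join t^*)$, and grouping the $2^{|E|}$ resulting terms by $E'=\{i\in E:C_i=B_i\}$, the union bound gives $T_E(\I)\le \sum_{E'\subseteq E}\big|\pi_{\out_E}\big((\Join_{i\in E-E'}A_i)\Join(\Join_{i\in E'}B_i)\Join t^*\big)\big|$. The $E'=\emptyset$ term is at most $T_E(\I')$, since $A_i\subseteq I'_i(\x_i)$ and $t^*$ is a (possibly suboptimal) boundary tuple for $\I'$. For $E'\neq\emptyset$, fixing a tuple $v$ of the sub-join $\Join_{i\in E'}B_i$ (at most $\prod_{i\in E'}|B_i|$ of them) freezes all variables of $\cup_{i\in E'}\x_i$, and the leftover expression $\pi_{\out_E}\big((\Join_{i\in E-E'}A_i)\Join v\Join t^*\big)$ is at most $T_{E-E'}(\I)$: the pair $(v,t^*)$ induces a valid boundary tuple for $q_{E-E'}$ (using $\partial q_{E-E'}\subseteq(\cup_{i\in E'}\x_i)\cup\partial q_E$), the frozen variables contain all of $\out_E\setminus\out_{E-E'}$ so projecting onto $\out_E$ or onto $\out_{E-E'}$ gives the same count, and $A_i\subseteq I_i(\x_i)$. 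Summing yields $T_E(\I)-T_E(\I')\le\sum_{\emptyset\neq E'\subseteq E}T_{E-E'}(\I)\prod_{i\in E'}d(I_i(\x_i),I'_i(\x_i))$. The reverse direction is the mirror computation with $I'_i(\x_i)=A_i\cup B'_i$, $B'_i=I'_i(\x_i)\setminus I_i(\x_i)$, and a witness of $T_E(\I')$; the residual terms are again $T_{E-E'}(\I)$ because $A_i=I_i(\x_i)\cap I'_i(\x_i)\subseteq I_i(\x_i)$. Combining the two inequalities proves the lemma.

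The step I expect to be the crux is the claim inside the $E'\neq\emptyset$ case: that after freezing the variables of $\cup_{i\in E'}\x_i$ through $v$ and the variables of $\partial q_E$ through $t^*$, the remaining expression is genuinely of the form $\pi_{\out_{E-E'}}\big((\Join_{i\in E-E'}I''_i(\x_i))\Join t\big)$ for a legitimate boundary tuple $t\in\dom(\partial q_{E-E'})$ and some $I''_i\subseteq I_i(\x_i)$, hence at most $T_{E-E'}(\I)$. This is exactly where the projection---the only new ingredient relative to the full-CQ argument---touches the index-set combinatorics; once it is verified, the remainder is the same bookkeeping as in the self-join-free, full-CQ case, and (as noted in the paper) the rest of the validity of $RS(\cdot)$ for non-full CQs then follows.
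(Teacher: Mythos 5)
Your proof is correct, but it takes a genuinely different route from the paper's. The paper proceeds in two stages: it first establishes the unit-step bound $|T_E(\I)-T_E(\I')|\le T_{E-\{i\}}(\I)$ for two instances differing by a single tuple in $R_i(\x_i)$ (treating insertion, deletion, and substitution separately, and handling the degenerate $\out_E=\emptyset$ case), and then appeals to Lemmas~4.6--4.8 of \cite{dong21:residual} to telescope from the unit step to the general distance-vector form of Lemma~\ref{lm:sensitivity_TE}. You instead give a one-shot combinatorial argument: split each $I_i(\x_i)$ into $A_i=I_i\cap I'_i$ and $B_i=I_i\setminus I'_i$, distribute the join over the union, apply a union bound after projection, and bound the $E'\ne\emptyset$ terms by enumerating tuples $v$ of $\Join_{i\in E'}B_i$. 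The part you flag as the crux is indeed where the projection makes a nontrivial appearance, and your verification is sound: $\partial q_{E-E'}\subseteq(\cup_{i\in E'}\x_i)\cup\partial q_E$ ensures $(v,t^*)$ determines a legitimate boundary tuple for $q_{E-E'}$, and $\out_E\setminus\out_{E-E'}\subseteq\cup_{i\in E'}\x_i$ (all frozen by $v$) makes the projection onto $\out_E$ and onto $\out_{E-E'}$ count-equivalent on the remaining expression. The trade-off is that the paper's route reuses already-established machinery from the prior work at the cost of being less self-contained (the reader must consult [dong21:residual] to see the telescoping), whereas your direct decomposition is fully contained within this lemma and makes it explicit exactly where $\pi_{\out_E}$ enters. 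Both are valid; yours is the cleaner exposition for a reader who does not have the prior paper open.
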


\begin{proof}
First, it is trivial to see, Lemma~\ref{lm:simple_lemma} is still valid. 
The validness of Lemma~\ref{lm:sensitivity_TE} is based on  the self-join-free version of Lemma~\ref{lm:sensitivity_TE}. 
Therefore, it suffices to prove the self-join-free version of Lemma~\ref{lm:sensitivity_TE}.

We first show, for non-full CQs without self-joins, given any $E\subseteq[n],i\in E$, and two instances $\mathbf{I}, \mathbf{I}'$ such that $d(I_i(\x_i), I'_i(\x_i))=1$, $I_j(\x_j) =I'_j(\x_j)$ for all $j \in E-\{i\}$, we have $|T_E(\mathbf{I})-T_E(\mathbf{I}')|\leq T_{E-\{i\}}(\mathbf{I})$.

There are three cases: $I_i'$ is obtained from insertion, deletion or change a tuple $t'$ from $I_i$. 
For the first case, if $\out_{E} = \emptyset$, $T_E(\I')=T_E(\I)=1$ thus
\[|T_E(\I')-T_E(\I)|=0\leq T_{E-\{i\}}(\I).\]

If $\out_{E} \neq \emptyset$, define
\[\tilde{T}_E(\I) = |\pi_{\y_E}((\Join_{i\in E}I_i(\x_i))\ltimes t_E(\I')) |\]
and have
\[|T_E(\I')-T_E(\I)|=T_E(\I')-T_E(\I)\leq T_E(\I')-\tilde{T}_E(\I).\]

Next, we bound $T_E(\I')-\tilde{T}_E(\I)$.
\begin{align}
&T_E(\I')-\tilde{T}_E(\I)
\nonumber
\\
\nonumber
=&|\pi_{\out_E}((\Join_{i\in E}I_i(\x_i))\ltimes t_E(\I'))|
-|\pi_{\out_E}((\Join_{i\in E}I_i'(\x_i))\ltimes t_E(\I'))|
\\
\nonumber
=& |\pi_{\out_E}((\Join_{j\in E-\{i\}} I_j(\x_j))\Join t'\ltimes t_E(\I'))|
\\
=& |\pi_{\out_{E-\{i\}}}((\Join_{j\in E-\{i\}} I_j(\x_j))\Join t'\ltimes t_E(\I'))|
\label{eq:lm:cliams_for_TE_non_full_cq_1}
\\
\leq& \max_{t\in \dom(\partial q_{E-\{i\}})}|\pi_{\out_{E-\{i\}}}((\Join_{j\in E-\{i\}} I_j(\x_j))\ltimes t)|
\label{eq:lm:cliams_for_TE_non_full_cq_2}
\\
=&T_{E-\{i\}}(\I).
\nonumber
\end{align}
(\ref{eq:lm:cliams_for_TE_non_full_cq_1}) is derived by fixing $t'$ will fix the values of attributes in $\x_i$. 
(\ref{eq:lm:cliams_for_TE_non_full_cq_2}) is because the attributes of $t'\Join t_E(\I')$ are divided into two parts, one is interior to $q_{[n]-(E-\{i\})}$ while the other is $\partial q_{E-\{i\}}$; the values of attributes in the first part will not affect the join between $t'\Join t_E(\I')$ and $\Join_{j\in E-\{i\}}I_j(\x_j)$.

For the second case, we can draw the conclusion with the same idea. For the third case, where $\I'$ is achieved by changing one tuple from $\I$, define their common part as $\I''$:
\begin{align}
&\begin{cases}
\left||T_{E}(\I)-T_{E}(\I'')|\right|\leq T_{E-\{i\}}(\I'')
\\
\left||T_{E}(\I')-T_{E}(\I'')|\right|\leq T_{E-\{i\}}(\I'')
\\
T_{E}(\I'),T_{E}(\I)\geq T_{E}(\I'')
\end{cases}
\nonumber
\\
\Rightarrow& \left||T_{E}(\I)|-|T_{E}(\I')|\right|\leq T_{E-\{i\}}(\I'')\leq T_{E-\{i\}}(\I).
\label{eq:th:ls_selection_projection_5}
\end{align}

By now, we prove for CQs without self-joins, any $\mathbf{I}, \mathbf{I}'$ such that they only differ by one tuple in $R_1(\x_i)$, the difference between $T_{E}(\I)$ and $T_{E}(\I')$ is at most $T_{E-\{i\}}(\I)$.
Then, we can follow the idea from Lemma 4.6 to 4.8 in our prior work~\cite{dong21:residual} to prove the self-join-free version of Lemma~\ref{lm:sensitivity_TE}.
\end{proof}

\begin{theorem}
\label{th:ls_non_full_cq}
For non-full CQs,
Theorem~\ref{th:ls_self_join_query} still holds.
\end{theorem}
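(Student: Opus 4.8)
The plan is to follow the three-layer structure of the proof of Theorem~\ref{th:ls_self_join_query} line by line, since that argument rests on exactly three ingredients: (i) the exact local-sensitivity characterization for self-join-free queries (Lemma~\ref{lm:ls_self_join_free_query}), (ii) the sensitivity bound for $T_E(\cdot)$ (Lemma~\ref{lm:sensitivity_TE}), and (iii) the syntactic reduction of a query with self-joins to a self-join-free query on the logical instances. Ingredient (ii) already transfers to non-full CQs by Lemma~\ref{lm:cliams_for_TE_non_full_cq}, and ingredient (iii) is purely formal and does not interact with projection at all. So the only genuine work is to supply a non-full replacement for ingredient (i) --- and only in the one-sided, upper-bound form that Theorem~\ref{th:ls_self_join_query} actually propagates.

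For that replacement, the key observation is that with the projected definition of $T_E(\cdot)$ we have $T_{[n]}(\I)=|q(\I)|$: indeed $\partial q_{[n]}=\emptyset$, so the $\Join t$ is vacuous, and $\out_{[n]}=\out$. Hence the intermediate claim established inside the proof of Lemma~\ref{lm:cliams_for_TE_non_full_cq} --- namely $|T_E(\I)-T_E(\I')|\le T_{E-\{i\}}(\I)$ whenever $\I,\I'$ differ by one tuple in $R_i$ and $i\in E$ --- instantiated at $E=[n]$ yields immediately that, for a non-full self-join-free CQ and any $\I,\I'$ differing by a single tuple in a private relation $R_i$, $\bigl||q(\I)|-|q(\I')|\bigr|\le T_{\overbar{\{i\}}}(\I)$. (Unlike Lemma~\ref{lm:ls_self_join_free_query} this is only an inequality, because projection may shrink the change in the output count, but that is all we need.) With this in hand I would re-run the proof of Lemma~\ref{lm:base_ls_self_join_query} verbatim on non-full self-join-free CQs: the telescoping chain $\I^0,\dots,\I^n$ is built as before, each single-tuple step is bounded by the inequality just derived, and each $T_{[n]-\{j\}}(\I^{j-1})$ is re-expressed in terms of the $T_{\bar E}(\I)$'s using the non-full Lemma~\ref{lm:sensitivity_TE}; every step of the original proof is already an inequality, so nothing breaks, and one gets $\bigl||q(\I)|-|q(\I')|\bigr|\le\sum_{E\subseteq B,E\neq\emptyset}T_{\bar E}(\I)$ for $\I,\I'$ differing by one tuple in each $R_j$, $j\in B$. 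Finally Theorem~\ref{th:ls_non_full_cq} follows by copying the proof of Theorem~\ref{th:ls_self_join_query}: view $q$ with self-joins as self-join-free on the logical instances, so a physical one-tuple change in $R_{l_i}$ becomes a simultaneous one-tuple change across all logical copies $D_i$, apply the non-full Lemma~\ref{lm:base_ls_self_join_query} with $B=D_i$, and maximize over $i\in P^m$.

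The main obstacle is conceptual rather than computational: one must check that projection never breaks the upper-bound direction at the base case. When a tuple is inserted into $R_i$ the new value $\pi_{\out}(\cdot)$ of a join result may coincide with one already present, and when a tuple is deleted an output tuple survives unless every full-join witness for it is destroyed; in both situations the effect can only \emph{decrease} $\bigl||q(\I)|-|q(\I')|\bigr|$, so the bound $T_{\overbar{\{i\}}}(\I)$ --- which counts the full-join tuples through the changed tuple \emph{before} projection --- still dominates it. This is exactly the content of the case analysis inside the proof of Lemma~\ref{lm:cliams_for_TE_non_full_cq}, which is why invoking that lemma at $E=[n]$ is legitimate and why, once the base case is settled, the rest is a mechanical repetition of the full-CQ arguments.
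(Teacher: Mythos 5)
Your proof is correct, and it takes a route that is genuinely more economical than the paper's. The paper's proof of Theorem~\ref{th:ls_non_full_cq} explicitly re-runs the insertion/deletion/change case analysis for the base inequality $LS(\I)\leq \max_{i\in P^n}T_{[n]-\{i\}}(\I)$ on non-full self-join-free CQs, with the same $\pi_{\out_{[n]-\{i\}}}$ manipulations that already appear inside Lemma~\ref{lm:cliams_for_TE_non_full_cq}. You instead observe the identity $T_{[n]}(\I)=|q(\I)|$ (since $\partial q_{[n]}=\emptyset$ and $\out_{[n]}=\out$), which lets you extract the needed base case as the $E=[n]$ instantiation of the intermediate single-tuple bound $|T_E(\I)-T_E(\I')|\le T_{E-\{i\}}(\I)$ already proved inside Lemma~\ref{lm:cliams_for_TE_non_full_cq}. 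This avoids duplicating the case analysis and makes the logical dependency cleaner: once the projected $T_E$ satisfies the non-full versions of Lemma~\ref{lm:sensitivity_TE} and that intermediate bound, everything downstream (Lemma~\ref{lm:base_ls_self_join_query}, the self-join reduction via logical instances, and Theorem~\ref{th:ls_self_join_query}) is proved by inequalities that are insensitive to whether $|q(\I)|$ is computed before or after projection. You are also right to flag that only the upper-bound half of Lemma~\ref{lm:ls_self_join_free_query} is propagated; the exact-equality form does not survive projection, and the paper's phrasing ``show Lemma~\ref{lm:ls_self_join_free_query} holds'' glosses over this, whereas your formulation is precise on that point. One small caveat worth stating explicitly in a write-up: the identity $T_{[n]}(\I)=|q(\I)|$ holds directly from the definition even when $\out=\emptyset$ (both equal $\mathbf{1}[q(\I)\neq\emptyset]$), so the degenerate Boolean case does not need special handling.
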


\begin{proof}
As shown in the remark after Theorem~\ref{th:ls_self_join_query}, the validity of Theorem~\ref{th:ls_self_join_query} is based on Lemma~\ref{lm:sensitivity_TE} and~\ref{lm:ls_self_join_free_query}.
Since Lemma~\ref{lm:sensitivity_TE} has already been shown to be valid for non-full CQs, it suffices to only show Lemma~\ref{lm:ls_self_join_free_query} holds: for non-full CQs without self-joins,
$LS(\I)\leq \max_{i\in P^n}T_{[n]-\{i\}}(\I)$. Recall
\[LS(\I)=\max_{i\in P^n}\max_{\I':d(\I,\I')=1,d(I_i(\x_i),I'_i(\x_i))=1}\left||q(\I)|-|q(\I')|\right|.\]

Given $\I,\I'$ differing by one tuple in $R_i(\x_i)$. $\I'$ can be different from $\I$ in three ways: insertion, deletion or change a tuple $t'\in\dom(\x_i)$. In the first two cases
\begin{align}
&\left||q(\I)|-|q(\I')|\right|
\nonumber
\\
=&|\pi_{\out}(t'\Join (\Join_{j\in [n]-\{i\}}I_j(\x_j)))|
\nonumber
\\
=&|\pi_{\out_{[n]-\{i\}}}(t'\Join (\Join_{j\in [n]-\{i\}}I_j(\x_j)))|
\label{eq:th:ls_non_full_cq_1}
\\
\leq& \max_{t\in \dom(\partial q_{[n]-\{i\}})} |\pi_{\out_{[n]-\{i\}}}((\Join_{j\in [n]-\{i\}}I_j(\x_j))\ltimes t)|
\label{eq:th:ls_non_full_cq_2}
\\
=& T_{[n]-\{i\}}(\I).
\nonumber
\end{align} 
(\ref{eq:th:ls_non_full_cq_1}) is derived by fixing $t'$ will fix the values of attributes for $\x_i$. (\ref{eq:th:ls_non_full_cq_2}) is because $\x_i$ can be divided into two parts: $\partial q_{\{i\}}$ and $\x_i-\partial q_{\{i\}}$; the first part is equal to $\partial q_{[n]-\{i\}}$ while the second one does not affect the join.

In the case where $\I'$ is achieved by changing one tuple from $\I$, we can use a similar idea as Lemma~\ref{lm:cliams_for_TE_non_full_cq} and get $\left||q(\I)|-|q(\I')|\right|\leq T_{[n]-\{i\}}(\I)$.
\end{proof}

In terms of computation, we observe that $T_E(\I)$ with projection is still an AJAR/FAQ query, but now with 3 semiring aggregations ($\max, +, \max$), so it can still be computed by the algorithm in \cite{joglekar2016ajar,abo2016faq} in polynomial time.  Furthermore, one can verify that Lemma~\ref{lm:max_k} still holds non-full queries, so it takes $O(1)$ time to compute $RS(\cdot)$ after all the $T_E(\I)$'s have been computed.  

\begin{theorem}
For any non-full CQ $q$, $RS(\cdot)$ is an $\varepsilon$-DP mechanism that can be computed in $\mathrm{poly}(N)$ time.
\end{theorem}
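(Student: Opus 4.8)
The plan is to reduce the claim to the two facts already established for non-full CQs — Lemma~\ref{lm:cliams_for_TE_non_full_cq} and Theorem~\ref{th:ls_non_full_cq} — together with a bounded-width argument for the underlying AJAR/FAQ evaluations. For the $\varepsilon$-DP part I would follow the reduction spelled out in the remark preceding Lemma~\ref{lm:cliams_for_TE_non_full_cq}: since $RS(\I)$ is obtained from the quantities $T_{\bar E}(\I)$ through the fixed formulas \eqref{eq:hat_lsk}, \eqref{eq:hat_TES}, \eqref{eq:rs}, it suffices to check that the induced $\hat{LS}^{(k)}(\cdot)$ is (i) an upper bound of $LS^{(k)}(\cdot)$ and (ii) smooth in the sense of \eqref{eq:des_property}; then the generic argument around \eqref{eq:hat_ss} yields $\varepsilon$-DP with general Cauchy noise as in Section~\ref{sec:preliminary}. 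Property (i) is Lemma~\ref{lm:upper_bound_lsk}, whose derivation uses only Theorem~\ref{th:ls_self_join_query} and Lemma~\ref{lm:sensitivity_TE}, now available in their non-full forms (Theorem~\ref{th:ls_non_full_cq} and Lemma~\ref{lm:cliams_for_TE_non_full_cq}). Property (ii) is Theorem~\ref{th:deseired_properity}, which passes through Lemma~\ref{lm:inter_for_proof_desir_pro} and Lemma~\ref{lm:base_for_proof_desir_pro}; these are purely formal rearrangements of the inequality in Lemma~\ref{lm:sensitivity_TE}, so they remain valid once Lemma~\ref{lm:sensitivity_TE} is re-established for the projected $T_E$. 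Hence the whole DP-validity chain of Section~\ref{sec:rs} ports over unchanged.

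For the running time I would first invoke Lemma~\ref{lm:max_k}. Its proof depends only on the recursion for $\hat{LS}^{(k)}(\I)$ and on elementary estimates of the products $\prod_{j\in E'}s_j$, and is therefore insensitive to the definition of $T_E$; it bounds the useful range of $k$ by $\hat k = O(1)$. Consequently $RS(\I)$ is a constant-size arithmetic expression in the values $T_{\bar E}(\I)$ for $E\subseteq D_i$, $E\neq\emptyset$, $i\in P^m$, of which there are only $O(1)$ many, together with $|q(\I)|$. It thus remains to show that each $T_{\bar E}(\I)=\max_{t\in\dom(\partial q_{\bar E})}|\pi_{\out_{\bar E}}((\Join_{i\in\bar E}I_i(\x_i))\Join t)|$ is computable in $\mathrm{poly}(N)$ time. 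I would model it as an AJAR/FAQ instance with all-$1$ annotations and three nested semiring operations: an innermost $\max$ over the variables of $\cup_{i\in\bar E}\x_i$ lying outside $\out_{\bar E}\cup\partial q_{\bar E}$ (so that a distinct output tuple is counted iff it has an extension), a $+$ over $\out_{\bar E}$ computing the count group-by $\partial q_{\bar E}$, and an outer $\max$ over $\partial q_{\bar E}$ realizing $\max_{t}$. Such a query has a query-dependent constant FAQ width $w$, so the algorithm of \cite{joglekar2016ajar,abo2016faq} evaluates it in $O(N^{w})$ time, and $|q(\I)|$ is computed the same way. Summing, $RS(\I)$ and the released noisy count are produced in $\mathrm{poly}(N)$ time.

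The step I expect to require the most care is the computational one: one must verify that inserting the projection as a middle $\max$-layer still yields a \emph{legal} AJAR/FAQ expression of bounded width, i.e., that the variable-elimination order can respect the operator nesting $(\max,+,\max)$ — in particular, the projected-out variables must be eliminated strictly inside the group-by on $\out_{\bar E}$, which itself sits inside the group-by on $\partial q_{\bar E}$. Once that is granted the width is finite and the remainder is a direct appeal to \cite{joglekar2016ajar,abo2016faq}. The privacy half, by contrast, is essentially bookkeeping, since every downstream lemma in Section~\ref{sec:rs} was phrased as a consequence of Lemma~\ref{lm:sensitivity_TE} and Theorem~\ref{th:ls_self_join_query} alone, so Lemma~\ref{lm:cliams_for_TE_non_full_cq} and Theorem~\ref{th:ls_non_full_cq} are exactly what is needed.
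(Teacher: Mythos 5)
Your proposal matches the paper's argument essentially step for step: the paper also reduces the privacy claim to re-verifying Lemma~\ref{lm:sensitivity_TE} and Theorem~\ref{th:ls_self_join_query} for the projected $T_E$ (which is exactly what Lemma~\ref{lm:cliams_for_TE_non_full_cq} and Theorem~\ref{th:ls_non_full_cq} accomplish, with the rest of the Section~\ref{sec:rs} chain ported verbatim), and it handles efficiency by observing that Lemma~\ref{lm:max_k} persists and that the projected $T_E$ is an AJAR/FAQ query with the three nested aggregations $(\max,+,\max)$, evaluable in $O(N^w)$ time. The one place you are more explicit than the paper is in spelling out which variable blocks each aggregation layer eliminates (inner $\max$ on the projected-out variables, $+$ on $\out_{\bar E}\setminus\partial q_{\bar E}$, outer $\max$ on $\partial q_{\bar E}$); the paper simply asserts the $(\max,+,\max)$ structure and cites \cite{joglekar2016ajar,abo2016faq}. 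One small slip: you describe the new layer introduced by the projection as a ``middle $\max$-layer,'' but as your own subsequent sentence makes clear, the projection adds the \emph{innermost} $\max$ (the $+$ was already present in the full-CQ case as the count, just now restricted to $\out_{\bar E}$); this is only a wording issue and does not affect the argument.
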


Non-full CQs with predicates can be handled by combining the methods described in this and Section~\ref{sec:rs_filter}.  More precisely, for a non-full CQ with general predicates, we add $\pi_{\mathbf{o}_{\bar E}}$ on both sides of \eqref{eq:checkB}; if the predicates are inequalities and comparisons, we materialize each predicate and then apply the algorithm above. The resulting $RS(\cdot)$ is still $\varepsilon$-DP, and can be much smaller than that on the full CQ.  However, the lower bound Theorem~\ref{th:ss_lower_bound} no longer holds for non-full CQs, thus $RS(\cdot)$ is not $O(1)$-neighborhood optimal.  We complement this with the following negative result.

\begin{theorem}
\label{th:optimal_dp_mech_not_exist}
For any $\varepsilon>0$, any $(r,c)$-neighborhood optimal $\varepsilon$-DP mechanism $\mathcal{M}(\cdot)$ for the query $q:=\pi_{x_1}\left(R_1(x_1,x_2)\Join R_2(x_2)\right)$, where $R_1$ is the private relation, must have $c r^2\ge N$. 
\end{theorem}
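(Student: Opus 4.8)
The plan is to refute $(r,c)$-neighborhood optimality straight from Definition~\ref{def:optimal_dp_mech}: given any $\varepsilon$-DP mechanism $\mathcal{M}(\cdot)$ and any target size $N$, I would exhibit an instance $\I$ with $|\I|=\Theta(N)$ together with some $\varepsilon$-DP mechanism $\mathcal{M}'(\cdot)$ for which $\mathrm{Err}(\mathcal{M},\I)>c\cdot\mathrm{Err}(\mathcal{M}',\I')$ holds for \emph{every} $\I'$ with $d(\I,\I')\le r$; this forces $cr^2\ge N$. The hard instance comes from a \emph{bundling} trick. Restrict the domain of $x_1$ to $\{a_1,\dots,a_D\}$, let the public relation be $R_2=\{c_1,\dots,c_g\}$, and put $R_1=\{(a_i,c_\ell): i\in[D],\ \ell\in[g]\}$, so $|q(\I)|=D$ and $|\I|=Dg$; the parameters will satisfy $g=\Theta(r)$ with $g>r$, and $D=\Theta(N/r)$. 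The design goal is that every surviving $x_1$-value is protected by $g>r$ witnesses while the $x_1$-domain is already exhausted.

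First I would verify that the radius-$r$ ball of $\I$ is flat: no fresh surviving $x_1$-value can be created, and knocking out any $a_i$ costs $g>r$ deletions, so $|q(\I')|=D$ for all $\I'$ with $d(\I,\I')\le r$. Hence $LS^{(k)}(\I)=0$ for all $k<g$, so $SS(\I)=O(e^{-\beta g})$, and in fact every $\I'$ in the ball is still $\Omega(g-r)$-rigid, giving $SS(\I')=e^{-\Omega(\beta r)}$. Taking $\mathcal{M}'$ to be the smooth-sensitivity mechanism of Section~\ref{sec:dp_mechanisms}, which is a bona fide $\varepsilon$-DP mechanism, gives $\mathrm{Err}(\mathcal{M}',\I')=O(SS(\I')/\varepsilon)=e^{-\Omega(\beta r)}$ for every $\I'$ in the ball. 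So the quantity $c\cdot\max_{d(\I,\I')\le r}\mathrm{Err}(\mathcal{M}',\I')$ is exponentially small in $r$: an $(r,c)$-optimal $\mathcal{M}$ is forced to reproduce the count $D$ on $\I$ with essentially zero error.

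The crux is a lower bound on $\mathrm{Err}(\mathcal{M},\I)$ contradicting this. I would embed $\I$ into a path $\I_0,\I_1,\dots,\I_D=\I$ that installs the $D$ bundles one at a time, so that $d(\I_k,\I_{k+1})=g$, $d(\I_0,\I)=Dg=\Theta(N)$, and $|q(\I_k)|=k$. Unlike the endpoint, every $\I_k$ with $k<D$ has a non-flat neighborhood, so by Lemma~\ref{lm:lb_lower_bound} its local minimax is $\Omega(1)$; being $(r,c)$-optimal only requires $\mathrm{Err}(\mathcal{M},\I_k)=O(c)$ there, so $\mathcal{M}(\I_k)$ is merely loosely anchored near $k$. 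Propagating the near-deterministic behavior forced at the flat endpoint $\I$ backward along the path under group privacy ($\varepsilon$-DP across distance $g$ between consecutive instances), while accounting for how the $O(c)$ slack accumulates over the $D$ steps, should force the path to be long in the sense $cr^2\gtrsim Dg=\Theta(N)$, with $g\approx r$ (needed to make the endpoint rigid) and $D\approx N/g$ (needed to reach size $N$) combining to produce the clean bound.

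I expect the main obstacle to be exactly this last quantitative chaining. A naive triangle-inequality or group-privacy estimate only loses a $\Theta(1)$ factor per comparison, not the polynomial factor we need, so the argument must exploit the full length-$\Theta(N)$ path together with the low variance that near-determinism at $\I$ forces --- roughly, a second-moment-style accumulation along the staircase rather than a single pairwise comparison. A secondary point to check carefully is that every instance used is legitimate for the \emph{one} fixed query $q$ (in particular that restricting the $x_1$-domain is allowed, and that inserting $g$ tuples at a time gives a valid path in $d(\cdot,\cdot)$), and that $\mathcal{M}'$ remains $\varepsilon$-DP globally rather than merely on the ball around $\I$.
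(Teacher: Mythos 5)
Your first half is essentially the paper's construction: a ``bundled'' instance $\I$ whose $r$-ball is flat, so the adversary can nail $|q|$ on that ball and force $\mathcal{M}(\I)$ to be very accurate. (The paper sets $I_2=[r]$, $I_1=[N/r]\times[r]$; your $g$-of-size-$>r$ bundles with a finite, exhausted $x_1$-domain is the same idea and is in fact slightly more careful than the paper's version on the points you yourself flag.)

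The second half is where the proposal has a genuine gap, and it stems from a sub-optimal choice of adversary. You take $\mathcal{M}'$ to be the smooth-sensitivity mechanism, whose error on the ball is $e^{-\Omega(\beta r)}$ --- small but \emph{nonzero}. Then $(r,c)$-optimality only tells you $\mathrm{Err}(\mathcal{M},\I)\le c\,e^{-\Omega(\beta r)}$, i.e.\ $\mathcal{M}(\I)$ is ``near-deterministic'' rather than deterministic. You then hope to propagate this along a length-$\Theta(N)$ staircase via group privacy and a second-moment accumulation. This cannot close: from a Markov bound, near-determinism only gives $\Pr[\mathcal{M}(\I)\text{ far from }D]\le\delta$ with $\delta=\Theta(1)$ (or at best $e^{-\Omega(r)}$), and group privacy over $k$ bundle-steps multiplies this by $e^{\varepsilon g k}$, which blows up long before you cross distance $\Theta(N)$. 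No second-moment trick rescues it, because the loss per step is multiplicative in probability, not additive in variance.

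The paper instead uses the \emph{constant} adversary $\mathcal{M}'\equiv D$ on the flat ball, giving error \emph{exactly} zero there, so $(r,c)$-optimality forces $\mathcal{M}(\I)=D$ with probability $1$. That exact determinism is the key: for an $\varepsilon$-DP mechanism, $\Pr[\mathcal{M}(\I)\ne D]=0$ implies $\Pr[\mathcal{M}(\I'')\ne D]\le e^\varepsilon\cdot 0=0$ for any neighbor $\I''$, and by induction $\mathcal{M}\equiv D$ on \emph{every} instance. No path, no chaining, no accumulation --- a single pairwise comparison then finishes: pick $\I'$ with $|q(\I')|=0$ (e.g.\ $R_1$ tuples all pointing to $x_2$-values outside $R_2$). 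Then $\mathrm{Err}(\mathcal{M},\I')=D$, while the adversary $\mathcal{M}'\equiv 0$ has error $\le r$ on the $r$-ball of $\I'$, so $D\le cr$, i.e.\ $cr^2\ge N$ with $D=N/r$. In short: swap your smooth-sensitivity adversary for the constant one, replace the staircase with the ``deterministic $\Rightarrow$ globally constant'' DP argument, and the proof becomes two lines.
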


\begin{proof}
Let $\mathcal{M}(\cdot)$ be as given. We will construct two instances $\I$ and $\I'$ such that $\mathcal{M}(\I)$ and $\mathcal{M}(\I')$ must differ a lot.  Suppose $\dom(x_1)=\dom(x_2) = \fdom$.  The public relation $R_2$ takes the same instance $I_2 = I'_2 = [r]$. For the private relation $R_1$, we set $I_1 = [N/r] \times [r]$ and $I'_1 = [N]\times \{0\}$. Note that $|q(\I)| = N/r$ and $|q(\I')|=0$. In addition, for any $\I''$ with $d(\I,\I'')\leq r$, $|q(\I'')|=N/r$. For any $\I''$ with $d(\I',\I'')\leq r$, $|q(\I'')|\leq r$.

First consider $\I$.  The adversary sets $\mathcal{M}'(\cdot)\equiv N/r$, so $\mathrm{Err}(\mathcal{M}',\I'')=0$ for all $\I''$ in the $r$-neighborhood of $\I$.  Since $\mathcal{M}(\cdot)$ is $(r,c)$-neighborhood optimal, $\mathcal{M}(\I)$ must output $N/r$ deterministically. As $\mathcal{M}(\cdot)$ is $\varepsilon$-DP, $\mathcal{M}(\cdot)$ must output $N/r$ deterministically at all instances.  So its error on $\I'$ is $\mathrm{Err}(\mathcal{M},\I') = N/r$. 

At $\I'$, the adversary sets $\mathcal{M}'(\cdot) \equiv 0$. For any $\I''$ in the $r$-neighborhood of $\I'$, $\mathcal{M}'(\cdot)$ has error at most $\mathrm{Err}(\mathcal{M}',\I'') \leq  r$.  Thus, we conclude that $c\ge {N/r \over r}$, or $cr^2 \ge N$.
\end{proof}

Thus, if one still desires $c=O(1)$, $r$ must be at least $\Omega(\sqrt{N})$.  We also remark that this negative result holds even under relaxed DP, since only substitutions are used when defining the neighborhoods in the proof.  

\section{Practical Performance}

Our analysis in Section~\ref{sec:optimality_analysis} shows that $RS(\cdot)$ is at most a constant-factor larger than $SS(\cdot)$, both of which are $O(1)$-neighborhood optimal. At the same time, $ES(\cdot)$ does not have any optimal guarantee. In this section, we conduct an  experimental study on the actual values of these sensitivities on some simple sub-graph counting queries over real-world graph data.  Note that since the subsequent noise generation process is the same for all three sensitivity measures, it suffices to only compare the sensitivities instead of the $\ell_2$-errors. The experimental results show that the accuracy of $RS(\cdot)$ is very close to that of $SS(\cdot)$ with order-of-magnitude reduction in computational cost. In fact, for most queries, $SS(\cdot)$ is not even known to be polynomially computable.  For these queries, $ES(\cdot)$ is the only known DP mechanism, and $RS(\cdot)$ offers drastic improvement in terms of utility. Finally, it is worth pointing out that our mechanism just requires the evaluation of a number of residual queries, whose results are then combined using a certain formula.  Hence, it can be implemented in a relational DBMS easily (e.g., using PL/SQL or a plug-in). 

\begin{figure}[htbp]
\includegraphics[width=1\textwidth]{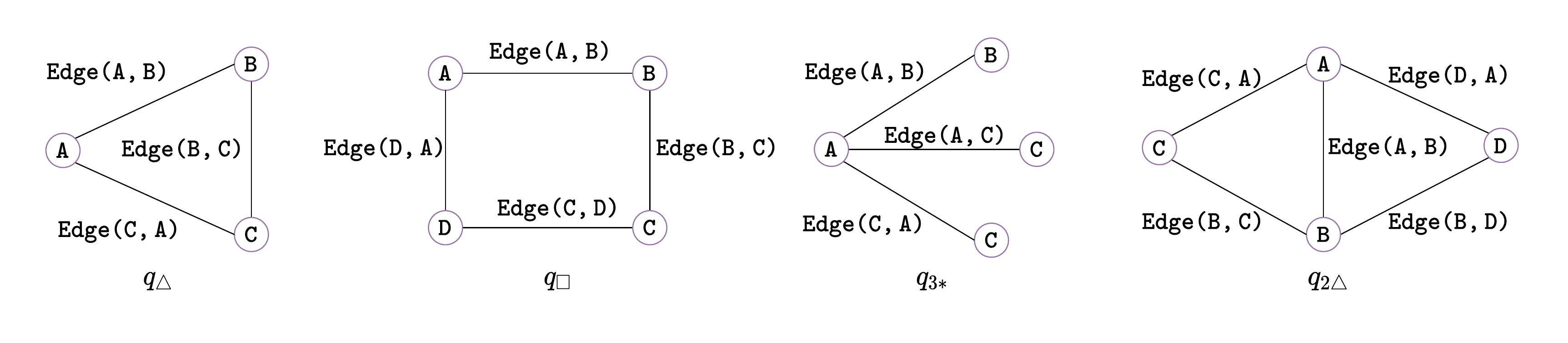}
    \caption{The join structure of queries.}
    \label{fig:queries}
\end{figure}

\subsection{Setup}
\paragraph{Datasets} We use five graph network datasets: $\mathbf{CondMat}$, $\mathbf{AstroPh}$, $\mathbf{HepPh}$, $\mathbf{HepTh}$, and $\mathbf{GrQc}$, which contain $23133$, $18772$, $12008$, $9877$, $5242$ nodes and $186878$, $396100$, $236978$, $51946$, $28980$ edges, respectively.
These five datasets describe the collaboration between the authors on arXiv in Condensed Matter, Physics, High Energy Physics, High Energy Physics Theory and General Relativity categories. The datasets are obtained from SNAP \cite{leskovec2016snap}.  The graphs are directed and we store all edges in a relation $\mathtt{Edge(From,To)}$.

\paragraph{Queries} We experimented with $4$ pattern counting queries as shown in Figure~\ref{fig:queries}.  We also added all inequalities between every two distinct variables.  Since polynomial-time algorithms for computing $SS(\cdot)$ are known only for triangle counting \cite{nissim2007smooth} and $t$-star counting \cite{karwa2011private}, the results on $SS(\cdot)$ are available only on $q_{\triangle}$ and $q_{3*}$  Note that the count returned by the graph pattern counting CQ is actually $3$ times (resp.\ $6$ times) the number of triangles and $3$-stars in the graph, so we need to scale down $SS(\cdot)$ and $ES(\cdot)$ accordingly. 

\begin{table*}[ht]
\resizebox{1\columnwidth}{!}{
\begin{tabular}{c|c|c|c|c|c|c|c}
\hline
\multicolumn{3}{c|}{Dataset}&CondMat&AstroPh&HepPh&HepTh&GrQc
\\
\hline
\hline
\multirow{11}{*}{$q_{\triangle}$}&\multicolumn{2}{c|}{Query result}&1,040,166& 8,108,646&20,150,994&170,034&289,560
\\
\cline{2-8} 
&\multirow{2}{*}{Smooth sensitivity ($SS$)}&Value&489 &  1,050 &1,350 & 102& 183
\\
&&Running Times(s)& 895 & 615 & 261 & 171 & 50.7
\\
\cline{2-8} 
&\multirow{2}{*}{Residual Sensitivity ($RS$)}&Value&493 & 1,054& 1,354& 205& 222
\\
&&Running Times(s)& 6.17 & 24 & 24.7 & 1.37 & 1.05 
\\
\cline{2-8} 
&\multirow{2}{*}{Elastic Sensitivity ($ES$)}&Value&234,361& 763,561 &724,717&12,871&19,927
\\
&&Running Times(s)& 3.5 & 10.8 & 13 & 0.874 & 0.7 
\\
\cline{2-8} 
&\multirow{2}{*}{$RS$ vs $SS$}&Value $RS/SS$&1.01$\times$& $1.00\times$&1.00$\times$&2.01$\times$&1.21$\times$ 
\\
&&Running Times $SS/RS$& 145$\times$ & 25.6$\times$ & 10.6$\times$ & 124$\times$ & 48.3$\times$
\\
\cline{2-8} 
&\multirow{2}{*}{$RS$ vs $ES$}&Value $ES/RS$& 475$\times$& 724$\times$& 535$\times$& 62.8$\times$& 89.8 $\times$
\\
&&Running Times $RS/ES$& 1.76$\times$ & 2.22$\times$ & 1.9$\times$ & 1.57$\times$ & 1.4$\times$  
\\
\hline
\hline
\multirow{11}{*}{$q_{3*}$}&\multicolumn{2}{c|}{Query result}&222,690,360 & 3,274,065,312& 7,661,801,994& 12,590,010& 14,896,428
\\
\cline{2-8} 
&\multirow{2}{*}{Smooth sensitivity ($SS$)}&Value&232,686 & 760,536 & 721,770 & 12,480& 19,440
\\
&&Running Times(s)& 3.09 & 3.08 & 2.47 & 1.92 & 1.52
\\
\cline{2-8} 
&\multirow{2}{*}{Residual Sensitivity ($RS$)}&Value&233,524 & 762,049& 723,244& 12,676& 19,684
\\
&&Running Times(s)& 0.463 & 0.4 & 0.462 & 0.374 & 0.316
\\
\cline{2-8} 
&\multirow{2}{*}{Elastic Sensitivity ($ES$)}&Value&234,361& 763,561& 724,717& 12,871& 19,927
\\
&&Running Times(s)& 0.314 & 0.272 & 0.348 & 0.234 & 0.197
\\
\cline{2-8} 
&\multirow{2}{*}{$RS$ vs $SS$}&Value $RS/SS$&1.00$\times$ &1.00$\times$&1.00$\times$& 1.02$\times$& 1.01$\times$
\\
&&Running Times $SS/RS$& 6.67$\times$ & 7.69$\times$ & 5.34$\times$ & 5.13$\times$ & 4.83$\times$
\\
\cline{2-8} 
&\multirow{2}{*}{$RS$ vs $ES$}&Value $ES/RS$&1.00$\times$ &1.00$\times$ & 1.00$\times$& 1.02$\times$&1.01$\times$ 
\\
&&Running Times $RS/ES$& 1.47$\times$ & 1.47$\times$ & 1.33$\times$ & 1.6$\times$ & 1.6$\times$
\\
\hline
\hline
\multirow{7}{*}{$q_{\square}$}&\multicolumn{2}{c|}{Query result}&12,043,064& 359,332,392& 3,894,935,680& 1,912,648& 8,437,784
\\
\cline{2-8} 
&\multirow{2}{*}{Residual Sensitivity ($RS$)}&Value&12,575 & 72,832& 313,976& 7,089& 8,927
\\
&&Running Times(s)& 41.3 & 296 & 120 & 6.49 & 2.17  
\\
\cline{2-8} 
&\multirow{2}{*}{Elastic Sensitivity ($ES$)}&Value& 87,338,719& 513,622,369& 474,931,535& 1,124,111& 2,165,455
\\
&&Running Times(s)& 2.62$\times$ & 12.5$\times$ & 11.5$\times$ & 0.651$\times$ & 0.388$\times$ 
\\
\cline{2-8} 
&\multirow{2}{*}{$RS$ vs $ES$}&Value $ES/RS$&6,950$\times$& 7,050$\times$& 1,510$\times$& 159$\times$ &243$\times$ 
\\
&&Running Times $RS/ES$& 15.8$\times$ & 23.8$\times$ & 10.4$\times$ & 9.97$\times$ & 5.59$\times$
\\
\hline
\hline
\multirow{7}{*}{$q_{2\triangle}$}&\multicolumn{2}{c|}{Query result}&9,398,600& 289,422,860& 3,747,561,340& 1,716,052& 8,165,996
\\
\cline{2-8} 
&\multirow{2}{*}{Residual Sensitivity ($RS$)}&Value&308,937 & 361,551&515,616 &279,488 & 285,394
\\
&&Running Times(s)& 20.8 & 84.8 & 118 & 4.37 & 4.6
\\
\cline{2-8} 
&\multirow{2}{*}{Elastic Sensitivity ($ES$)}&Value&30,514,062,601& 323,903,424,601& 291,786,363,781& 92,041,951& 220,614,031
\\
&&Running Times(s)& 3.79 & 9.57 & 13.2 & 0.75 & 0.669
\\
\cline{2-8} 
&\multirow{2}{*}{$RS$ vs $ES$}&Value $ES/RS$&98,800$\times$& 896,000$\times$& 566,000$\times$& 329$\times$& 773$\times$
\\
&&Running Times $RS/ES$& 5.49$\times$ & 8.86$\times$ & 8.92$\times$ & 5.77$\times$ & 6.87$\times$ 
\\
\hline
\end{tabular}
}
\caption{Comparison between smooth sensitivity, residual sensitivity and elastic sensitivity when $\beta=0.1$.}
\label{tab:results}
\end{table*}

\begin{figure*}[htbp]
\includegraphics[width=1\textwidth]{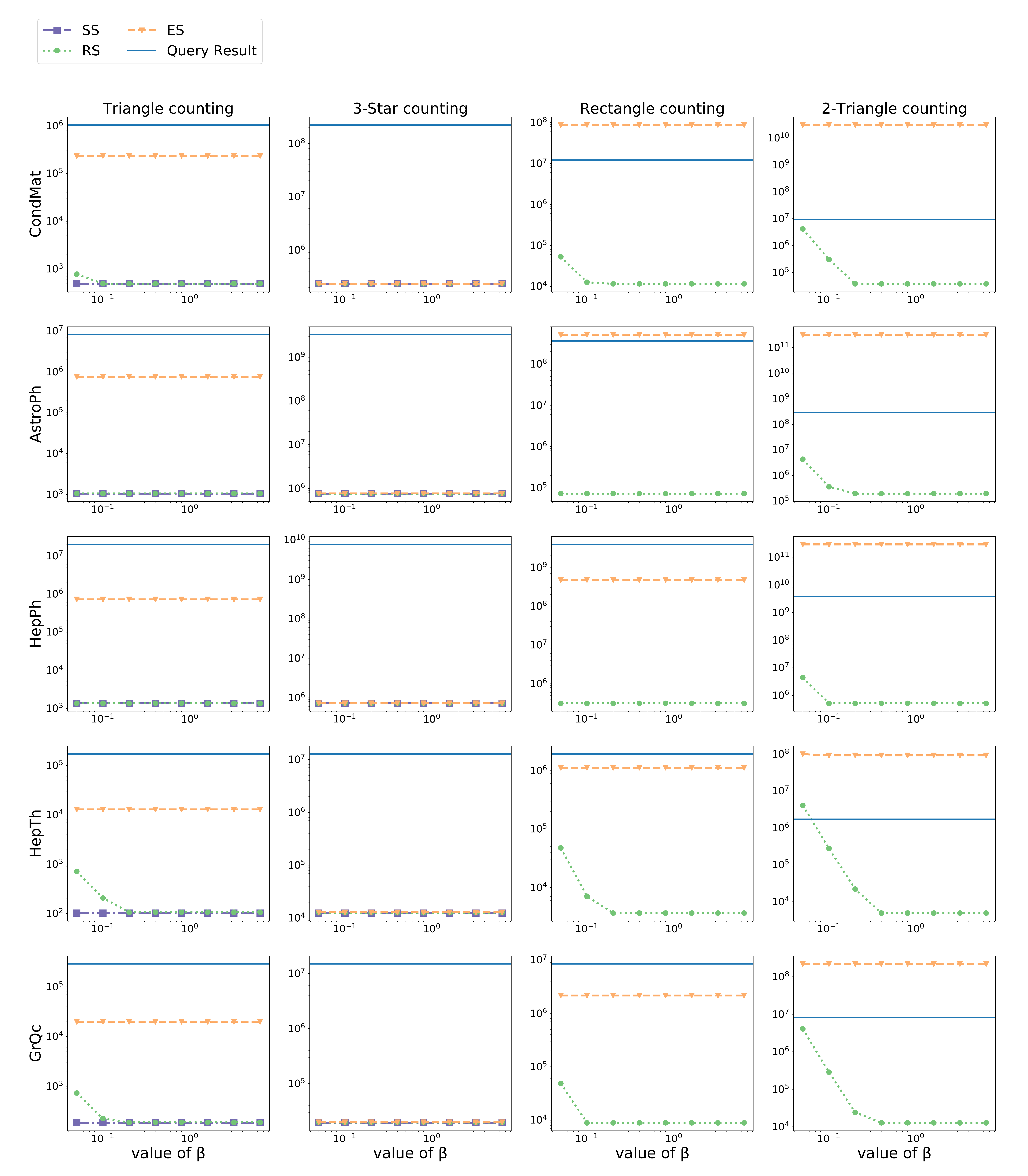}
    \caption{Smooth sensitivity, residual sensitivity and elastic sensitivity with different $\beta$ under different data and queries.}
    \label{fig:results}
\end{figure*}

\subsection{Experimental Results}
The experiments were conducted on a Linux server equipped with a $48$-core 2.2GHz Intel Xeon CPU and 512GB of memory.  For running time, we repeated each query $30$ times and took the average.  Table~\ref{tab:results} gives the results for the setting $\beta=0.1$, which corresponds to $\varepsilon=1$.  

Let's first compare $RS(\cdot)$ and $SS(\cdot)$. We see that $RS(\cdot)$ is only $2\%$ larger than $SS(\cdot)$ in most cases, and the largest difference is $2$ times for $q_{\triangle}$ on the HepTh dataset. This shows that the constant factor derived in Lemma~\ref{lm:compare_RS_SS} is actually quite loose, and the practical utility of $RS(\cdot)$ is close to that of $SS(\cdot)$. On the other hand, $SS(\cdot)$ is much more computational costly: the time for computing $SS(\cdot)$ is $4.83\sim145$ times that of $RS(\cdot)$. 

For the comparison between $RS(\cdot)$ and $ES(\cdot)$, for $q_{\triangle}$, $q_{3*}$ and $q_{2\triangle}$, $ES(\cdot)$ is much larger than $RS(\cdot)$.  On $q_{3*}$, all sensitivity measures are very close.  This is because, for this query, the query result solely depends on the degrees (namely, this is an easy query), and so do all three sensitivities.  Actually, the formulation of $ES(\cdot)$ essentially only makes use of the degree information, which can be verified by the fact that its values on $q_{\triangle}$ and $q_{3*}$ are equal.  On the other hand, $RS(\cdot)$ and $ES(\cdot)$ exploits the actual structure of the graph.

We also tested with different values of $\beta$. The results in Figure~\ref{fig:results} show this does not affect the sensitivity measures much, except for very small values of $\beta$ (i.e., the high privacy regime).

\section{Open Problems}
The negative results on {\sc Median} and non-full CQs hold with respect to the expected $\ell_2$-error.  However, if one uses a high-probability error, then logarithmic-neighborhood optimal mechanisms are possible for the {\sc Median} problem \cite{asi2020instance,huang21mean}.  Whether a similar result can be obtained for non-full CQs remains an interesting open problem.  Another important direction for future work is how to support multiple queries.  Standard DP composition theorems incurs an $O(k)$ or $O(\sqrt{k})$-factor loss when $k$ queries are issued.  For selection queries on a single relation, it has been shown how to reduce this dependency to $O(\log k)$ \cite{hardt2012simple}; how to answer multiple CQs in a way better than standard composition is an interesting yet challenging open question.

\bibliographystyle{plain}
\bibliography{main}
\end{document}